\documentclass{article}

\usepackage{microtype}
\usepackage{graphicx}
\usepackage{subcaption}
\usepackage{booktabs} 
\usepackage{enumitem}

\usepackage{hyperref}

 \usepackage[accepted]{icml2026}

\usepackage{amsmath}
\usepackage{amssymb}
\usepackage{mathtools}
\usepackage{amsthm}
\usepackage{thmtools}
\usepackage{empheq}

\newcommand{\predfw} {\widehat{w}}
\newcommand{\cost}{\textnormal{cost}}
\newcommand{\OPT}{\textsc{opt}}
\newcommand{\ALG}{\textsc{alg}}

\newcommand{\reals}{\mathbb{R}}
\newcommand{\E}{\mathbb{E}} 

\DeclareMathOperator*{\argmin}{\arg\!\min}

\usepackage[capitalize,noabbrev]{cleveref}

\theoremstyle{plain}
\newtheorem{theorem}{Theorem}[section]

\newtheorem{lemma}[theorem]{Lemma}

\theoremstyle{definition}
\newtheorem{definition}[theorem]{Definition}

\theoremstyle{remark}

\usepackage[textsize=tiny]{todonotes}

\begin{document}

\twocolumn[
  \icmltitle{
  Learning-Augmented Online Minimization with Dual Predictions
  }



  \icmlsetsymbol{equal}{*}

  \begin{icmlauthorlist}
    \icmlauthor{Christian Coester}{equal,oxf}
    \icmlauthor{Alexa Tudose}{equal,oxf}
    \icmlauthor{Alexander Turoczy}{equal,oxf}
  \end{icmlauthorlist}

  \icmlaffiliation{oxf}{University of Oxford, United Kingdom}
  \icmlcorrespondingauthor{Christian Coester}{christian.coester@cs.ox.ac.uk}
  \icmlcorrespondingauthor{Alexa Tudose}{maria-alexa.tudose@cs.ox.ac.uk}
  \icmlcorrespondingauthor{Alexander Turoczy}{alexanderturoczy@gmail.com}

  \icmlkeywords{Online algorithms, learning-augmented algorithms, algorithms with predictions, metrical task systems, laminar set cover, parking permit problem, k-server, learned duals, dual predictions}

  \vskip 0.3in
]



\printAffiliationsAndNotice{\icmlEqualContribution}

\begin{abstract}
We present learning-augmented algorithms for two general classes of online minimization problems: metrical task systems and laminar set cover. Both algorithms achieve improved theoretical guarantees using machine-learned predictions of an optimal solution to the dual linear program. Unlike optimal primal solutions, which can change drastically under tiny instance perturbations, these dual solutions are much more stable, which ensures the existence of good (and learnable) predictions for families of similar instances. While previous work has used dual predictions in offline settings and for online maximization problems, our algorithms are, to the best of our knowledge, the first demonstration that such dual predictions can be effective for online minimization. Our theoretical results are complemented by experiments on the $k$-server problem and the parking permit problem.

\end{abstract}

\section{Introduction}

Many decision-making problems arise in settings where inputs arrive over time (e.g., as a sequence of requests) and decisions must be made without knowing the future. Such problems are commonly modeled as \emph{online} optimization problems, where an algorithm must act irrevocably as each request arrives. The classical analysis of online algorithms assumes that the input is completely unknown and may even be chosen adversarially. Online algorithms are then evaluated by their \emph{competitive ratio}, defined as the worst-case ratio between the cost of the online algorithm and the best solution in hindsight. This viewpoint has led to a rich theory, and to strong impossibility results, but it can feel misaligned with practice, where similar instances of the \emph{same} optimization task are solved day after day. Modern systems often have access to historical data, forecasts, or learned models that can try and predict aspects of the future.

Motivated by this, the field of \emph{learning-augmented algorithms} (a.k.a.\ \emph{algorithms with predictions}) has emerged in recent years~\cite{LykourisV21,PurohitSK18,MitzenmacherV20}. Here, an algorithm's input is augmented with predictions about the instance; the goal is to design algorithms that benefit from predictions if these are reasonably accurate, while still satisfying classical worst-case bounds even when predictions are highly erroneous.

This paradigm has spurred progress across many online problems, including paging/caching, rent-or-buy problems, scheduling, and graph and metric optimization. Learning-augmented algorithms have also been developed in many other areas of research, such as to improve running times, data structures, approximation algorithms, streaming algorithms, and in game theory. We refer to \citet{LindermayrM26} for an extensive repository of research on learning-augmented algorithms. 


\medskip
\paragraph{What to Predict?}
A central design choice is to determine what information should be predicted. Common approaches for online minimization problems have been \emph{event predictions} (e.g., time or location of future requests) and \emph{action predictions} (i.e., suggested actions that the algorithm should take, intended to represent a near-optimal primal solution to the optimization problem). For example in caching, event predictions in the form of next request times for each page have been used successfully~\cite{LykourisV21,Rohatgi20,Wei20}.

However, these types of predictions come with some drawbacks. One drawback of event predictions for caching is that they are \emph{instable}: A single change in an otherwise perfectly predicted sequence can lead to an unbounded prediction error (because an anticipated event may be delayed indefinitely). A consequence of this instability is that a \emph{single} incorrect prediction can completely deteriorate the performance of the state-of-the-art learning-augmented algorithm for the problem, making it perform as poorly as algorithms \emph{without} predictions. We show an example of such behavior in Appendix \ref{sec:unstablecaching}. Notably, this does not contradict the algorithm's \emph{smoothness} property (i.e., competitive ratio degrades gracefully as a function of prediction error), which may seem surprising as smoothness is motivated by the desire to prevent such scenarios. But smoothness only bounds the algorithm's performance in terms of prediction error, but it fails to ensure that prediction error is stable under small instance changes.

Beyond their instability, another downside of these event predictions for caching is that already for the slightly more general problem of weighted caching (and consequently further generalizations like $k$-server and metrical task systems), even \emph{exact knowledge} of the next request time of each page/location was shown to be insufficient for improved algorithmic guarantees~\cite{BansalCKPV22,JiangPS22}.

Action predictions suffer from a similar problem of instability: optimal primal solutions can be extremely fragile. Small perturbations to an instance can completely change the structure of an optimal primal assignment (see Appendix \ref{sec:instabprimalpred}). As a consequence, it is often hard to justify why action predictions would be available (i.e., why they are \emph{learnable}). While the usefulness of action predictions is intuitively evident (if you get good advice of what to do, just do it), action predictions shift an enormous burden onto the predictor, as it is a strong assumption that high-quality action advice is given.

We advocate a different prediction interface: \emph{learn the dual}. Concretely, we study online minimization problems that admit natural linear programming (LP) formulations and design learning-augmented algorithms whose predictions are assignments to variables of the corresponding dual LP. Our thesis is that for a wide range of problems, with an appropriate LP formulation and prediction-error notion, dual predictions can simultaneously satisfy three desiderata:
\begin{enumerate}
  \item \textbf{Stability:} small changes in the instance cause only small prediction error;
  \item \textbf{Usefulness:} small prediction error implies near-optimal performance, and the guarantee degrades smoothly with the error (and can be bounded robustly even for very large error);
  \item \textbf{Learnability:} the prediction target can be learned from a polynomial number of samples from a distribution.
\end{enumerate}
At a high level, stability and learnability play complementary roles: while learnability ensures that the best prediction for a given distribution can be identified with small sample complexity, stability justifies that a good prediction exists in the first place for a family of similar instances. 

We instantiate this approach in two fairly general online minimization domains.

\paragraph{Laminar Set Cover.}
Set cover is a central problem in online algorithms, and many competitive algorithms for seemingly different problems are derived via equivalent reformulations as (set) covering problems~\cite{BuchbinderN09}. 
We study a special case where the set family is laminar, a constraint that is natural in ``interval-like'' covering settings. This captures problems such as ski rental~\cite{PhillipsW99}, dynamic power management~\cite{dynamicpowermanagement}, the parking permit problem (PPP)~\cite{Meyerson05}, the multi parking permit problem \cite{multippp}, the Steiner leasing problem \cite{Meyerson05}, and sum-radii clustering in ultrametrics (a.k.a. HSTs) \cite{sumradiiclustering}. The best achievable competitive ratio without predictions depends on the type of problem within this class of laminar set cover. In general laminar set cover, a lower bound of $\Omega(\log f)$ holds by the reduction from the parking permit problem \cite{Meyerson05}, where $f$ is the maximum number of sets that can contain an element. On the other hand, this bound is also a valid upper bound for arbitrary laminar set cover instances due to the matching bound for general (non-laminar) fractional set cover \cite{AlonAABN09} and a lossless rounding from fractional to randomized integral algorithms (Lemma \ref{lemma:rounding-randomized-online}).

\paragraph{Metrical Task Systems (MTS).}
MTS is a foundational online problem introduced by Borodin, Linial, and Saks \cite{BorodinLS92} and extensively studied since. It is a broad generalization of many online problems, including caching, $k$-server, convex function chasing, dynamic power management, and layered graph traversal, and with links to the experts problem in online learning \cite{BlumB00}. In an MTS instance, an algorithm moves in a metric space to service requests, and the goal is to minimize service and movement costs. In the classical setting without predictions, for any $n$-point metric, the competitive ratio of deterministic algorithms is exactly $2n-1$~\cite{BorodinLS92} and for randomized algorithms it is between $\Omega(\log n)$ and $O(\log^2 n)$, where both the lower and upper bounds are tight for some metrics~\cite{BubeckCLL21,CoesterL22,BubeckCR23}. Better bounds are achievable for special cases of MTS, as these correspond to restricting the class of cost functions.


\subsection{Our Results}

\paragraph{Notation.} We specify some standard notation we will use throughout the paper. We use $\textsc{alg}$ to denote either a given algorithm or its cost on a given instance, and $\textsc{opt}$ for the optimal offline cost. \textsc{alg} is $\mathcal R$-competitive if $\mathbb E[\textsc{alg}]\le \mathcal R\cdot\textsc{opt}$ on any instance. We use $\eta$ to denote a measure of error for a given prediction, which is defined in a problem-specific manner. We also define the normalized $\bar \eta := \frac{\eta}{\textsc{opt}}$.

We prove that dual predictions for both Laminar Set Cover and Metrical Task Systems fulfill the desired qualities of usefulness, stability, and learnability. Our results are summarized in the theorems below.

\begin{restatable}{theorem}{thmlaminarsetcover}{(Informal).} \label{thm:laminarsetcover}
    Suppose there is an $\mathcal R$-competitive online algorithm for a class of laminar set cover instances. Then predictions $\hat{\mathbf y}$ to an optimal dual solution satisfy
    \begin{enumerate}[label=(\alph*)]
        \item (Usefulness) For any constant $\epsilon > 0$, there exists a learning-augmented algorithm which satisfies \\ $\mathbb{E}[\textsc{alg}] = (1+\epsilon)\textsc{opt}+O\left(\frac{\mathcal R \eta}{\epsilon}\right)$.
        \item \label{item:stability}(Stability) For problem instances $X_1$ and $X_2$, there exist optimal duals $\mathbf y^*_1, \mathbf y^*_2$ satisfying $\lVert \mathbf y^*_1 - \mathbf y^*_2 \rVert_1 = O(|X_1 \Delta X_2|)$, where $\Delta $ denotes symmetric difference.
        \item (Learnability) The predictions are learnable with polynomial sample complexity.
        \item (Culmination) With access to a polynomial number of i.i.d. samples $\boldsymbol \sigma \sim \mathcal D$, there exists an 
        algorithm satisfying $\mathbb E_{\boldsymbol \sigma \sim \mathcal D}[\textsc{alg}]{} = O\left(\mathbb E_{\boldsymbol \sigma \sim \mathcal D}[\textsc{opt}] + \mathcal R \cdot \operatorname{Var}(\boldsymbol \sigma) \right)$.
    \end{enumerate}
\end{restatable}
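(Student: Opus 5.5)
We work with the covering LP $\min\{\sum_S c_S x_S : \sum_{S\ni e} x_S\ge 1\ \forall e\in X,\ x\ge 0\}$ and its dual $\max\{\sum_{e\in X} y_e : \sum_{e\in S} y_e\le c_S\ \forall S,\ y\ge 0\}$. Laminarity makes both LPs integral: the family forms a forest, and a greedy/tree DP gives an optimal integral primal and a matching dual. The error is $\eta=\lVert \hat{\mathbf y}-\mathbf y^*\rVert_1$ for a suitable optimal dual $\mathbf y^*$ of the realized instance. Predictions $\hat y_e$ can be defined for every potential element $e$ of the universe.

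\textbf{Usefulness.} When element $e$ arrives, we charge budget $\hat y_e$ (scaled by $1+\epsilon$) to every set $S\ni e$. We buy $S$ deterministically once its accumulated predicted dual load $\sum_{e'\in S\text{ arrived}} \hat y_{e'}$ reaches $c_S/(1+\epsilon)$-type thresholds, in a primal-dual fashion along the chain of sets containing $e$ (a chain, by laminarity). If $e$ is still uncovered, we fall back to the $\mathcal R$-competitive algorithm run on the uncovered elements. The cost split is as follows. Sets bought via predictions are paid by $(1+\epsilon)\sum_e \hat y_e\le (1+\epsilon)(\OPT+\eta)$, provided each unit of $\hat y$ is charged to only $O(1)$ purchases. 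Here we use the chain structure and make thresholds geometric so that sets in a chain are bought once each.

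For the fallback, consider an element left uncovered although the true optimum covers it by a set $S$ with $\sum_{e\in S} y^*_e = c_S$ (complementary slackness). The prediction then under-loaded $S$ by roughly $\epsilon c_S/(1+\epsilon)$, so the fallback instance has $\OPT$ at most $O(\eta/\epsilon)$. Fallback cost is therefore $O(\mathcal R\eta/\epsilon)$. To keep the analysis clean we would first argue fractionally and then round losslessly via Lemma~\ref{lemma:rounding-randomized-online}. The main obstacle is making this charging rigorous: tight sets in $\OPT$ that were under-predicted must be charged to $\eta$ with only a $1/\epsilon$ loss, without double counting across nested sets. We would try a potential over the tree, charging each fallback-covered element to the deepest tight $\OPT$ set containing it.

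\textbf{Stability.} We use the explicit bottom-up greedy dual on the laminar forest: process elements and raise $y_e$ until some set containing $e$ becomes tight, in a canonical order (e.g.\ leaves first). Adding or removing one element changes at most one chain of tight sets. We would show by induction over the forest that the canonical dual changes by $O(1)$ in $\ell_1$ per inserted or deleted element, with costs normalized (say $y_e\le 1$ after scaling). Summing over $X_1\Delta X_2$ gives $O(|X_1\Delta X_2|)$.

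\textbf{Learnability and culmination.} The loss $\hat{\mathbf y}\mapsto \mathbb E[\eta(\hat{\mathbf y},\boldsymbol\sigma)]$ is an average of $\ell_1$ distances, so we can take coordinatewise medians of the sample duals, i.e.\ an empirical risk minimizer. Uniform convergence for this class with bounded coordinates needs polynomially many samples (pseudo-dimension linear in the universe size). Culmination then follows. Let $\bar{\mathbf y}$ be the $\ell_1$-median dual of the distribution. By stability, $\mathbb E[\lVert \mathbf y^*(\boldsymbol\sigma)-\bar{\mathbf y}\rVert_1]=O(\mathrm{Var}(\boldsymbol\sigma))$, where $\mathrm{Var}$ is the expected symmetric-difference distance to a central instance. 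Plugging this into the usefulness bound with constant $\epsilon$, taking expectations, and adding the learning error gives the claim.
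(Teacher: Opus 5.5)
\textbf{Parts (b)--(d) are fine; the gap is in (a).} Your stability, learnability and culmination parts follow the paper's route. The paper uses a DFS-ordered greedy dual whose $\ell_1$ change is $O(\beta)$ per inserted element, coordinatewise medians as the empirical risk minimizer with a pseudo-dimension bound, and optimality of the population median combined with stability. Your ``distance to a central instance'' is within a factor $2$ of the paper's trace-of-covariance $\operatorname{Var}(\boldsymbol\sigma)$ if the center is the coordinatewise majority instance, since $\min(p,1-p)\le 2p(1-p)$. The paper instead compares to $\mathbb E[\mathbf y^*(\boldsymbol\sigma)]$ via Jensen and uses $\mathbb E\lVert\boldsymbol\sigma_1-\boldsymbol\sigma_2\rVert_1=2\operatorname{Var}(\boldsymbol\sigma)$.

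\textbf{Why arrival-based load fails.} You trigger purchases by the predicted load of \emph{arrived} elements only, and then the fallback charging fails. An element can be uncovered on arrival merely because the rest of its $\OPT$ set has not arrived yet, so being ``under-loaded'' says nothing about $\lVert(\mathbf y^*-\hat{\mathbf y})^+\rVert_1$.

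\emph{Ski rental example.} Take singletons of cost $1$, $\mathcal U$ of cost $B$, and $2B$ requested days. Setting $\hat y_i=y^*_i=0$ for the first $B$ days and $1$ for the rest gives an optimal dual, so $\eta=0$. Yet no set carries any arrived load during the first $B$ days, so all of them go to the fallback. With break-even as fallback you pay $2B-1\approx 2\,\OPT$.

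\emph{Nested purchases.} Arrival-based thresholds also buy nested sets repeatedly. Take a chain $S_1\subset\dots\subset S_L$ with $S_i\setminus S_{i-1}=\{e_i\}$, costs $c_{S_i}=(1+2\epsilon)^{i-1}$, arrivals $e_1,\dots,e_L$, and the perfect prediction $y^*_{e_i}=c_{S_i}-c_{S_{i-1}}$. You buy every $S_i$ and pay $\Theta(\OPT/\epsilon)$. Each set is bought only once, but the same $\hat y$-mass pays for all of them, and you do not control the costs, so geometric thresholds cannot prevent this.

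\textbf{The missing idea.} $\hat{\mathbf y}$ is given upfront for the whole universe, so saturation should be tested against the full predicted vector. Call $S$ $\alpha$-saturated if $\sum_{e\in S}\hat y_e\ge\alpha c_S$, with $\alpha=1/(1+\epsilon)$. When an uncovered element arrives, buy the inclusion-wise maximal saturated set containing it. Pass the element to the $\mathcal R$-competitive algorithm only if there is no such set.

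\emph{Prediction-driven cost.} By laminarity the maximal saturated sets are pairwise disjoint. So this cost is at most $\lVert\hat{\mathbf y}\rVert_1/\alpha\le(\OPT+\lVert(\hat{\mathbf y}-\mathbf y^*)^+\rVert_1)/\alpha$, with no double charging.

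\emph{Fallback cost.} Fix an optimal integral cover with pairwise disjoint sets, dropping any chosen set nested in another. Each such $S$ that meets a fallback element is tight for $\mathbf y^*$ by complementary slackness, but not $\alpha$-saturated for $\hat{\mathbf y}$. Hence $c_S\le\frac{1}{1-\alpha}\sum_{e\in S}(y^*_e-\hat y_e)^+$. Disjointness then bounds the fallback instance's optimum by $\lVert(\mathbf y^*-\hat{\mathbf y})^+\rVert_1/(1-\alpha)$.

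This resolves both obstacles you flagged without any potential function over the tree.
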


\begin{restatable}{theorem}{MTSthm}{(Informal).} \label{thm:MTSthm} For MTS, predictions $\widehat{w}$ of an optimal dual solution satisfy
\begin{enumerate}[label=(\alph*)]
        \item (Usefulness) There exists an algorithm augmented with predictions $\widehat{w}$ which satisfies $\ALG \leq \OPT + \eta$.
        \item  (Stability) Even if $\widehat{w}$ is not the optimal dual for the instance at hand, but rather for a nearby instance, the error in the prediction is small.
        \item (Learnability) The predictions are learnable with polynomial sample complexity.
    \end{enumerate}
\end{restatable}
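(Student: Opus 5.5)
We will set up MTS as an LP and treat each of the three parts separately. The instance is an $n$-point metric space $(M,d)$, a start state $s_0$, and cost functions $c_1,\dots,c_T$.

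\textbf{The LP and its dual.} The offline optimum is a shortest path in the layered graph with vertices $(t,x)$. A move from $(t-1,x)$ to $(t,x')$ costs $d(x,x')+c_t(x')$. Taking the flow LP of this shortest path, its dual assigns a potential $w_t(x)$ to each state at each time. The constraints are $w_0(s_0)=0$ and
\[
w_t(x') \le w_{t-1}(x) + d(x,x') + c_t(x') \quad \text{for all } x,x'.
\]
The objective is to maximize $\min_x w_T(x)$. An optimal dual is the \emph{work function} $w_t(x)$, the cheapest cost to serve the first $t$ requests and end in state $x$. We would define the prediction $\widehat w=(\widehat w_t)_t$ as a predicted work function. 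The error $\eta$ would be a sum over $t$ of how much $\widehat w$ violates the dual constraints, or equivalently $\sum_t \lVert \widehat w_t - w_t\rVert_\infty$-type terms. The exact form is chosen so that the usefulness proof closes.

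\textbf{Usefulness.} The algorithm follows the predicted duals greedily. At time $t$ it moves to a state $x_t \in \argmin_x \{\widehat w_t(x)\}$, breaking ties to stay put, or more robustly to the state minimizing $\widehat w_t(x)+d(x_{t-1},x)$. The analysis is a potential argument. We show the per-step cost $d(x_{t-1},x_t)+c_t(x_t)$ is at most
\[
\min_x \widehat w_t(x) - \min_x \widehat w_{t-1}(x) + \eta_t,
\]
where $\eta_t$ is the local violation of the dual constraint at step $t$. When $\widehat w$ is exactly the work function, this is the classical fact that following $\arg\min$ of the work function telescopes to $\min_x w_T(x) = \OPT$. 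We would prove it by choosing $x_t$ as a point where the dual constraint is tight along a predecessor edge. Summing and telescoping then gives $\ALG \le \OPT + \eta$. This step is the main obstacle: the $\arg\min$ of an \emph{approximate} work function may jump far. We must define $\eta$, for instance as the cumulative slack or violation $\sum_t \max_x$ of the constraint gaps, so that each jump is paid for. The first thing to try is a "tight predecessor" rule, together with defining $\eta_t$ as the deviation of $\widehat w_t$ from $\min_{x}\{\widehat w_{t-1}(x)+d(x,\cdot)\}+c_t(\cdot)$.

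\textbf{Stability and learnability.} For stability we use that the work function update is $1$-Lipschitz in $\ell_\infty$. The map $w_{t-1}\mapsto \min_x\{w_{t-1}(x)+d(x,\cdot)\}+c_t$ is nonexpansive, so perturbing a few cost functions by bounded amounts changes $w$ by a correspondingly bounded amount. Then $\eta$, measured against the true instance, is bounded by the size of the instance difference. For learnability, the algorithm's cost is a piecewise-linear function of $\widehat w$, so the class of cost functions parametrized by $\widehat w\in\reals^{nT}$ has pseudo-dimension polynomial in $nT$. This follows from a standard Goldberg--Jerrum-style bound on the number of comparisons used in the $\argmin$ computations. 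Standard uniform convergence then gives polynomial sample complexity for ERM over $\widehat w$. Alternatively, since $\eta$ is convex (a max of affine violations) in $\widehat w$, we can directly learn a minimizer of the expected $\eta$-surrogate.
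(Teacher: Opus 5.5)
\textbf{The gap is in usefulness, and it comes from which dual you predict.} The forward work function $w_t(x)$ (cheapest cost of serving $c_1,\dots,c_t$ and ending at $x$) depends only on requests already revealed. At time $t$ your algorithm uses only $\widehat w_{t-1},\widehat w_t$, so if these were exact it could have computed them itself. A perfect prediction therefore carries no information about the future. Any sensible $\eta$ vanishes at the true target, so $\ALG\le\OPT+\eta$ would make a deterministic online MTS algorithm $1$-competitive, contradicting the $2n-1$ lower bound. No choice of error measure can fix this. (The paper notes that plugging the time-$t$ work function into its greedy rule just recovers the Work Function Algorithm.)

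Concretely, the ``classical fact'' you invoke is false. Take two points at distance $1$, and let an adversary put cost $\epsilon$ on your current state. Once the two work-function values are within $\epsilon$, $\argmin_x w_t(x)$ alternates every step, so you pay $1$ per step while $\min_x w_t(x)$ grows by at most $\epsilon$. The tight-predecessor rule fails for a related reason: tight edges of the forward work function point backward in time. Every state has a tight predecessor, but your current state has no tight successor once the adversary has piled enough cost on it. Even along a tight forward path, your cost telescopes to $w_T(x_T)$ for whatever endpoint $x_T$ you reach, not to $\min_x w_T(x)=\OPT$. That difference is exactly the missing knowledge of the future.

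\textbf{The missing idea is to predict the dual anchored at the \emph{end} of time.} In the paper's LP (flow variables $f_t(a,b)$, start constraint (C1), conservation (C2)), the optimal dual satisfies $w_T\equiv 0$ and $w_{t-1}=B^{c_t}(w_t)$, where $(B^{c}(w))(s)=\min_{s'}\{w(s')+d(s,s')+c(s')\}$. Thus $w_t(a)$ is the optimal cost of serving $c_{t+1},\dots,c_T$ starting from $a$. The algorithm chooses $s_t\in\argmin_s\{d(s_{t-1},s)+c_t(s)+\widehat w_t(s)\}$, which prices movement into the choice. The error is the local Bellman residual $\eta_t=\|B^{c_t}(\widehat w_t)-\widehat w_{t-1}\|_{\mathrm{sp}}$. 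With $\widehat w_T=0$ the argument is:
\begin{itemize}
\item $\ALG=\widehat w_0(s_0)+\sum_t(B^{c_t}(\widehat w_t)-\widehat w_{t-1})(s_{t-1})$ holds exactly.
\item The same computation along an optimal path gives $\OPT\ge\widehat w_0(s_0)+\sum_t\min(B^{c_t}(\widehat w_t)-\widehat w_{t-1})$.
\item Subtracting gives $\ALG\le\OPT+\eta$, with no jump issue at all.
\end{itemize}

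\textbf{Stability.} Locality of $\eta_t$ is what makes stability a one-line argument. Comparing $B^{c_t}$ and $B^{\hat c_t}$ on the same $\widehat w_t$ yields $\eta_t\le\|c_t-\hat c_t\|_{\mathrm{sp}}$. With your alternative $\sum_t\|\widehat w_t-w_t\|_\infty$, a single perturbed cost function propagates into every later term, losing up to a factor $T$.

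\textbf{Learnability.} Your Goldberg--Jerrum route is the paper's, but two points need fixing:
\begin{itemize}
\item You need a bounded loss. The paper restricts to $1$-Lipschitz $\widehat w_t$, so $\eta\le 2DT$.
\item Your convexity shortcut is wrong. $B^{c}$ is concave in $w$, and already for two points at distance $1$ with $c_t\equiv0$ and $\widehat w_{t-1}\equiv0$ one gets $\eta_t=\min(|\widehat w_t(a)-\widehat w_t(b)|,1)$, which is not convex.
\end{itemize}
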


We prove usefulness and stability in Sections \ref{sec:laminarsetcover} and \ref{sec:MTS}, and we defer the learnability results to the appendix. Additionally, we show in Appendix \ref{sec:uselessduals} that unlike for the laminar version, dual predictions are not helpful for the standard LP relaxation of general set cover -- there are worst-case instances that remain hard even when an optimal dual solution is known upfront.

\paragraph{Robustness.} Our algorithms can be made robust against high prediction error by using standard algorithm combination techniques, which can be found in the appendix. 



\paragraph{Experiments.}
Complementing our theoretical results, Section~\ref{sec:experiments} presents a short experimental evaluation of our algorithms instantiated for the $k$-server problem and the parking permit problem. Using real-world data, we observe substantial performance benefits compared to traditional online algorithms.




\subsection{Related Work}
\paragraph{Related Prediction Setups.}
Dual variables and primal--dual reasoning are ubiquitous in online algorithms \cite{BuchbinderN09}. In learning-augmented algorithms, \citet{BamasMS20} presented a method to augment primal--dual algorithms using \emph{primal} predictions (i.e., action predictions). Conceptually similar to our dual predictions is the \emph{learned weights} method of \citet{LattanziLMV20} for scheduling with restricted assignment. \citet{LavastidaM0X21} show that these weights are stable (termed \emph{instance-robust} there) and learnable. They do not, however, correspond to duals of the primal minimization problem. \citet{LiX21} extend results from these works to scheduling on unrelated machines, using predictions consisting of both machine weights \emph{and} dual variables. The dual variables are used here only to reduce from the unrelated machines setting to related machines restricted assignment, and the resulting instance is solved using the machine weight predictions similarly to \citet{LattanziLMV20}. A recent line of work by Cohen and Panigrahi further generalizes these results to online allocation problems, and shows that the machine weights can be interpreted as dual variables of an \emph{auxiliary convex program} of maximizing entropy subject to feasibility and near-optimality constraints \cite{CohenP23,CohenP25}. Our focus is different: we design algorithms whose predictions consist \emph{purely} of dual variables of the underlying primal minimization LP itself (not of an auxiliary convex optimization problem). To the best of our knowledge, this viewpoint has not previously been developed as a general design principle for learning-augmented online minimization.

Dual predictions have appeared in other learning-augmented settings outside our scope. On the online \emph{maximization} side, learning dual ``shadow'' prices from data is a classic idea in stochastic and random-order models, notably in AdWords and related allocation problems~\cite{DevanurH09,FeldmanHKMS10,AnLMV24}. On the offline side, learned duals can be used to warm-start and accelerate primal--dual optimization, e.g.\ for matching \cite{DinitzILMV21,ChenSVZ22}. 

\paragraph{(Laminar) Set Cover.} The online set cover problem~\cite{AlonAABN09} gave rise to the powerful online primal-dual framework with its many applications to countless problems~\cite{BuchbinderN09}. Learning-augmented treatments of online covering include~\cite{BamasMS20,AnandGKP22,GuptaPSS22,GrigorescuLSSZ22,Zeynali0HW21,AzarPT23,AmeliSV25}. An important special case of \emph{laminar} set cover is the ski rental problem, studied extensively in the learning-augmented literature, e.g., \citet{PurohitSK18,GollapudiP19,WeiZ20,AngelopoulosDJKR24,AntoniadisCEPS21,DiakonikolasKTVZ21}. The parking permit problem~\cite{Meyerson05} is the seminal leasing problem in online algorithms. It generalizes ski rental and can be modeled, up to a constant factor, as laminar set cover. In the learning-augmented setting, \citet{YaroslavA24} studied the special case of three permit types. A recent preprint by \citet{AmeliSV25} highlights the parking permit problem as an application of their learning-augmented algorithm for covering problems. They achieve competitive ratios of $O(\log\eta)$ deterministically and $O(\log \log\eta)$ randomized, where $\eta$ is the $\ell_1$ prediction error. These guarantees are strongest in regimes of bounded time horizon; in long-duration instances, $\eta$ may grow proportionally with the time horizon, in which case the bound can be weaker than classical bounds that depend on the number of permits rather than the time horizon.

\paragraph{Metrical Task Systems.} In the classical setting without predictions, MTS is now well understood: For any $n$-point metric, the competitive ratio is exactly $2n-1$~\cite{BorodinLS92} for deterministic algorithms and $O(\log^2 n)$ for randomized algorithms~\cite{BubeckCLL21,CoesterL22,BubeckCR23}. Better bounds are achieved for special cases of MTS: For example, $k$-server is $O(k)$-competitive \cite{kServerConjecture}, convex body chasing in a $d$-dimensional space is $O(d)$-competitive \cite{ChasingConvexBodiesOptimally, ChasingConvexBodies-Anupam}, and width-$w$ layered graph traversal is $\Theta(w^2)$-competitive randomized \cite{randomizedLGT,BubeckCR23} and $\Theta(2^w)$-competitive deterministically~\cite{FiatFKRRV98,CoesterT26}. 

Learning-augmented MTS have been considered in the setting with action predictions: \citet{AntoniadisCEPS23online} give algorithms with competitive ratio parametrized by the prediction error, \citet{ChristiansonSW23} analyze tradeoffs between consistency (i.e., competitive ratio relative to the predicted solution) and robustness (i.e., competitive ratio relative to the offline optimum), \citet{AntoniadisCEPS23mixing} and \citet{CosaE25} consider scenarios with multiple action predictors, and \citet{SadekE24} study algorithms that use action predictions parsimoniously. Special cases of MTS that were studied in the learning-augmented setting include paging/caching \cite{LykourisV21,Rohatgi20,Wei20,AntoniadisCEPS23online,JiangPS22,BansalCKPV22,AntoniadisBEFHLPS23}, $k$-server~\cite{LindermayrMS25,ChristiansonSW23}, convex function chasing~\cite{ChristiansonHW22,LechowiczC0BHWS24}, and dynamic power management~\cite{AntoniadisCEPS21}.


\section{Laminar Set Cover} \label{sec:laminarsetcover}


In the set cover problem, we have a ground set $\mathcal U = \{u_1, ..., u_n\}$, and a set family $\mathcal S = \{S_1, ..., S_m \}$ with $S_j \subseteq \mathcal U$ and associated costs $c_{S_j}\ge 0$. An algorithm must then find a cover $\mathcal C \subseteq \mathcal S$, satisfying $\bigcup _{S \in \mathcal C} S = \mathcal U$. The total cost an algorithm seeks to minimize is $c(\mathcal C) := \sum_{S \in \mathcal C} c_S$. In the \emph{online} set cover problem, only a subset of $X \subseteq \mathcal U$ must be covered, where the elements in $X$ are revealed sequentially in an online fashion. We use a discrete time model, where at time $t$, a request $e_t \in X$ is revealed and must be covered by some set in $\mathcal S$. Additionally, we suppose $\mathcal S$ is a laminar set family, meaning that any two $S, S' \in \mathcal S$ are either disjoint, or one set is a subset of the other.

The assumption of laminarity crucially means that the sets in $\mathcal S$ form a natural hierarchy, where sets higher in the hierarchy contain all sets lower in the hierarchy. We assume without loss of generality that $\mathcal U\in\mathcal S$. We represent $\mathcal S$ as a rooted tree $\mathcal T = (\mathcal S, E)$, where the vertices are the sets of $\mathcal S$, arcs $a = (S,S')$ represent that $S \supseteq S'$, and $\mathcal U$ is the root.

We further assume without loss of generality that if $S \subsetneq S'$, then $c_S < c_{S'}$, as otherwise in any feasible solution we can replace any $S$ with $S'$ without increasing the cost. 


An LP relaxation of online laminar set cover is given by

\begin{equation*}
\begin{aligned}[t]
\min \quad & \sum_{S \in \mathcal S} c_S x_S \\
\text{s.t.} \quad & \sum_{S: e_t \in S} x_S \ge 1, && \forall e_t \\
& x_S \ge 0, && \forall S \in \mathcal S.
\end{aligned}
\end{equation*}

Variable $x_S$ denotes the fraction of set $S$ bought. In the online setting, constraints are revealed one by one and an algorithm may only increase $x_S$ over time. We call this the \emph{fractional} version of the problem. 
It suffices to solve this version of the problem due to a simple online randomized rounding procedure for laminar set cover implicit in the work of \citet{Meyerson05}.

\begin{restatable}{lemma}{lemonlinerandomizedrounding} \label{lemma:rounding-randomized-online}
    There exists a randomized online rounding algorithm, $\textsc{alg}_{int}$, which given a deterministic online algorithm $\textsc{alg}_{frac}$ for the online fractional laminar set cover problem satisfies $\mathbb{E}[\textsc{alg}_{int}] = O(\textsc{alg}_{frac})$.
\end{restatable}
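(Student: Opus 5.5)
Our plan is a threshold-based rounding along root-to-leaf paths of the tree $\mathcal T$, with one uniformly random threshold per depth, in the spirit of Meyerson's rounding for the parking permit problem. First we reduce to a canonical monotone fractional solution. For each element $e$, let $P(e)$ be the path in $\mathcal T$ from the root $\mathcal U$ down to the smallest set containing $e$; by laminarity, the sets containing $e$ are exactly the sets on $P(e)$. Let $x_S(t)$ be the fractional solution of $\textsc{alg}_{frac}$ after step $t$. We define cumulative quantities $z_S(t) := \sum_{S' \supseteq S} x_{S'}(t)$, summing over the ancestors of $S$ including $S$ itself. The constraint for $e_t$ then says $z_{S_t}(t)\ge 1$, where $S_t$ is the leaf-end of $P(e_t)$. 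Moreover, each $z_S$ is nondecreasing in $t$, and $z$ is nondecreasing as we move down any root-to-leaf path.

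The rounding draws a single uniform $\theta\in[0,1)$ up front. For a set $S$ with parent $p(S)$, the algorithm holds $S$ as soon as $z_{p(S)}(t) \le \theta < z_S(t)$ (taking $z_{p(\mathcal U)}:=0$). Equivalently, $S$ is bought if $\theta$ lies in the interval $[z_{p(S)}, z_S)$ at some time. Coverage follows because when $e_t$ arrives, $z_{S_t}(t)\ge 1>\theta$ while $z_{p(\mathcal U)}=0\le\theta$. Since $z$ is monotone along the path $P(e_t)$, some set $S$ on that path satisfies $z_{p(S)}(t)\le\theta<z_S(t)$, and that set is bought and covers $e_t$. The rule is online, since at time $t$ we simply add every set whose current interval contains $\theta$.

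The main obstacle is the cost bound. The interval $[z_{p(S)}(t), z_S(t))$ can slide upward over time, because $z_{p(S)}$ also grows. So the probability that $S$ is ever bought is the measure of the union over time of these intervals. That union is $[z_{p(S)}(0), z_S(T))$, which can be far larger than the final $x_S(T)$. To handle this we modify the fractional solution before rounding, using the assumption $c_S<c_{S'}$ for $S\subsetneq S'$, and we lose a constant factor by assuming costs are geometrically separated along paths. This is the standard step in Meyerson's analysis, where costs are rounded to powers of 2 and permit types merged. We then charge the measure of the union to the increases: each time $z_{p(S)}$ grows by $\delta$, the interval of $S$ shifts by $\delta$, and that increase comes from mass $x_{S'}$ added at some ancestor $S'$ of $S$.

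The charging fails if $S'$ has many descendants, since one unit of mass at $S'$ would be charged to every descendant. To fix this, we do not round each set independently. Instead, after $\theta$ picks a set along the path of the current request, we buy only the sets needed for actual requests. So the probability of buying $S$ is bounded by the probability that $\theta$ lands in the interval of $S$ at a time when some request inside $S$ needs it. We then bound the expected cost by a potential argument that compares, depth by depth, the mass an ancestor contributes against a geometric series of costs below it.

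I expect this last charging argument to be the crux, and I am less certain it goes through than the earlier steps. If it does not close, the fallback is to make the fractional algorithm itself ``lazy'' (a deterministic fractional solution where only the ancestor closest to the leaf is increased) and reapply the threshold rounding, which yields $\mathbb E[\textsc{alg}_{int}]=O(\textsc{alg}_{frac})$.
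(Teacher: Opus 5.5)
\textbf{The proposal has a genuine gap: the cost bound is never proved, and the fixes you sketch do not close it.} Your setup and the coverage argument for the single-threshold top-down rule are fine. Note that the paper gives no argument of its own for this lemma beyond pointing to Meyerson, so everything rests on your charging step.

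What your rule actually gives is $\Pr[S\text{ bought}]\le z_S(t_{\mathrm{last}})-z_{p(S)}(t_{\mathrm{first}})$, where $t_{\mathrm{first}},t_{\mathrm{last}}$ are the first and last requests inside $S$. The excess over $x_S$ is therefore the growth of the ancestors of $S$ during that window. Geometric separation of costs only bounds the total cost along one root-to-leaf path. Buying only the sets needed by actual requests does not help when many branches below an ancestor are active at the same time.

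Here is a concrete failure. Take a root $A$ of cost $2b$ with children $B_1,\dots,B_m$ of cost $b$, each having unit-cost leaves $L_{i,0},\dots,L_{i,r}$, where $\epsilon=b^{-1/2}$ and $r=1/(2\epsilon)$. Set $x_{B_i}=\epsilon$ at the start. In stage $k=0,\dots,r$, set $x_A=k\epsilon$ and, for every $i$, request an element of $L_{i,k}$ with $x_{L_{i,k}}=1-(k+1)\epsilon$. This solution is feasible and monotone, already lazy and cost-separated, and has fractional cost $O(b+m\sqrt{b})$. Now suppose $\theta\in[k\epsilon,(k+1)\epsilon)$ with $k<r$. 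Every earlier request bought a leaf. In stage $k$, each of the $m$ requests is uncovered and satisfies $z_A\le\theta<z_{B_i}$, so every $B_i$ is bought. The expected cost is at least $mb/2$, which for $m\ge\sqrt{b}$ is an $\Omega(\sqrt{b})$ ratio.

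Your fallback does not rescue this either. A fractional solution that only raises the lowest set on the path has no cost relation to $\textsc{alg}_{frac}$. For example, if $\textsc{alg}_{frac}$ sets $x_{\mathcal U}=1$ at cost $2$ and then $N$ unit-cost leaves are requested, the leaf-only version pays $N$.

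\textbf{The missing idea is a temporal structure your argument never uses.} A Meyerson-style analysis of exactly your rounding does close for parking permits. There, sets are time intervals and requests arrive in time order, so every request between the first and last request inside $S$ lies in $S$. You also need two harmless preprocessing steps:
\begin{itemize}
\item map each set to its highest ancestor in the same cost class $\lfloor\log_2 c_S\rfloor$, losing a factor $2$;
\item update $x_S$ only when a request inside $S$ arrives, which costs nothing and preserves feasibility.
\end{itemize}
After these steps, an increase $\delta$ of $x_{S'}$ at time $t$ is charged only to strict descendants of $S'$ whose active window contains $t$. All of these contain $e_t$, so they form a single chain, whose total cost is $O(c_{S'})$ by the cost classes. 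This gives $\mathbb E[\textsc{alg}_{int}]=O\bigl(\sum_S c_S x_S\bigr)$.

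For general laminar families with interleaved requests this property fails, as the example shows. You would therefore need either to restrict to instances with this contiguity, or to find a genuinely different rounding. As written, the proposal does not prove the statement.
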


The corresponding dual LP is given by
\begin{equation*}
\begin{aligned}[t]
\max \quad & \sum_{e \in \mathcal U}  y_{e} \\
\text{s.t.} \quad & \sum_{e \in S} y_{e} \le c_S, && \forall S \in \mathcal S \\
& y_{e} \ge 0, && \forall e \in \mathcal U
\\
& y_{e} = 0, && \forall e \not \in X.
\end{aligned}
\end{equation*}

We define the error of a prediction $\hat{\mathbf y}$ as $\eta := \inf_{\text{optimal dual }\mathbf y^*} \lVert \hat{\mathbf y}-\mathbf y^* \mathbf \rVert_1$. In the next subsections, we sketch the ideas to obtain the \emph{usefulness} and \emph{stability} parts of Theorem \ref{thm:laminarsetcover}. Proofs omitted from this section are found in Appendix \ref{subsec:omittedproofs}.

\subsection{Algorithm with Dual Predictions} \label{sec:algorithms}

In this section, we show that dual laminar predictions are useful, in the sense that they can improve the performance of online algorithms when they are sufficiently accurate. We present Algorithm \ref{alg:nonrobustLaminarDual} as a learning-augmented algorithm for laminar set cover using dual predictions. Algorithm \ref{alg:nonrobustLaminarDual} requires an $\mathcal R$-competitive classical online algorithm $A$ (without predictions) as part of its input. 
Our algorithm and its performance guarantee is also parameterized by a constant $0 < \alpha < 1$. We define the notion of $\alpha$-saturation to capture how close a dual constraint is to being tight.

\begin{definition}[$\alpha$-saturated]
    For any $\alpha \in (0,1)$, we say $S$ is $\alpha$-saturated by a solution $\mathbf y$ if $
        \sum_{e \in S} y_{e} \geq \alpha \cdot c_S$.
\end{definition}

\begin{algorithm}[h]
    \caption{Algorithm with dual predictions for laminar set cover}\label{alg:nonrobustLaminarDual}
    \begin{algorithmic}[1]

    \STATE Receive prediction $ \hat {\mathbf y} \in \mathbb R^\mathcal U$ for an optimal dual solution.
    
    \STATE $\mathbf x^1 \gets \mathbf 0 \in \mathbb R^{\mathcal S}$, $\mathbf x^2 \gets \mathbf 0 \in \mathbb R^{\mathcal S}$, $\mathbf x \gets \mathbf 0 \in \mathbb R^{\mathcal S}$
    \STATE Initialize classical online algorithm $A$, with competitive ratio $\mathcal R$.
    \FOR{each $e_t$ not yet covered by $\mathbf x$} \label{line:rainypermitlessdays}
    
        \IF{$\exists S \in \mathcal S$ with $e_t \in S$ being $\alpha$-saturated by $\hat{ \mathbf y}$ }
            \STATE Let $S$ be the inclusion-wise maximal such set
            \STATE $\mathbf x^1_S \gets 1$\label{line:type1} \COMMENT{Type 1 Purchase}
        \ELSE
        \STATE Pass $e_t$ into $A$.
        \STATE Let $\mathbf x^2$ be the output of $A$.\label{line:type2} \COMMENT{Type 2 Purchase}
        \ENDIF

    \STATE $\mathbf x \gets \mathbf x^1 + \mathbf x^2$ 
    \ENDFOR
    \end{algorithmic}
    \end{algorithm}

The following theorem implies the usefulness property of Theorem \ref{thm:laminarsetcover} by choosing $\alpha=1/(1+\epsilon)$.

\begin{restatable}{theorem}{lemdualbound} \label{lem:lemdualboundlaminar}
    The cost of Algorithm~\ref{alg:nonrobustLaminarDual} is bounded by
    \begin{equation*}
        \mathbb E[\textsc{alg}] \leq \frac{1}{\alpha} \textsc{opt} + \frac{1}{\alpha} \lVert (\hat{\mathbf y}-\mathbf y^*) ^+\rVert_1 + \frac{\mathcal R}{1-\alpha}\lVert (\mathbf y^*-\hat{\mathbf y}) ^+\rVert_1
    \end{equation*}
    for any optimal dual solution $\mathbf y^*$.
\end{restatable}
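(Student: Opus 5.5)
The plan is to split the cost of Algorithm~\ref{alg:nonrobustLaminarDual} into the Type~1 purchases ($\mathbf x^1$) and the Type~2 purchases ($\mathbf x^2$), and to bound them separately. Type~1 is charged against the predicted dual mass and Type~2 against the part of $\mathbf y^*$ that $\hat{\mathbf y}$ misses. Throughout, $\mathbf y^*$ is any optimal dual for the instance $X$, so $\sum_e y^*_e = \textsc{opt}_{LP} \le \textsc{opt}$. I assume $\hat{\mathbf y}\ge 0$; otherwise clip it, which does not increase the error terms. If the final guarantee is for integral solutions, the randomized rounding of Lemma~\ref{lemma:rounding-randomized-online} is applied on top, which is where the expectation comes from.

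\textbf{Type 1 purchases.} First I show that the sets bought in line~\ref{line:type1} are pairwise disjoint. Suppose $S$ is bought for $e_t$ and $S'$ is bought later for an uncovered $e_{t'}$, with $S\cap S'\neq\emptyset$. By laminarity one set contains the other. The case $S'\subseteq S$ is impossible, since then $e_{t'}\in S$ would already be covered. If $S\subsetneq S'$, then $S'$ is an $\alpha$-saturated set containing $e_t$ that is strictly larger than $S$, contradicting the maximality of $S$. With disjointness, $\alpha$-saturation gives $c_S \le \frac1\alpha\sum_{e\in S}\hat y_e$ for each bought set. Hence the total Type~1 cost is at most
\[
\frac1\alpha\sum_{e}\hat y_e \;\le\; \frac1\alpha\Bigl(\sum_e y^*_e + \lVert(\hat{\mathbf y}-\mathbf y^*)^+\rVert_1\Bigr) \;\le\; \frac1\alpha\,\textsc{opt}+\frac1\alpha\lVert(\hat{\mathbf y}-\mathbf y^*)^+\rVert_1 .
\]
Entries of $\hat{\mathbf y}$ outside $X$ are absorbed by the positive-part term, since $y^*_e=0$ there.

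\textbf{Type 2 purchases.} Let $X_2\subseteq X$ be the requests passed to $A$; each of them lies in no $\alpha$-saturated set of $\hat{\mathbf y}$. Since $A$ is $\mathcal R$-competitive on the subsequence $X_2$, its cost is at most $\mathcal R\cdot\textsc{opt}(X_2)$. It therefore suffices to exhibit a cover of $X_2$ of cost at most $\frac{1}{1-\alpha}\lVert(\mathbf y^*-\hat{\mathbf y})^+\rVert_1$.

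The key observation is that every $e\in X$ lies in some set that is tight for $\mathbf y^*$. Otherwise $y^*_e$ could be increased, because every constraint containing $e$ has slack and there is at least one such constraint since $\mathcal U\in\mathcal S$; this would contradict optimality. Take the inclusion-maximal $\mathbf y^*$-tight sets that contain some element of $X_2$. By laminarity these sets are pairwise disjoint, and together they cover $X_2$. Each such $S$ contains an element of $X_2$, so it is not $\alpha$-saturated by $\hat{\mathbf y}$. Combined with tightness this gives
\[
\sum_{e\in S}\hat y_e < \alpha c_S = \alpha\sum_{e\in S}y^*_e,
\qquad\text{hence}\qquad
(1-\alpha)c_S < \sum_{e\in S}(y^*_e-\hat y_e)\le \sum_{e\in S}(y^*_e-\hat y_e)^+ .
\]
Summing over these disjoint sets yields $\textsc{opt}(X_2)\le\frac1{1-\alpha}\lVert(\mathbf y^*-\hat{\mathbf y})^+\rVert_1$.

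Adding the two bounds gives the theorem. I expect the main obstacle to be the Type~2 step: choosing the right comparison cover for $X_2$. Once one notices that the maximal tight sets of $\mathbf y^*$ both cover $X_2$ and are each non-saturated by $\hat{\mathbf y}$, the bound follows. A secondary point to check is that $A$, run only on the subsequence $X_2$, is compared with $\textsc{opt}(X_2)$ rather than with $\textsc{opt}$.
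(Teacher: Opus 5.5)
Your proof is correct. It follows the paper's overall structure: the same Type~1/Type~2 split and the same Type~1 charging argument. Your disjointness argument is a more careful version of the paper's ``no double charges'' remark.

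The real difference is how you build the comparison cover for the requests passed to $A$. The paper's Type~2 argument runs as follows:
\begin{itemize}
\item It takes an optimal \emph{integral} primal solution $\mathbf x^*$ and assumes every element of $X$ is covered exactly once.
\item It applies complementary slackness to conclude that each set of $\mathbf x^*$ is tight for $\mathbf y^*$.
\item It restricts to the sets meeting $X'$.
\end{itemize}
That route implicitly relies on integrality of the laminar LP. Complementary slackness applies to an integral $\mathbf x^*$ only if it is also LP-optimal. This holds because laminar incidence matrices are totally unimodular, but the paper never argues it.

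You instead construct the cover directly from $\mathbf y^*$. Dual optimality forces every element of $X$ into a $\mathbf y^*$-tight set. The inclusion-maximal tight sets meeting $X_2$ are pairwise disjoint by laminarity, and none of them is $\alpha$-saturated by $\hat{\mathbf y}$. This uses only dual optimality and laminarity. Applied to all of $X$, it even reproves that the integral optimum equals $\sum_e y^*_e$. The paper's version is shorter if integrality is taken for granted; yours is self-contained.

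Two side remarks.
\begin{itemize}
\item The expectation in the statement comes from $A$ possibly being randomized, not from the rounding of Lemma~\ref{lemma:rounding-randomized-online}. Applying that rounding would cost an unspecified constant factor and lose the exact constants $1/\alpha$ and $\mathcal R/(1-\alpha)$.
\item Clipping $\hat{\mathbf y}$ is unnecessary, and it would change which sets Algorithm~\ref{alg:nonrobustLaminarDual} treats as $\alpha$-saturated. For Type~1, bound the disjoint charges by $\frac{1}{\alpha}\sum_e \hat y_e^+$ and use $\hat y_e^+\le y^*_e+(\hat y_e-y^*_e)^+$. Your Type~2 inequality already holds for signed $\hat{\mathbf y}$.
\end{itemize}
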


As labeled in Algorithm \ref{alg:nonrobustLaminarDual} on Lines \ref{line:type1} and \ref{line:type2}, we consider two types of costs: Type 1 costs and Type 2 costs. Type 1 costs are for purchasing permits which are $\alpha$-saturated by $\hat{\mathbf y}$, whereas Type 2 costs equal the cost of $A$. 
Thus, we break the analysis into two parts, bounding above the Type 1 costs and Type 2 costs separately. In particular, Theorem \ref{lem:lemdualboundlaminar} follows from Lemma \ref{lem:type1boundlaminar} and Lemma  \ref{lem:type2boundlaminar}.

\begin{restatable}{lemma}{lemtypeonebound} \label{lem:type1boundlaminar}
        The Type 1 costs of Algorithm \ref{alg:nonrobustLaminarDual} are at most $\frac{1}{\alpha} \textsc{opt} + \frac{1}{\alpha} \lVert (\hat{\mathbf y}-\mathbf y^*) ^+\rVert_1$.
\end{restatable}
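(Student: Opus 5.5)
Let $\mathcal C_1$ be the family of sets bought as Type 1 purchases, so the Type 1 cost is $\sum_{S\in\mathcal C_1} c_S$. The plan is to charge each such set to the prediction mass inside it, and then to compare that mass with the optimal dual value $\OPT=\sum_e y^*_e$ (strong LP duality).

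\textbf{Step 1: the sets in $\mathcal C_1$ are pairwise disjoint.} Suppose $S,S'\in\mathcal C_1$ intersect. By laminarity one contains the other, say $S\subsetneq S'$, and $S'$ is $\alpha$-saturated by $\hat{\mathbf y}$. Consider the request $e_t\in S$ that triggered the purchase of $S$. Since $e_t\in S\subseteq S'$ and $S'$ is $\alpha$-saturated, the inclusion-wise maximal saturated set containing $e_t$ contains $S'$. So the algorithm would not have chosen $S$, a contradiction. The same maximality argument rules out buying a set twice: once $x^1_S=1$, every later element of $S$ is already covered and is skipped. Saturation depends only on the fixed $\hat{\mathbf y}$ and not on time, which is what makes this argument work.

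\textbf{Step 2: bound each set by the prediction mass inside it.} For each $S\in\mathcal C_1$, saturation gives
\[
c_S\le \frac{1}{\alpha}\sum_{e\in S}\hat y_e .
\]
By disjointness, the Type 1 cost is at most $\frac1\alpha\sum_{e\in U}\hat y_e$, where $U=\bigcup\mathcal C_1$. We may assume $\hat{\mathbf y}\ge 0$; otherwise clip it, which only shrinks the sums.

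\textbf{Step 3: compare with the optimal dual.} Write $\hat y_e\le y^*_e+(\hat y_e-y^*_e)^+$. Summing over $e\in U$ gives
\[
\sum_{e\in U}\hat y_e \le \sum_{e\in U}y^*_e+\lVert(\hat{\mathbf y}-\mathbf y^*)^+\rVert_1 .
\]
Since $\mathbf y^*\ge 0$, we have $\sum_{e\in U}y^*_e\le\sum_e y^*_e=\OPT$. Combining the three steps yields the stated bound $\frac1\alpha\OPT+\frac1\alpha\lVert(\hat{\mathbf y}-\mathbf y^*)^+\rVert_1$.

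The main subtlety is Step 1: checking carefully that maximality together with the time-invariance of saturation prevents nested or repeated Type 1 purchases. Note that the dual feasibility of $\mathbf y^*$ is not needed for this lemma, only its nonnegativity and its optimality.
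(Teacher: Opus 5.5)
Your proposal is correct and follows essentially the same route as the paper: the Type 1 sets are disjoint by laminarity plus the maximal-set rule, each is charged via $c_S\le\frac1\alpha\sum_{e\in S}\hat y_e$, and the total is compared with $\sum_e y^*_e=\OPT$ by strong duality. Restricting the sum to $U=\bigcup\mathcal C_1$ and using $\mathbf y^*\ge 0$ is slightly cleaner than the paper's global inequality $\lVert\hat{\mathbf y}\rVert_1\le\lVert\mathbf y^*\rVert_1+\lVert(\hat{\mathbf y}-\mathbf y^*)^+\rVert_1$, which tacitly needs $\hat{\mathbf y}\ge 0$; it also makes your clipping remark unnecessary, so just drop it (clipping negative entries would enlarge the sums and change which sets are saturated, not shrink them).
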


\begin{proof}[Sketch of Proof]
    The high level idea is that the costs from Type 1 purchases either correspond to the optimal dual solution $\mathbf y^*$, or are due to some surplus $\lVert (\hat{\mathbf y}-\mathbf y^*) ^+\rVert_1$ which caused a set to be $\alpha$-saturated. We can thus charge these costs to those corresponding objects, losing a factor of $1/\alpha$ in the process. As $\lVert \mathbf y^* \rVert_1 \leq \textsc{opt}$, we obtain the result.
\end{proof}

\begin{restatable}{lemma}{lemtypetwobound} \label{lem:type2boundlaminar}
    The Type 2 costs of Algorithm \ref{alg:nonrobustLaminarDual} are at most $\frac{\mathcal R}{1-\alpha}\lVert (\mathbf y^*-\hat{\mathbf y}) ^+\rVert_1$.
\end{restatable}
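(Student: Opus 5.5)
The plan is to compare the cost of $A$ with the optimum of the subinstance it actually sees. Then I will bound that optimum by exhibiting an explicit cheap cover, built from sets that are tight for $\mathbf y^*$ but not $\alpha$-saturated by $\hat{\mathbf y}$.

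\emph{Reduction to the subinstance.} Let $X_2 \subseteq X$ be the set of requests passed to $A$. Algorithm~\ref{alg:nonrobustLaminarDual} simply runs $A$ on the subsequence $X_2$. Since $A$ is $\mathcal R$-competitive, the Type 2 cost is at most $\mathcal R \cdot \textsc{opt}(X_2)$. Here $\textsc{opt}(X_2)$ is the optimal cost of covering $X_2$, whether fractional or integral; the construction below is an integral cover, so it bounds both. It therefore suffices to show
\[
\textsc{opt}(X_2) \le \tfrac{1}{1-\alpha}\lVert (\mathbf y^*-\hat{\mathbf y})^+\rVert_1 .
\]

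\emph{Tight sets via complementary slackness.} Fix an optimal dual $\mathbf y^*$ for $X$ and an optimal primal $\mathbf x^*$ for $X$. By complementary slackness, every $S$ with $x^*_S>0$ is tight: $\sum_{e\in S} y^*_e = c_S$. Since $\mathbf x^*$ is feasible, each $e\in X_2$ lies in some set $S$ with $x^*_S>0$, which is therefore tight. Because $e$ was passed to $A$, no set containing $e$ is $\alpha$-saturated by $\hat{\mathbf y}$. Hence
\[
\sum_{e'\in S}\hat y_{e'} < \alpha c_S ,
\]
and so
\[
\sum_{e'\in S}(y^*_{e'}-\hat y_{e'})^+ \;\ge\; \sum_{e'\in S}(y^*_{e'}-\hat y_{e'}) \;>\; (1-\alpha)c_S .
\]

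\emph{Disjoint cover via laminarity.} Let $\mathcal C$ be the family of inclusion-maximal sets among the tight sets with $x^*_S>0$ that contain some element of $X_2$. This family covers $X_2$. By laminarity, its members are pairwise disjoint. Each $S\in\mathcal C$ contains an element of $X_2$, so the inequality above applies to it. Summing over the disjoint sets gives
\[
\textsc{opt}(X_2) \le \sum_{S\in\mathcal C} c_S < \tfrac{1}{1-\alpha}\sum_{S\in\mathcal C}\sum_{e\in S}(y^*_e-\hat y_e)^+ \le \tfrac{1}{1-\alpha}\lVert(\mathbf y^*-\hat{\mathbf y})^+\rVert_1 ,
\]
which proves the lemma.

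\emph{Main obstacle.} The delicate point is the tight-set step. One has to argue that the witnessing set $S$ for $e$ is simultaneously tight for $\mathbf y^*$ and non-saturated for $\hat{\mathbf y}$. Non-saturation holds for every set containing $e$, because the check is made for all such sets at the time $e$ is passed to $A$. Tightness comes from taking $S$ in the support of an optimal primal. Since the lemma is stated for an arbitrary optimal dual $\mathbf y^*$, this relies on complementary slackness holding between any optimal primal and any optimal dual, which is true for LPs. A minor point is that if $A$'s guarantee has an additive constant, it is absorbed or ignored as usual.
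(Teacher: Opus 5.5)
Your proof is correct and follows essentially the same route as the paper's. Complementary slackness makes each covering set of an optimal primal tight for $\mathbf y^*$; non-saturation by $\hat{\mathbf y}$ then forces at least $(1-\alpha)c_S$ of the mass of $(\mathbf y^*-\hat{\mathbf y})^+$ to lie inside that set; disjointness of the chosen sets lets you sum; and you finish with the $\mathcal R$-competitiveness of $A$ on the subinstance. The only difference is minor and slightly cleaner on your side: the paper takes an optimal \emph{integral} primal and assumes each element is covered exactly once, which implicitly uses integrality of the laminar LP, whereas you take inclusion-maximal support sets of an arbitrary optimal primal and get disjointness directly from laminarity.
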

\begin{proof}[Sketch of Proof]
    By complementary slackness, if a requested element is not part of an $\alpha$-saturated constraint, this implies a large error in $\lVert (\mathbf y^*-\hat{\mathbf y}) ^+\rVert_1$. Therefore, whenever we need to use the Type 2 purchases, we can charge this cost to corresponding parts of $ (\mathbf y^*-\hat{\mathbf y}) ^+$, losing a factor of $\mathcal R / (1-\alpha)$ in the process.
\end{proof}




\subsection{Stability of Optimal Dual Solutions} \label{sec:stability}

To prove stability, we present Algorithm \ref{alg:optimaldualalgorithm} as a simple algorithm for calculating an optimal dual solution for an offline instance. 
It is a greedy algorithm which descends the laminar set hierachy $\mathcal S$ in a depth-first manner, and increases all dual variables in the leaves in $X$ uniformly until doing so would cause a dual constraint to be violated. We say a set $S \in \mathcal S$ is \emph{saturated} if $\sum_{e : e \in S \cap X} y_e = c_S$. The algorithm concludes when all dual variables $y_e$ with $e \in X$ belong to a saturated set.

\begin{algorithm}[h]
    \caption{Optimal Dual Computation}\label{alg:optimaldualalgorithm}
    \begin{algorithmic}[1]

    \STATE Initialize $\mathbf y \gets \mathbf 0$.
    \STATE Let $X \subseteq \mathcal U$ be the set of elements which must be covered, revealed offline.
    \STATE Let $\mathcal L$ be a depth-first search ordering of leaves in $\mathcal T$.
    \FOR{$S \in \mathcal L$ with $S \cap X$ non-empty}
    \STATE Increase all $y_e$ with $y_e \in S \cap X$ uniformly until an ancestor $A$ of $S$ becomes saturated.
    \ENDFOR
    \end{algorithmic}
\end{algorithm}

\begin{restatable}{lemma}{lemoptimaldual}
    Algorithm \ref{alg:optimaldualalgorithm} computes an optimal solution to the dual program.
\end{restatable}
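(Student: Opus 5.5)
Our approach is to certify optimality by LP duality: we exhibit a primal solution $\mathbf x$ whose cost equals $\sum_e y_e$ for the $\mathbf y$ returned by Algorithm~\ref{alg:optimaldualalgorithm}. Weak duality then shows both are optimal. Equivalently, we check dual feasibility, and then primal feasibility together with complementary slackness.

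\textbf{Step 1 (dual feasibility and termination).} Each $y_e$ only increases, and only for $e\in X$. The loop for a leaf $S$ stops as soon as some ancestor constraint becomes tight, so no constraint $\sum_{e\in S'}y_e\le c_{S'}$ is ever exceeded. We interpret ancestors as including $S$ itself. Increasing $y_e$ for $e\in S\cap X$ affects only constraints of sets containing $e$, and by laminarity these are exactly the ancestors of the leaf $S$ in $\mathcal T$. Hence $\mathbf y$ is feasible. After the iteration for leaf $S$, some ancestor of $S$ is saturated, and it stays saturated since duals never decrease. One point needs care: if an ancestor is already saturated when the loop reaches $S$, the increase is by zero, which is fine. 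We then have the invariant that every $e\in X$ lies in some saturated set.

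\textbf{Step 2 (primal construction).} Let $\mathcal M$ be the inclusion-wise maximal saturated sets, and set $x_S=1$ for $S\in\mathcal M$. By Step~1 every requested element is covered, so $\mathbf x$ is feasible. By laminarity the sets in $\mathcal M$ are pairwise disjoint. Therefore
\[
\sum_{S\in\mathcal M}c_S=\sum_{S\in\mathcal M}\sum_{e\in S}y_e\le \sum_e y_e .
\]
For the reverse inequality we need every $e$ with $y_e>0$ to lie in some set of $\mathcal M$. This holds because every $e\in X$ lies in a saturated set, and $y_e=0$ for $e\notin X$. So equality holds, i.e., primal cost equals dual objective, and both solutions are optimal.

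\textbf{Main obstacle.} The delicate point is the claim that every $e\in X$ ends up in a saturated set. Elements of $X$ in a leaf whose iteration raised nothing are fine, because a saturated ancestor already existed. The only concern is the interpretation of the step ``increase until an ancestor becomes saturated'' when leaves are not singletons. In that case the $y_e$ are raised uniformly within a leaf, and the constraint of the leaf itself also bounds them, so the same argument applies. Because of this, the depth-first order is not actually needed for optimality; it matters for the stability lemma that follows, not here. I expect the write-up to be short once Step~1's invariant is stated precisely.
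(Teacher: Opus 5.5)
Your proof is correct, but it takes a different route from the paper.

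\textbf{The paper's route.} It argues by induction on the height of $S$ in $\mathcal T$, showing that the algorithm run on each subfamily $\mathcal S|_S$ is optimal. There are two cases:
\begin{itemize}
\item If $S$ becomes saturated, the value $c_S$ already meets the upper bound imposed by the constraint of $S$.
\item If $S$ never saturates, the run splits into independent runs on the children. Any feasible dual is then bounded by the sum of the children's optima.
\end{itemize}
So the paper never invokes LP duality; it bounds every feasible dual directly.

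\textbf{Your route.} You give a certificate. You establish dual feasibility plus the invariant that every requested element lies in a tight set. The inclusion-wise maximal tight sets then form a disjoint integral cover of cost $\sum_e y_e$, and weak duality finishes the argument.

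\textbf{What each buys.} Your argument is shorter and non-inductive, and it proves more:
\begin{itemize}
\item Any non-extendable feasible dual is optimal for laminar families. So, as you note, the depth-first order is irrelevant.
\item It exhibits an integral primal solution of equal value, so the LP relaxation has no integrality gap on laminar instances. The paper uses this fact implicitly elsewhere: in $\lVert \mathbf y^*\rVert_1=\textsc{opt}$, and in the complementary slackness between an optimal \emph{integral} $\mathbf x^*$ and $\mathbf y^*$ in the proof of Lemma~\ref{lem:type2boundlaminar}.
\end{itemize}
The paper's subtree-by-subtree formulation, in turn, matches the recursive viewpoint it reuses in the stability analysis.

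\textbf{One point to state explicitly.} The paper also leaves this implicit. Your invariant ``every $e\in X$ lies in a saturated set'' requires every element of $X$ to belong to some leaf of $\mathcal T$. Otherwise such an element is never raised, and the output can be suboptimal. For example, take $\mathcal U=\{a,b\}$, $\mathcal S=\{\mathcal U,\{a\}\}$ and $X=\{b\}$: the algorithm returns $\mathbf y=\mathbf 0$, although $y_b=c_{\mathcal U}$ is feasible. State this assumption up front. It can be ensured without changing either LP by adding suitable dummy leaves.
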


Let $\mathbf y^{\textsc{alg}}(X)$ denote the result of Algorithm \ref{alg:optimaldualalgorithm} on problem instance $X$. Our next result shows that $\mathbf y^{\textsc{alg}}(X)$ is stable under perturbations of $X$.

\begin{restatable}{theorem}{lemstabilityresult}
\label{lem:stabilityresult}
    Algorithm \ref{alg:optimaldualalgorithm} satisfies $\lVert \mathbf y^{\textsc{alg}}(X) - \mathbf y^{\textsc{alg}}(X') \rVert_1 \leq 2 \beta |X \Delta X'|$, where $X \Delta X' := (X - X') \cup (X' - X)$ denotes symmetric difference, and $\beta := \max_{e \in X \Delta X'} \min_{S: e \in S} c_S$.
\end{restatable}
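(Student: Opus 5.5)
We reduce to the case $|X \Delta X'| = 1$ and then sum the bounds along a path, using the triangle inequality. Concretely, we order the elements of $X \Delta X'$ as $e_1,\dots,e_k$ and define intermediate instances $X = X_0, X_1, \dots, X_k = X'$, where each $X_{i}$ is obtained from $X_{i-1}$ by inserting or deleting $e_i$. Then
\[
\lVert \mathbf y^{\textsc{alg}}(X) - \mathbf y^{\textsc{alg}}(X')\rVert_1 \le \sum_{i=1}^k \lVert \mathbf y^{\textsc{alg}}(X_{i-1}) - \mathbf y^{\textsc{alg}}(X_i)\rVert_1 .
\]
It therefore suffices to show that adding a single element $e$ changes the output by at most $2\min_{S\ni e} c_S$ in $\ell_1$. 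Deletion is the same statement with the roles of the two instances swapped. Note that $\beta$ upper bounds $\min_{S\ni e}c_S$ for every $e$ in the symmetric difference.

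For the single-insertion case, let $\mathbf y = \mathbf y^{\textsc{alg}}(X)$ and $\mathbf y' = \mathbf y^{\textsc{alg}}(X\cup\{e\})$. We describe the greedy procedure structurally, as a bottom-up water-filling. Processing leaves in DFS order, a leaf's elements are raised until some ancestor saturates. Elements whose path to the root meets an already saturated set receive zero, so in effect each leaf is raised to the tightest remaining slack along its root path.

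The key structural claim concerns the unique smallest set $S_e$ containing $e$, which has cost $c_{S_e} = \min_{S\ni e}c_S$ (costs increase up the tree). Every set $S$ on the root path of $e$ satisfies $\sum_{f\in S}y'_f \le c_S$. The perturbation is confined to the leaves processed after $e$'s leaf, or to $e$'s own leaf, within the subtree of the lowest ancestor that changes saturation status.

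The main step is a monotonicity/charging argument. Adding $e$ can only decrease the slack available to other elements, so the coordinates of $\mathbf y'$ other than $y'_e$ should satisfy $y'_f \le y_f$ for $f\neq e$ (a decrease-only property that I would verify by induction over the DFS order). Given that, the total mass lost is
\[
\sum_{f\ne e}(y_f - y'_f) \le y'_e + \Big(\textstyle\sum_f y_f - \sum_f y'_f\Big)^+ .
\]
Both dual values are optimal, so $\sum_f y'_f \ge \sum_f y_f$, since the old solution remains feasible for the enlarged instance. Hence the lost mass is at most $y'_e$. This gives
\[
\lVert \mathbf y - \mathbf y'\rVert_1 \le y'_e + \sum_{f\neq e}(y_f-y'_f) \le 2y'_e \le 2c_{S_e},
\]
where the last inequality is the dual constraint for $S_e$.

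I expect the main obstacle to be the decrease-only property. Raising $e$ consumes slack in the sets on its root path, which can cause those ancestors to saturate earlier and thereby lower the elements processed later. However, one must rule out some element being raised higher. An element could rise only if some set that previously saturated it now gets more slack, and adding an element never increases any slack. I would prove this by induction on the DFS processing order, showing that after each step the partial slacks under $X\cup\{e\}$ are at most those under $X$, and hence each increment is no larger. If exact monotonicity fails because of tie-breaking in the uniform increase within a leaf, I would fall back on noting that the elements within a leaf share a common value, and argue at the level of leaf aggregates.
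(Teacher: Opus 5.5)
Your reduction to a single insertion is fine, and so is your closing accounting. Once $y'_f\le y_f$ holds for all $f\neq e$, the facts $y_e=0$ and $\sum_f y'_f\ge\sum_f y_f$ give $\lVert\mathbf y-\mathbf y'\rVert_1\le 2y'_e\le 2c_{S_e}$. This is the same bookkeeping as the paper's telescoping along the root path of $e$: its ``crucial observation'' $s_i\le\delta_{i+1}-\delta_i$ is exactly your decrease-only property on $S_i\setminus S_{i+1}$, and it uses $\delta_0\ge 0$ where you use optimality.

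The gap is that you never prove decrease-only, and the induction you sketch for it rests on a false invariant. Adding $e$ \emph{can} increase slacks, so partial slacks under $X\cup\{e\}$ need not stay below those under $X$. Take $\mathcal U=\{e,f,g\}$ with sets $\mathcal U$ (cost $10$), $\{e\}$ (cost $5$), $A=\{f,g\}$ (cost $8$), $\{f\}$ (cost $6$), $\{g\}$ (cost $7$), and DFS order $\{e\},\{f\},\{g\}$. For $X=\{f,g\}$ the algorithm outputs $(y_e,y_f,y_g)=(0,6,2)$; for $X\cup\{e\}$ it outputs $(5,5,0)$. After $\{f\}$ is processed, $A$ has slack $2$ in the old run but $3$ in the new run. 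Moreover, $A$ is exactly the set that stopped $y_g$ in the old run. So your criterion (``an element can rise only if a set that previously saturated it gains slack'') cannot rule out a rise of $y_g$. Here $y'_g$ stays below $y_g$ only because the root, an ancestor of $e$, is already saturated. In general, sets off the root path of $e$ whose subtrees receive less mass end up with more slack, so a per-set slack comparison along the DFS order cannot be the invariant. Your tie-breaking fallback does not address this.

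What is needed is an argument that separates the root path from the off-path subtrees. Since DFS processes each subtree contiguously, everything outside the subtree of a set $S$ acts on it only through one cap $\kappa$: the minimum slack of the strict ancestors of $S$ when its subtree starts. By induction, the mass placed inside is $\min(\kappa,\mathrm{OPT}(S))$, where $\mathrm{OPT}(S)$ is the optimal dual value on $\mathcal S|_S$. The $i$-th child $C_i$ receives cap $(\min(\kappa,c_S)-\sum_{j<i}\mathrm{OPT}(C_j))^+$, so the allocation inside a fixed subtree is coordinatewise nondecreasing in $\kappa$.

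Now walk down the root path $S_0\supset\dots\supset S_d$ of $e$:
\begin{itemize}
\item Each $S_{i+1}$ gets the same cap in both runs, since only its earlier, unchanged siblings affect it.
\item It absorbs at least as much mass, since $\mathrm{OPT}(S_{i+1})$ can only grow.
\item Hence every later sibling gets a cap no larger than before and, its subtree being unchanged, an allocation that is coordinatewise no larger.
\item Inside $e$'s leaf, the same mass is split over one more element, or goes entirely to $e$ if the leaf held no element of $X$.
\end{itemize}
Equivalently, the leaf masses are the greedy marginals, in DFS order, of the rank function of the laminar polymatroid $\{\mathbf m\ge 0:\sum_{L\subseteq S}m_L\le c_S\ \forall S\}$. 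Activating $e$'s leaf can then only lower the masses of later leaves, by submodularity. With this lemma in place, your proof goes through.
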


\section{Metrical Task Systems}
\label{sec:MTS}
In Metrical Task Systems (MTS), an algorithm controls a server initially located at a given state (i.e.,\ point) $s_0$ in a metric space $\mathcal{M} = (M, d)$. At each time $t \in \{1, \dots, T\}$, a request arrives in the form of a cost function $c_t\colon M \rightarrow \reals_{\geq 0} \cup \{\infty\}$. In response, the algorithm moves the server to a new state $s_t \in M$, paying movement cost $d(s_{t-1}, s_t)$ and service cost $c_t(s_t)$. The goal is to minimize the total cost incurred while serving the requests.

We can formulate MTS as an integer program by introducing variables $f_t({a,b}) \in \{0, 1\}$ for $a, b \in M$, $t=1,\dots,T$, where $f_t({a,b}) = 1$ means that the server is moved from state $a$ to state $b$ at time $t$ (just before serving request $c_t$). By relaxing the constraints $f_t({a,b}) \in \{0, 1\}$ to $f_t({a,b}) \geq 0$, we obtain the following linear program: 
\begin{alignat}{3}
\min\quad 
& \sum_{t=1}^{T} \sum_{a,b \in M} f_t(a,b)\bigl(d(a,b) + c_t(b)\bigr) \tag{Obj} \label{eq:MTSobjective} \\[0.5em]
\text{s.t.}\quad
& \sum_{b \in M} f_1(a,b) = \mathbf{1}[a = s_0]
\quad &&\forall a \in M \tag{C1} \label{eq:MTSconstraint1} \\[0.6em]
& \sum_{b \in M}\! \bigl(f_{t+1}(a,b) \!-\! f_t(b,a)\bigr) \!=\! 0
\, &&\!\!\!\!\!\!\forall a \!\in\! M,
\, t \!\in\! [T\!-\!1] \tag{C2} \label{eq:MTSconstraint2} \\[0.6em]
& f_t(a,b) \geq 0
\quad \forall a, b \in M,
\, t \in [T] \tag{C3} \label{eq:MTSconstraint3}
\end{alignat}

Here, $f_t({a,b})$ represents the amount of server mass moved from $a$ to $b$ at time $t$, and $\sum_a f_t({a,b})$ represents the server mass located at $b$ at time $t$. The constraints (\ref{eq:MTSconstraint1}) ensure that all the server mass is initially located at $s_0$, while the constraints (\ref{eq:MTSconstraint2}) ensure the conservation of server mass: mass coming into state $a$ at time $t$ must equal the mass leaving state $a$ at time $t+1$. 

By introducing variables $w_0(a)$ and $w_t(a)$, corresponding to the constraints (\ref{eq:MTSconstraint1}) and (\ref{eq:MTSconstraint2}), respectively, we obtain the following dual program:
\begin{equation*}
\begin{alignedat}{3}
\max\quad 
& w_0(s_0) \\[0.5em]
\text{s.t.}\quad
& w_{t-1}(a) \leq w_t(b) + c_t(b) + d(a,b),&   \\ &\hspace{2.2cm}\forall a, b \in M
 \quad \forall t \in [T-1], \\[0.6em]
& w_{T-1}(a) \leq c_T(b) + d(a,b),
\quad \forall a, b \in M.
\end{alignedat}
\end{equation*}

Observe that the optimal solution of the dual program can be computed in decreasing order of $t$, by setting each variable to the highest possible value which does not violate any constraint. By defining $w_T(a) = 0$ for all $a \in M$, we obtain the following recurrence for the optimal solution:
\begin{equation*}
\begin{aligned}
    w_{t-1}(a) = \min_{b \in M} \{d(a,b) &+ c_t(b) + w_t(b) \}, 
    \\&  \forall t \in \{1, \dots, T\} , \forall a \in M.
    \end{aligned}
\end{equation*}

Notably, $w_t(a)$ precisely captures the minimum possible cost to serve requests at times $t+1, \dots, T$, if the server is at state $a$ after serving the request at time $t$. In this sense, $w_t$ can be viewed as a time-reversed analogue of a \emph{work function}, a central object in the study of online algorithms~\cite{ChrobakL91}. While a work function summarizes the minimum cost of serving the requests revealed so far and can therefore be computed online, the optimal dual variables in our formulation depend on future requests and hence cannot be computed online.


\subsection{Algorithm with Dual Predictions}\label{sec:MTSAlgo}
Our algorithm using predictions of the dual variables $w_t(s)$ is very simple, and resembles the well-known $A^*$ search algorithm. Algorithm \ref{alg:greedy-algo-MTS} attempts to minimize the actual cost paid in the next move, plus the estimated future cost. 

\begin{algorithm}[h]
    \caption{Algorithm with dual predictions for MTS}
    \label{alg:greedy-algo-MTS}
    \begin{algorithmic}[1]
    \FOR{$t = 1, \dots, T$}
    \STATE Receive cost function $c_t$ and prediction $\predfw_t$
    \STATE Choose $s_t \in \argmin_{s \in M} {d(s_{t-1}, s) + c_t(s) + \predfw_t(s)}$
    \ENDFOR
    \end{algorithmic}
\end{algorithm}

Interestingly, if instead of $\predfw_t$ we used the work function corresponding to time $t$ (which corresponds to dual variables of a \emph{different} LP formulation), we would recover precisely the Work Function Algorithm, which is the optimal deterministic online algorithm for MTS \cite{MTSsurvey}.


Before analyzing the performance of our algorithm, we first describe how we measure the prediction error. We assume that the prediction at the last step is always the zero function (i.e., $\predfw_T = 0$), as the algorithm knows that the online instance ends after the request $c_T$. 

Let $B^c$ be the Bellman operator with cost function $c$:
\[
(B^c(w)) (s) = \min_{s'\in M} \{w(s') + d(s,s') + c(s')\},
\]
and let $\| \cdot \|_{\mathrm{sp}}$ denote the span seminorm
\[
\| v \|_{\mathrm{sp}} := \max(v) - \min(v), 
\]
where 
$\max(v) = \max_i v_i$ and $\min(v) = \min_i v_i$.

For an instance with cost functions $c_1, \dots, c_T$, we define the error of the prediction $\widehat{w}$ by 
\begin{align*}
\eta_t &= {\| {B^{c_t}(\predfw_t) - \predfw_{t-1}} \|}_{\mathrm{sp}} \quad \text{and} \quad
\eta = \sum_{t=1}^T {\eta_t}.
\end{align*}


Intuitively, $\eta_t$ quantifies the discrepancy between $\predfw_{t-1}$, our estimate of the future cost prior to observing the request $c_t$, and $B^{c_t}(\predfw_t) $, the corresponding estimate after observing $c_t$. We use the more forgiving norm $\| \cdot \|_{\mathrm{sp}}$ rather than $\| \cdot \|_\infty$, as each $\predfw_t$ only needs to be defined up to an additive constant, which has no impact on the behavior of the algorithm. Note that the optimal dual solution $w$ satisfies $w_{t-1} = B^{c_t}(w_t)$ by definition, and therefore yields $\eta = 0$. 

\begin{theorem}
\label{th:MTSBoundForBellmanError}
Algorithm \ref{alg:greedy-algo-MTS} is $\left(1 + \frac{\eta}{\OPT}\right)$-competitive.
\end{theorem}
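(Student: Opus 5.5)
We prove $\ALG \le \OPT + \eta$ by a potential-function argument. The potential is the predicted future cost at the algorithm's current state, normalized so that additive shifts of the predictions are harmless. Define $\Phi_t := \predfw_t(s_t)$ for $t \ge 1$. Since $\predfw_T = 0$, we have $\Phi_T = 0$.

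\textbf{Step 1: the cost of step $t$.} By the choice of $s_t$ in Algorithm~\ref{alg:greedy-algo-MTS},
\[
d(s_{t-1},s_t) + c_t(s_t) + \predfw_t(s_t) = \bigl(B^{c_t}(\predfw_t)\bigr)(s_{t-1}).
\]
Hence the cost paid at step $t$ equals $(B^{c_t}(\predfw_t))(s_{t-1}) - \Phi_t$.

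\textbf{Step 2: replace the Bellman image by $\predfw_{t-1}$.} Let $g_t := B^{c_t}(\predfw_t) - \predfw_{t-1}$, so that $\|g_t\|_{\mathrm{sp}} = \eta_t$. Then
\[
\text{cost}_t = \predfw_{t-1}(s_{t-1}) - \predfw_t(s_t) + g_t(s_{t-1}).
\]
Summing over $t$ telescopes to
\[
\ALG = \predfw_0(s_0) + \sum_{t=1}^T g_t(s_{t-1}).
\]
Here $\predfw_0$ is a prediction at time $0$; it is needed only for $\eta_1$, and if it is absent we may take $\predfw_0 := B^{c_1}(\predfw_1)$.

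\textbf{Step 3: compare with $\OPT$ along the optimal trajectory.} Run the same telescoping identity along an optimal offline trajectory $s_0 = s_0^*, s_1^*, \dots, s_T^*$. We only have the inequality $d(s^*_{t-1},s^*_t) + c_t(s^*_t) + \predfw_t(s^*_t) \ge (B^{c_t}(\predfw_t))(s^*_{t-1})$, so we get
\[
\OPT \ge \predfw_0(s_0) + \sum_{t=1}^T g_t(s^*_{t-1}).
\]
Subtracting this from the identity for $\ALG$ gives
\[
\ALG - \OPT \le \sum_{t=1}^T \bigl(g_t(s_{t-1}) - g_t(s^*_{t-1})\bigr) \le \sum_{t=1}^T \bigl(\max g_t - \min g_t\bigr) = \eta.
\]
Thus $\ALG \le \OPT + \eta = \left(1+\frac{\eta}{\OPT}\right)\OPT$.

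The argument never uses absolute values of $\predfw_t$, only differences. This is why the span seminorm suffices: shifting $\predfw_t$ by a constant changes $g_t$ and $g_{t+1}$ by opposite constants, which cancel in the difference at each step. The step needing care is the endpoint bookkeeping: using $\predfw_T = 0$ so that the final potential term vanishes for both trajectories, and seeing that the same initial term $\predfw_0(s_0)$ appears in both telescoped sums and therefore cancels. Once these are fixed, the rest is the two-line telescoping computation above.
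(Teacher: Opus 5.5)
Your proof is correct and follows essentially the same argument as the paper. Both use the argmin identity $d(s_{t-1},s_t)+c_t(s_t)+\predfw_t(s_t)=(B^{c_t}(\predfw_t))(s_{t-1})$, telescope with $\predfw_T=0$, derive the matching inequality along the optimal trajectory, and bound each step's discrepancy by the span of $B^{c_t}(\predfw_t)-\predfw_{t-1}$; the only cosmetic difference is that you subtract the two telescoped expressions pointwise before passing to $\max g_t-\min g_t$, whereas the paper bounds the $\ALG$ sum by maxima and the $\OPT$ sum by minima separately.
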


\begin{proof}
By choice of $s_t$,
\[
d(s_{t-1}, s_t) + c_t(s_t)  + \predfw_t(s_t) = (B^{c_t}(\predfw_t))(s_{t-1}). 
\]
Rearranging and summing over $t$, we obtain
\begin{equation}
\label{eq:MTS-bound-eq1}
\begin{aligned}
\ALG &= \sum_{t=1}^T d(s_{t-1}, s_t) + c_t(s_t) \\
&= \sum_{t=1}^T \left((B^{c_t}(\predfw_t))(s_{t-1}) - \predfw_t(s_t)\right) \\
&= \predfw_{0}(s_{0}) - \predfw_{T}(s_{T}) + \sum_{t=1}^T (B^{c_t}(\predfw_t) - \predfw_{t-1})(s_{t-1}) \\
&\leq \predfw_{0}(s_{0}) - \predfw_{T}(s_{T}) + \sum_{t=1}^T \max \left(B^{c_t}(\predfw_t) - \predfw_{t-1}\right).
\end{aligned}
\end{equation}

On the other hand, if the offline optimum solution is $s^*_0 = s_0, s^*_1, \dots, s^*_T$, we have 
\[
d(s^*_{t-1}, s^*_t) + c_t(s^*_t)  + \predfw_t(s^*_t) \geq (B^{c_t}(\predfw_t))(s^*_{t-1}),
\]
and by similar calculations we obtain
\begin{equation}
\label{eq:MTS-bound-eq2}
\begin{aligned}
\OPT 
&\geq \predfw_{0}(s^*_{0}) - \predfw_{T}(s^*_{T}) + \sum_{t=1}^T \min \left(B^{c_t}(\predfw_t) - \predfw_{t-1}\right).
\end{aligned}
\end{equation}

 Since  $s_0 = s^*_0$ and $w_{T}(s_{T}) = w_{T}(s^*_{T}) = 0$, combining \eqref{eq:MTS-bound-eq1} and \eqref{eq:MTS-bound-eq2} gives
 \begin{align*}
         \ALG &\leq \OPT + \sum_{t=1}^T \| {B^{c_t}(\predfw_t) - \predfw_{t-1}} \|_{\mathrm{sp}}
         = \OPT + \eta. \qedhere
 \end{align*}
\end{proof}

\begin{figure*}[t]
    \centering
    \begin{subfigure}[t]{0.40\textwidth}
        \centering
    \includegraphics[width=\linewidth]{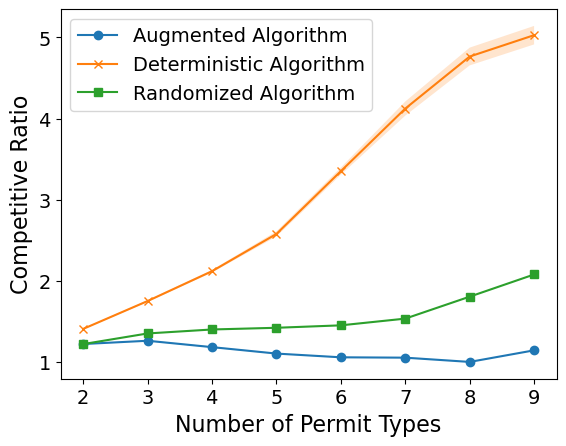}
        \caption{Experiment 1: Comparison of empirical competitive ratios as function of the number of permit types, $K$. Discount factor set to $f=1.5$.}
        \label{fig:varyingn}
    \end{subfigure}
    \hfill
    \begin{subfigure}[t]{0.40\textwidth}
        \centering
        \includegraphics[width=\linewidth]{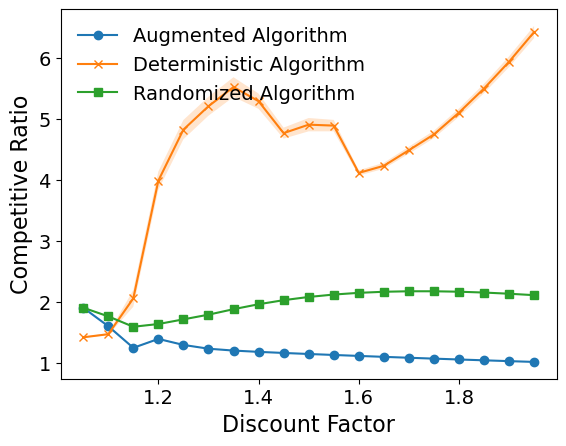}
        \caption{Experiment 2: Comparison of empirical competitive ratios obtained as a function of the Discount factor, $f$. Number of permits $K$ set to $K = 9$.}
        \label{fig:varyingdiscountfactor}
    \end{subfigure}
    \caption{Experiments on the Parking Permit Problem. Shades denote approximate 95\% confidence intervals.}
    \label{fig:pppexperiments}
\end{figure*}

\subsection{Stability of Dual Predictions}
The following theorem shows that predictions optimized for nearby instances have small error.
\begin{theorem}
\label{th:stabilityBellmanError}
Suppose that, on an instance with cost functions $c_1, \dots, c_T$, we use predictions $\predfw_1, \dots, \predfw_T$, which represent the optimal dual solution for an instance with cost functions $\widehat{c_1}, \dots, \widehat{c_T}$. Then,
$
\eta \leq \sum_{t=1}^T { \|\widehat{c}_t - c_t \|}_{\mathrm{sp}}.
$
\end{theorem}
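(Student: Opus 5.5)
Since $\predfw$ is the optimal dual for the instance with costs $\widehat{c}_1,\dots,\widehat{c}_T$, the recurrence derived from the dual program gives $\predfw_{t-1} = B^{\widehat{c}_t}(\predfw_t)$ for every $t$, with $\predfw_T = 0$. The error term therefore becomes
\[
\eta_t = \bigl\| B^{c_t}(\predfw_t) - B^{\widehat{c}_t}(\predfw_t) \bigr\|_{\mathrm{sp}},
\]
and it suffices to show, for any fixed function $w\colon M\to\reals$ and cost functions $c,\widehat c$, that $\|B^{c}(w) - B^{\widehat c}(w)\|_{\mathrm{sp}} \le \|c - \widehat c\|_{\mathrm{sp}}$. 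Summing over $t$ then gives the theorem.

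For this one-step inequality, let $m = \min(c-\widehat c)$ and $M' = \max(c-\widehat c)$, so that $\|c-\widehat c\|_{\mathrm{sp}} = M'-m$. Pointwise we have $\widehat c(s') + m \le c(s') \le \widehat c(s') + M'$ for all $s'$. Inside the minimum defining $B$, the terms $w(s') + d(s,s')$ are common to both operators, and the minimum over $s'$ is monotone and commutes with adding constants. Hence, for every $s$,
\[
(B^{\widehat c}(w))(s) + m \;\le\; (B^{c}(w))(s) \;\le\; (B^{\widehat c}(w))(s) + M'.
\]
So the difference $B^c(w) - B^{\widehat c}(w)$ takes values in $[m, M']$, and its span is at most $M'-m$, as required.

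There is no real obstacle here. The main care point is the treatment of infinite costs: if $c$ or $\widehat c$ can take the value $\infty$, the span may be infinite, and the bound then holds trivially. One should also adopt a convention for the case where both $B^c(w)(s)$ and $B^{\widehat c}(w)(s)$ are infinite. Another point to check is that the dual recurrence $\predfw_{t-1} = B^{\widehat c_t}(\predfw_t)$ holds for all $t = 1, \dots, T$ with $\predfw_T = 0$, which matches the definition of the optimal dual given before Algorithm~\ref{alg:greedy-algo-MTS}. Finally, because each $\predfw_t$ is only defined up to an additive constant, we should verify that shifts do not affect $\eta$. They do not: the Bellman operator commutes with constant shifts, and the span seminorm ignores them.
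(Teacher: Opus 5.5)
Your proof is correct and follows essentially the same route as the paper. Both arguments use $\predfw_{t-1} = B^{\widehat c_t}(\predfw_t)$ to rewrite $\eta_t$ as $\|B^{c_t}(\predfw_t) - B^{\widehat c_t}(\predfw_t)\|_{\mathrm{sp}}$, and then bound $(B^{c_t}(\predfw_t))(s) - (B^{\widehat c_t}(\predfw_t))(s)$ for every $s$ between $\min(c_t-\widehat c_t)$ and $\max(c_t-\widehat c_t)$. The paper gets this bound by plugging the minimizer of one Bellman operator into the other, which is your monotonicity-of-min step written out explicitly.
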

\begin{proof}
    Fix $t\in \{1, \dots, T\}$. We know that $\predfw_{t-1} = B^{\widehat{c}_t}(\predfw_t)$, and therefore
    \begin{align*}
    \eta_t &= {\| {B^{c_t}(\predfw_t) - \predfw_{t-1}} \|}_{\mathrm{sp}} \\
    &= {\| {B^{c_t}(\predfw_t) - B^{\widehat{c}_t}(\predfw_t)} \|}_{\mathrm{sp}}.
    \end{align*}    
    We can now bound $\eta_t$ by ${ \|\widehat{c}_t - c_t \|}_{\mathrm{sp}}$. 
    Fix $s \in M$ and let $r \in M$ such that 
    \[
    (B^{\widehat{c}_t}(\predfw_t))(s) = \predfw_t(r) + d(s,r) + \widehat{c}_t(r).
    \]
    
    Then 
    \begin{equation}
    \label{eq:MTS-stability-eq-1}
    \begin{aligned}
    (B^{c_t}(\predfw_t))(s) &\leq \predfw_t(r) + d(s,r) + c_t(r)\\
    &= (B^{\widehat{c}_t}(\predfw_t))(s) + c_t(r) - \widehat{c}_t(r) \\
    & \leq (B^{\widehat{c}_t}(\predfw_t))(s) + \max(c_t - \widehat{c}_t).
    \end{aligned}
    \end{equation}
    One can similarly show that 
    \begin{equation}
    \label{eq:MTS-stability-eq-2}
    (B^{\widehat{c}_t}(\predfw_t))(s) \leq (B^{c_t}(\predfw_t))(s) - \min(c_t - \widehat{c}_t).
    \end{equation}
    Since \eqref{eq:MTS-stability-eq-1} and \eqref{eq:MTS-stability-eq-2} hold for all $s \in M$, we indeed have $\eta_t \leq { \|c_t - \widehat{c}_t \|}_{\mathrm{sp}}$. 
\end{proof}

\section{Experiments}\label{sec:experiments}

\subsection{Laminar Set Cover (Parking Permit Problem)}

We conduct an empirical evaluation of our laminar set cover algorithm by instantiating it on the Parking Permit Problem (PPP). Canonically in PPP, we consider an employee who must drive to work on rainy days, and otherwise they walk. When they drive to work, they must possess a valid parking permit to use the parking space. The dilemma is that the parking space has $K$ different types of permits, where each permit type $k  = 1,\dots, K$ lasts some fixed duration $D_k$ of consecutive days and costs some fixed amount $C_k$. The employee must decide whether it is more cost effective to purchase more expensive permits of longer durations, which may be cheaper on a per-rainy-day basis. The problem is a special case of online set-cover, where the days are elements of the universe, and the sets correspond to the days covered by each possible permit. \citet{Meyerson05} showed that the PPP can be reduced to Laminar Set Cover, while only losing a constant multiplicative factor in the competitive ratio.

While PPP is a general rent-or-buy problem, in our experiments we emulate the above scenario. We use 153 years of weather data measured between 1869 and 2021 at Central Park, New York City, acquired from the Global Historical Climatology Network daily (GHCNd) dataset \cite{ghcn}.

We create problem instances consisting of $T = 365$ days, motivated by the seasons of the year. We calculate the optimal dual vector for each instance, and use the point-wise sample mean over training instances as the predictor $\hat {\mathbf y}$. We evaluate the performance, as measured by the competitive ratio, using \emph{leave-one-out} cross validation, by leaving out each instance from the dataset and training on the others.

The permit types in our instances have $D_k = 2^k$ and $C_k = (2/f)^k$, where $f$ is a discount factor, representing the factor by which costs of subsequent permits decrease. We compare the performance of our learning-augmented algorithm (with $\alpha=0.5$) relative to state-of-the-art $O(K)$-competitive deterministic and $O(\log K)$-competitive randomized classical online algorithms for PPP. We use the randomized algorithm as the input $\mathcal R$-competitive algorithm in our learning-augmented procedure. 

\begin{figure*}[t]
    \centering
    \begin{subfigure}[t]{0.40\textwidth}
        \centering
    \includegraphics[width=\linewidth]{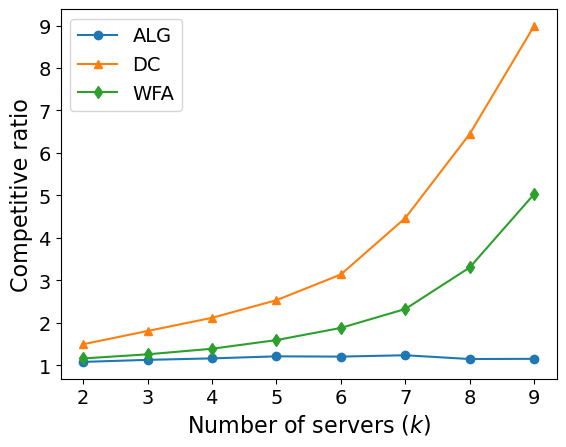}
        \caption{Experiment 1: A separate request for each trip.}
        \label{fig:plot1}
    \end{subfigure}
    \hfill
    \begin{subfigure}[t]{0.40\textwidth}
        \centering
        \includegraphics[width=\linewidth]{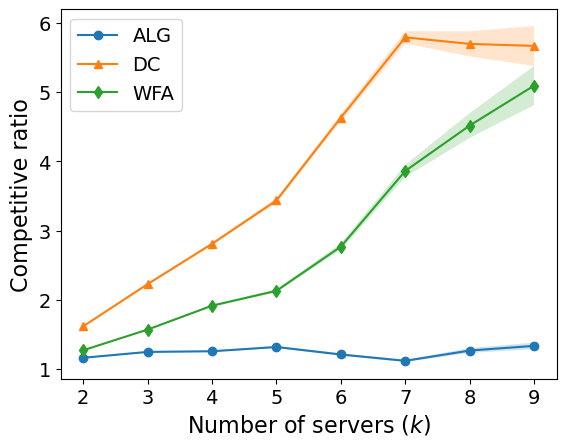}
        \caption{Experiment 2: One request per minute, at the point associated with the largest number of trips.}
        \label{fig:plot2}
    \end{subfigure}
    \caption{The average competitive ratio of each algorithm, where the number of points in the metric space is fixed to 10, and the number of servers varies from $k=2$ to $k=9$. Shades denote approximate 95\% confidence intervals.}
    \label{fig:two-plots}
\end{figure*}

The experimental results are displayed in Figures \ref{fig:varyingn} and \ref{fig:varyingdiscountfactor}. In Figure \ref{fig:varyingn}, we show how the empirical competitive ratio depends on $K$ for the different algorithms. We observe that the learning-augmented algorithm consistently has the best performance, and is near-optimal. The gap between the algorithms is more substantial as $K$ increases, where at $K=9$, the randomized algorithm has a competitive ratio $1.8$ times greater than the learning-augmented algorithm, and the deterministic algorithm is $4.4$ times greater. Figure \ref{fig:varyingdiscountfactor} explores the relationship between the discount factor and the algorithm's performance. For smaller discount factors of $f = 1.1$ or less, the deterministic algorithm performs better, likely because there is less penalty for conservatively choosing the cheaper permits. As $f$ grows, the gap between each algorithm grows, but this is not strictly monotonic. The conclusion from these experiments is that the learning-augmented algorithm is consistently useful on real-world data, particularly when the number of permits or the discount factors are large.

\subsection{Metrical Task Systems (k-Server)}

We evaluate the practicality of dual-based predictions for a particularly important special case of MTS, namely the k-server problem \cite{Manasse88}. In this problem, there are $k$ servers in a metric space, and requests arrive online at points in this metric space. Each request $r_t$ must be serviced by moving one of the servers to the requested location, and the goal is to minimize the total distance traveled by the servers. Note that $k$-server reduces to MTS on the metric space of server configurations, and where the cost function $c_t$ is $0$ at configurations containing $r_t$ and $\infty$ elsewhere. 

We conduct our experiments using a real-world dataset from a popular bike-sharing platform \cite{citibike_system_data}, which was previously used for benchmarks on a particular case of the $k$-server problem, namely caching~\cite{LykourisV21,AntoniadisCEPS23online}. We focus on trips originating in Manhattan and map each trip to a point in a metric space of size 10. Specifically, we partition Manhattan into 10 equally-sized intervals of latitude, which we associate to 10 points equally-spaced on a line, and then map each trip to the point corresponding to the latitude of its starting station. Using this mapping, we construct a $k$-server instance for each day. 

We perform two experiments on this dataset. In the first one, we issue a separate request for each individual trip. In the second experiment, we issue one request per minute of the day, corresponding to the point associated with the largest number of trips in that minute. In both experiments, we use data from 2023 and 2024 for training (731 instances in total) and data from 2025 for testing (365 instances). 

To learn predictions, we first compute the optimal dual variables for each training instance. We then partition time into blocks of 15 minutes and, for each instance and each block, compute the average of the optimal dual variables over that block. Next, we average these values across all training instances, yielding a single prediction per 15-minute block. During testing, we provide these learned average dual variables as predictions, updating them every 15 minutes. For example, all requests arriving between 1:00 and 1:15 are assigned the prediction obtained by averaging the dual variables corresponding to the 1:00–1:15 time block across all training instances. This choice of prediction reflects the idea that request patterns remain relatively stable within short time intervals. Moreover, aggregating predictions over 15-minute blocks yields sparser and more interpretable predictions, while also improving statistical robustness, since each learned prediction is based on multiple optimal dual solutions, both across days and within a single day. 

We compare the performance of our algorithm with two classical online 
algorithms: the Work Function Algorithm (WFA) and the Double Coverage (DC) algorithm. Both of them are known to achieve the optimal competitive ratio $k$ among deterministic online algorithms for the line metric space~\cite{ChrobakKPV90,BartalK04}.

The experimental results are shown in Figure \ref{fig:two-plots}. We observe that our algorithm consistently outperforms the classical online algorithms. Moreover, while our algorithm is nearly optimal for any number of servers $k$, the performance of WFA and DC degrades as function of $k$, in line with their theoretical $k$-competitiveness guarantees.

\section{Conclusion}
In this work, we demonstrated that for a rich assortment of problems, dual predictions possess several advantages over other typical prediction types.
We emphasize that the predictions considered by learning-augmented algorithms should not be so strong and difficult to learn that it merely outsources the difficult task to upstream learning algorithms.
Our learnability and stability results for our proposed predictions take into account the practical limitations of modern machine-learned predictors, making them more practical than many previous models.

Future work could explore whether similar results can be obtained for problems that do not fall into either MTS or laminar set cover. For general set cover, at least when using its standard LP formulation, there are worst-case instances that remain hard even when an optimal dual solution is known upfront (cf.~Appendix~\ref{sec:uselessduals}). But it is an interesting question whether a different way of modeling set cover might lead to more useful duals (similarly to the situation for MTS, where our results rely on our particular LP formulation, whereas a very similar LP formulation for MTS would have been unsuitable, as discussed in Section~\ref{sec:MTSAlgo}).

\section*{Acknowledgements}
We thank the anonymous reviewers for their valuable comments.

Funded by the European Union (ERC, CCOO, 101165139). Views and opinions expressed are however those of the author(s) only and do not necessarily reflect those of the European Union or the European Research Council. Neither the European Union nor the granting authority can be held responsible for them.

\section*{Impact Statement}

This paper presents work whose goal is to advance the field of learning-augmented algorithms. There are many potential societal consequences of our work, none
which we feel must be specifically highlighted here.

\bibliography{references}
\bibliographystyle{icml2026}

\newpage
\appendix
\onecolumn
\section{Further Theoretical Results} \label{app:further}

In this section, we present some further results, which serves to further contextualize our research. In Appendix \ref{sec:instabprimalpred}, we show that primal predictions do not have the same stability guarantees as dual predictions. In Appendix \ref{sec:unstablecaching}, we show that a popular prediction model for caching is unstable. Finally, in Appendix \ref{sec:uselessduals}, we show that for general (non-laminar) set cover, even with perfect optimal dual predictions, a learning-augmented algorithm cannot do better than the classical online algorithm in some cases, at least for the natural LP formulation.

\subsection{Instability of Primal Predictions} \label{sec:instabprimalpred}

Firstly, we showcase that optimal primal predictions often do not enjoy the same stability as optimal dual predictions. We show that for optimal primal predictions, there is no analogous stability theorem to Theorem \ref{lem:stabilityresult}.

\begin{lemma}
    There exist problem instances $X$ and $X'$ for laminar set cover, such that for any primal optima $\mathbf x$ and $\mathbf x'$, $\lVert \mathbf x - \mathbf x' \rVert _1/\lvert X \Delta X' \rvert$ can be made arbitrarily large, even when $\beta := \max_{e \in \mathcal U} \min_{S: e \in S} c_S=1$.
\end{lemma}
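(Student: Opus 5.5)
We will prove the lemma with an explicit family of ski-rental-like instances. Each instance is a laminar family in which a single request flips the unique optimal primal solution from ``buy many cheap small sets'' to ``buy one large set''.

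\textbf{Construction.} Fix an integer $n \ge 2$ and a small $\delta>0$. Let $\mathcal U=\{u_1,\dots,u_n\}$. The laminar family $\mathcal S$ consists of the singletons $\{u_i\}$, each with cost $1$, and the root $\mathcal U$ with cost $n-1+\delta$, where $0<\delta<1$. This family is laminar, and the cost-monotonicity assumption $c_S<c_{S'}$ for $S\subsetneq S'$ holds. Every element lies in a singleton of cost $1$, so $\beta=1$.

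Take $X=\{u_1,\dots,u_{n-1}\}$ and $X'=\mathcal U$, so that $|X\Delta X'|=1$.

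\textbf{Instance $X$.} Any primal solution $\mathbf x$ can be written with $x_{\mathcal U}=\lambda\in[0,1]$; it then needs $x_{\{u_i\}}\ge 1-\lambda$ for $i\le n-1$. Its cost is at least $(n-1)(1-\lambda)+\lambda(n-1+\delta)=n-1+\lambda\delta$. Hence the unique optimum is $\lambda=0$ with all $x_{\{u_i\}}=1$ for $i\le n-1$, at cost $n-1$. To check uniqueness, note that any positive $x_{\{u_n\}}$ strictly adds cost, and so does any excess coverage above $1$.

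\textbf{Instance $X'$.} The same computation gives cost at least $n(1-\lambda)+\lambda(n-1+\delta)=n-\lambda(1-\delta)$. This is uniquely minimized at $\lambda=1$, with all singleton variables equal to $0$. Again, any positive singleton mass only adds cost once $x_{\mathcal U}=1$.

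\textbf{Conclusion.} The optima are unique, so ``for any primal optima'' reduces to comparing these two solutions. Their distance is $\|\mathbf x-\mathbf x'\|_1=(n-1)+1=n$, so
\[
\|\mathbf x-\mathbf x'\|_1/|X\Delta X'| = n,
\]
which is unbounded as $n\to\infty$ while $\beta=1$. For contrast with Theorem~\ref{lem:stabilityresult}, the duals move by only $O(1)$. On $X$ the optimal dual is $y_{u_i}=1$ for $i\le n-1$. On $X'$ it can be taken as $y_{u_i}=(n-1+\delta)/n$. The two differ by at most about $2$ in $\ell_1$, consistent with the $2\beta|X\Delta X'|$ bound.

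The only step needing care is the uniqueness of the two primal optima, since the lemma quantifies over all optima. The parametrization by $\lambda=x_{\mathcal U}$, together with the observation that singleton variables only need to cover their own element, handles this cleanly. Strict inequalities come from $\delta>0$ and $1-\delta>0$.
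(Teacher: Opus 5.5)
Your construction is the paper's own up to renaming ($n=N+1$, $\delta=\tfrac12$, roles of $X$ and $X'$ swapped): unit-cost singletons plus a root whose cost lies strictly between $|X|$ and $|X'|$, so adding one element flips the unique optimum from all singletons to the root, giving $\lVert \mathbf x-\mathbf x'\rVert_1$ linear in $|\mathcal U|$. Your $\lambda$-parametrization is slightly more careful than the paper, which only argues uniqueness among integral solutions. The one small omission is $x_{\mathcal U}>1$: it is LP-feasible, so note that it is suboptimal because its cost alone already exceeds the optimum.
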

\begin{proof}
We use the notation as defined in the main text. Firstly, let $N$ be an arbitrarily large integer. We consider an $(N+1)$-element universe $\mathcal U = \{0, 1, \dots, N \}$, and a collection of subsets $\mathcal S = \{\mathcal U, S_0, ..., S_N \}$, where $S_i = \{i \}$ for $0 \leq i \leq N$. Then, define $c(\mathcal U) = N+\frac{1}{2}$, whereas $c(S_i) = 1$. Observe that for all $N$, $\beta$ is fixed at $1$. 

Now, consider $X = \mathcal U$ and $X' = \mathcal U - \{ 0\}$. Then $\lvert X \Delta X' \rvert  =1$. Observe that for $X$, the unique optimal solution is to buy $\mathcal U$, and on the other hand, the only optimal solution for $X'$ is $\mathcal = \{ S_1, ..., S_N\}$. Therefore, the primal solution vectors $\mathbf x$ and $\mathbf x'$ differ in $N+1$ positions, meaning $\lVert \mathbf x- \mathbf x' \rVert_1 = N+1$. Thus, $\lVert \mathbf x - \mathbf x' \rVert _1/\lvert X \Delta X' \rvert = N+1$, and can therefore be made arbitrarily large.
\end{proof}

Next, we show that primal predictions are also unstable for $k$-server, which is a special case of MTS. In this problem, $k$ servers must be moved in a metric space $\mathcal{M} = (M, d)$ to serve incoming requests. At each time $t=1, \dots, T$, a request consisting of a point $r_t \in M$ arrives, and one of the servers must be moved to $r_t$. The cost incurred is equal to the total distance traveled by the servers.
We can reduce $k$-server to MTS on the metric space that contains all possible configurations, and the service costs are $0$ at configurations containing $r_t$ and $\infty$ elsewhere.

\begin{lemma}
    There exist problem instances $(r_t)_{t \geq 0}$ and $(r'_t)_{t \geq 0}$ for $k$-server 
    such that $(r_t)$ and $(r'_t)$ differ in a single request, and the difference $\|f-f'\|_1$ in the corresponding optimal primal solutions can be made arbitrarily large.
\end{lemma}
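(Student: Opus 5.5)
We will construct an explicit pair of instances on a line metric with $k=2$ servers, in the same spirit as the laminar set cover example: a single changed request flips which of two nearly tied strategies is optimal, and the two strategies differ on a long stretch of time. Take the line with points $0$, $1$, $L$, $L+1$ for a large parameter $L$. Start the two servers at $0$ and $L$. The request sequence alternates $1, L+1, 1, L+1, \dots$ for $N$ rounds, preceded by one ``decisive'' request. In instance $(r_t)$ the first request is at $1$; in $(r'_t)$ it is at $0$ (or some other point chosen so that the cost of the two global strategies compares the other way).

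The key steps are as follows. First, observe that on the alternating suffix essentially only two strategies are candidates for optimality. In strategy (i) one server settles at $1$ and the other at $L+1$; after a constant initial movement cost it pays nothing. In strategy (ii) the servers keep shuttling, which costs order $N$ and so is bad. This shows the shape of the instance needs adjusting. The real design therefore uses two settled configurations with slightly different one-time costs, namely the server assignments $\{0\to1,\ L\to L+1\}$ versus a crossed assignment. The decisive first request is then tuned so that the costs of the two configurations differ by $\pm\frac12$, analogous to the $N+\tfrac12$ trick.

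Second, to make the primal vectors differ a lot, the two optimal solutions must occupy different configurations for $\Theta(N)$ time steps. Recall $f_t(a,b)$ carries unit mass at the current configuration at each $t$, so if the configurations differ on $\Theta(N)$ steps, then $\|f-f'\|_1\ge 2\cdot\Theta(N)$. A cleaner construction may achieve this: make the suffix alternate between two requests each servable in two ways, e.g. symmetric points $x$ and $-x$ with both servers near the origin, so that the configurations $\{x,-x\}$ reached via two different histories are both cost-equal up to the perturbation. I would then verify uniqueness of each optimum via a work-function/dual argument, using the recurrence $w_{t-1}=B^{c_t}(w_t)$: compute $w_0$ on both instances and show that the strict $\pm\tfrac12$ gap forces each optimal path.

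The main obstacle is ensuring that the two optimal solutions genuinely occupy distinct configurations for many steps, rather than merging after a constant number of steps. In $k$-server, lazy optima tend to converge quickly, which would cap $\|f-f'\|_1$. My first attempt is a metric with two far-apart clusters and an alternating suffix inside one cluster. In this design one instance's optimum keeps a server parked in the far cluster, anticipating a final request there. The other instance drops the final far request (this is the single changed request, placed at the end rather than the start) and so brings both servers to the near cluster, where it serves the suffix without movement. The solution that parks a server far must instead shuttle one server between the two alternating points. Choosing the alternating points at distance $\varepsilon$ and the cluster distance $D$ with $N\varepsilon\approx D$, so that both costs are nearly tied and the change flips the comparison, gives configurations that differ for all $N$ steps. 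Hence $\|f-f'\|_1\ge N$ is unbounded.
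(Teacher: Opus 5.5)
Your last construction works and argues differently from the paper; the earlier attempts do not work. In the configuration-space MTS the servers are unlabeled, so the ``crossed assignment'' and the $\{x,-x\}$ variant end in the same configuration as the direct solution and change $f$ in only $O(1)$ coordinates; drop them.

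To pin the last construction down, take $d(p,q)=\varepsilon$ and $d(p,F)=d(q,F)=D$. Start the servers at $\{p,F\}$, or put one at $F$ by a common first request. This is essential: if both start near, both optima stay near until the end. Request $q,p,q,\ldots$ ($N$ requests), and finish with $F$ in one instance and with a repeat of the last near request in the other. Then the sequences have equal length and differ in one entry.

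Shuttling throughout costs $N\varepsilon$, while consolidating after $j$ shuttle moves costs $j\varepsilon+2D$, resp.\ $j\varepsilon+D$. Hence for $D<N\varepsilon<2D$ the optima are unique: park-and-shuttle, resp.\ consolidate-immediately. This window is the precise condition behind your ``nearly tied''. The two optima occupy different configurations at every step, so $\|f-f'\|_1\ge 2N$. This short case analysis replaces the dual-recurrence check you propose, since a unique cheapest path in the time-expanded graph gives a unique LP optimum.

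The paper instead uses three servers on the fixed metric $\{-2,-1,1,2\}$. It relies on an exact mirror-symmetric tie between ``servers parked at $-2,-1$, third shuttling between $2$ and $1$'' and its mirror image, which the last request breaks by a margin of $1$. Its advantage is that the metric stays fixed and only the number of repetitions grows. In your design a single request shifts costs by only $O(D)$ while the two strategies drift apart by $\varepsilon$ per step, so you need $D/\varepsilon\approx N$. That is fine for the statement as written.

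Your advantage is strict cost gaps, which make uniqueness of both optima immediate. An exact tie forces one to rule out every hybrid strategy, and in the paper's instance this fails. Take the instance with last request $2$:
\begin{itemize}
\item follow the left-parked solution until the penultimate cycle;
\item at that cycle's request $2$, move the server at $-1$ to $2$ (cost $3$);
\item at the last cycle's request $-1$, move the server at $-2$ to $-1$ (cost $1$).
\end{itemize}
This costs exactly as much as the mirrored solution, yet agrees with the other instance's unique optimum on all but $O(1)$ steps. The paper's uniqueness claim therefore needs the two pairs farther apart (e.g.\ at $\pm2,\pm3$), whereas your construction works as it stands.
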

\begin{proof}
     Consider the $3$-server problem on the line with all servers starting at $0$, and the request sequence repeats $-2$, $-1$, $2$, $1$ many times and ends with a final request at $-2$. Then the unique optimal solution keeps two servers stationary at $-2$ and $-1$ and the third server goes back and forth between $1$ and $2$. Changing the final request to $2$ completely changes/mirrors the unique optimal solution, with two servers stationary at $1$ and $2$ and the third going back and forth between $-2$ and $-1$.
\end{proof}

\subsection{Instability of Event Predictions for Caching} \label{sec:unstablecaching}


Perhaps the most well-known line of work within learning-augmented algorithms is Lykouris and Vassilvitskii’s model for paging (special case of MTS) with next-request time prediction. 
In this model \cite{LykourisV21}, each request to a page $p$ comes with a prediction of the next time that page $p$ is requested again. The best learning-augmented algorithm for this setting by \citet{Wei20} works as follows: on a cache miss, evict the page with the furthest-in-future \emph{predicted} next-request time, but default to a standard online algorithm ignoring predictions if it has smaller overall cost.

We show that these predictions are also unstable in a very strong sense.

\begin{lemma} Replacing just a \emph{single} request in an otherwise perfectly predicted sequence can lead not only to unbounded prediction error, but to the performance of the learning-augmented algorithm by \citet{Wei20} completely deteriorating to a competitive ratio of $\Omega(\log k)$, offering no improvement over an online algorithm without predictions.
\end{lemma}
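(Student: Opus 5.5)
We construct an explicit pair of request sequences, $\sigma$ and $\sigma'$, that differ in a single request. The design goal is that the predictions, which are perfect for $\sigma$, are reused verbatim on $\sigma'$. The change is placed so that, from some point on, the prediction attached to one page reports a next-request time that never materializes. The page then looks urgent forever, and the algorithm's eviction rule is systematically misled.

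\textbf{The base sequence.} Take $k+1$ pages $p_0,p_1,\dots,p_k$ and a cache of size $k$. In $\sigma$, page $p_0$ is requested repeatedly at regularly spaced times, interleaved with an adversarial phase-based sequence on the other pages. That interleaved part is the standard $\Omega(\log k)$ lower-bound sequence for randomized paging, built on $k+1$ pages. With perfect predictions, furthest-in-future behaves optimally on $\sigma$.

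\textbf{The single modification.} Let $\sigma'$ replace the request to $p_0$ at some time $\tau$ by a request to another page, chosen so that the sequences are otherwise identical. The $p_0$ request just before $\tau$ still carries the prediction ``next request at $\tau$.'' After time $\tau$ passes, that predicted time lies in the past, and I would use the convention that such a stale prediction keeps $p_0$ ranked as soonest-needed.

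The better design is probably the reverse. Make $p_0$ a page that is requested once early in $\sigma'$, with the prediction saying it is requested again very soon. In $\sigma$ it genuinely is; in $\sigma'$ that one request is replaced. The key requirement is that $p_0$ should never be evicted by the predictor-following rule, so the cache effectively holds only $k-1$ useful slots.

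\textbf{Step one: the predictor-following rule pays a lot.} On $k$ pages $p_1,\dots,p_k$ served with $k-1$ effective slots, an adversary can force a fault on essentially every request. OPT, by contrast, evicts $p_0$ once and then pays only once per phase, roughly once every $k$ requests. The rule therefore pays $\Omega(k)\cdot\OPT$.

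\textbf{Step two: the combiner falls back.} Wei's algorithm switches to the robust online algorithm whenever the predictor-following rule's cost is larger. After the switch it behaves like that online algorithm, for example Marker or the $H_k$-competitive algorithm, on the remaining sequence.

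\textbf{Step three: the fallback achieves nothing better.} Take the remainder of $\sigma'$ to be the adversarial sequence realizing the $\Omega(\log k)$ lower bound against the chosen online algorithm. The $O(1)$ cost OPT pays for $p_0$ is negligible, so the final ratio is $\Omega(\log k)$.

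\textbf{The error is unbounded.} The error term (the $\ell_1$ distance between predicted and actual next-arrival times) is unbounded because $p_0$'s actual next request time is $\infty$, or arbitrarily far away.

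\textbf{Main obstacle.} The delicate part is fixing the precise semantics of Wei's algorithm, since its switching rule and tie-breaking must not let it escape the trap. One issue is that the switch may occur at any time, and $\sigma'$ must be adversarial for the combined algorithm whenever that happens. The other is making sure the deterministic predictor-following component cannot exploit the adversarial sequence. I would handle both by letting the adversarial suffix begin after the switch point, which is determined by the deterministic prefix and can therefore be computed in advance. I would also make the suffix long enough that the prefix cost is dominated.
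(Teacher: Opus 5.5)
Your final design is the paper's trap: a page $p_0$ is requested once, its predicted next request is the one that gets replaced, and the stale timestamp keeps it ranked soonest-needed forever. Your three-step outline (rule pays a lot, combiner falls back, adversarial suffix) also mirrors the paper's Stages 1--3. The gap is in the fallback step. The paper describes Wei's combiner as following whichever algorithm has the smaller \emph{overall} cost, so a switch is not permanent. Your remedy for the ``main obstacle'' is to start the adversarial suffix right after the switch point and make it long, and that is exactly the configuration that fails. At the switch point the two costs are nearly equal. On the suffix, the prediction-following rule is not bad: with $p_0$ pinned but exact predictions for every other page, it is essentially Belady's algorithm with $k-1$ slots. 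For example, on the cyclic sequence over $k+1$ pages that defeats LRU, it faults at rate about $2/k$, versus $1$ for LRU. The combiner would therefore switch back and be $O(1)$-competitive on the suffix. Precomputing the switch time does not address this. What is missing is a large cost \emph{gap} in favour of the classical algorithm before the suffix begins.

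This is also why your Step one is aimed at the wrong quantity, and partly wrong as stated. The trap stage should be a long stretch on exactly $k$ pages other than $p_0$; the paper cycles $2,3,\dots,k+1$. With cache size $k$, OPT and any lazy classical algorithm pay only $O(k)$ in total there; there are no ``phases'' with only $k$ pages. The pinned rule faults about once every $k-1$ requests, not on every request, because predictions for those pages are exact. So the rule's cost grows linearly in the trap length while the classical algorithm's cost stays bounded. The comparison that drives the combiner is rule versus classical algorithm, not rule versus OPT; ``$\Omega(k)\cdot\OPT$'' would not even guarantee a switch, since LRU can also pay $k\cdot\OPT$.

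The correct order of choices is as follows:
\begin{enumerate}
\item Fix the adversarial suffix first. It only needs OPT's suffix cost to dominate OPT's $O(k)$ prefix cost, which is independent of the trap length.
\item Then make the trap stage so long that the rule's surplus exceeds the classical algorithm's entire suffix cost, so the combiner never returns to the rule. This is what the paper's ``repeat Stage 2 many times'' provides.
\end{enumerate}
Finally, the suffix needs $k+1$ pages besides $p_0$; the paper uses $2,\dots,k+2$. With only $p_0,\dots,p_k$ you cannot run the $\Omega(\log k)$ lower-bound sequence without re-requesting $p_0$, which refreshes its prediction and releases the trap.
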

\begin{proof}
 Consider the following request sequence:
\begin{itemize}
\item Stage 1: Request $1,2,\dots,k,2$.
\item Stage 2: Repeatedly request $2,3,\dots,k+1$ many times.
\item Stage 3: An adversarial sequence on the pages $2,3,\dots,k+2$ such that the online algorithm without predictions attains its worst-case competitive ratio ($\Omega(\log k)$ if randomized, $\Omega(k)$ if deterministic).
\end{itemize}

Now imagine the \emph{anticipated} request sequence is the same except the last request of Stage 1 is $1$ instead of $2$, and next-request time predictions are sampled according to this anticipated sequence.

The first request, at page $1$, is accompanied by a prediction that page $1$ will be requested again at time $k+1$. Thus, after the first $k$ requests fill the initially empty cache, the evict-furthest-predicted policy would refuse to ever evict page $1$ as it anticipates a next request to $1$ earlier than any other request (although in the true request sequence page $1$ is never requested again). But this would lead to cost tending to infinity during Stage 2, whereas the online algorithm without predictions would serve Stage 2 for constant cost by keeping pages $2,3,\dots,k+1$ in cache. This triggers the defaulting behavior, whereby predictions are ignored and the overall algorithm offers no improvement over the online algorithm without predictions.
\end{proof}

\subsection{Difficulty of Using Dual Predictions for  General Set Cover} \label{sec:uselessduals}

In this section, we show that for the standard LP formulation of fractional set cover (i.e the same as the laminar set cover formulation in the main text), predictions of the optimal dual are not useful, in the sense of the following result.

\begin{lemma}
    A learning-augmented algorithm for online fractional set cover, with perfect predictions of the optimal dual, has competitive ratio of at least $H_m$, where $H_k$ is the $k$th Harmonic number and $m$ is the number of sets, $\lvert \mathcal S \rvert$.
\end{lemma}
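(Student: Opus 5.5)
We construct a hard instance on which the optimal dual is the same for every branch an adversary might follow, so that a perfect prediction reveals nothing the algorithm can act on. The natural choice is the classical lower bound for online fractional set cover, adapted so that all adversarial branches share a single dual optimum.

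\textbf{Instance.} Take $m$ sets $S_1,\dots,S_m$, each of cost $1$. The universe contains elements $e_1,\dots,e_m$. Element $e_i$ lies in every set except $i-1$ specific ones that are chosen adaptively.
\begin{itemize}
\item Request $e_1$, which lies in all $m$ sets.
\item Each subsequent element $e_i$ lies in the $m-i+1$ sets on which the algorithm has so far placed the least fractional mass.
\item By averaging, the algorithm has placed at most a $\frac{i-1}{m}$-type fraction on the sets removed so far.
\end{itemize}

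\textbf{Lower bound on the algorithm.} This is the standard harmonic argument. When element $e_i$ arrives, the mass already sitting on its $m-i+1$ sets is at most $\frac{m-i+1}{m}$ times the current total in the balanced case. Covering $e_i$ therefore forces an increase, and summing these increments gives a total of at least $\sum_{i=1}^m \frac{1}{m-i+1} = H_m$. Concretely, I would show by induction that the symmetric (uniform) strategy is optimal against this adversary: the increments are $\frac{1}{m}, \frac{1}{m-1}, \dots, 1$, so the final cost is at least $H_m$. Meanwhile $\OPT=1$, because the last remaining set contains every requested element.

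\textbf{Making the prediction useless.} This step must hold uniformly over all adversarial branches. Every set contains $e_1$, so the dual constraints $\sum_{e\in S_j} y_e\le 1$ permit $y_{e_1}=1$ and $y_{e_i}=0$ for $i>1$. This dual has value $1=\OPT$, hence it is optimal, and it is identical for every branch.

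Two details need care here:
\begin{itemize}
\item The elements are all distinct coordinates regardless of the branch. I would fix the labels $e_1,\dots,e_m$ in advance, and let only their set memberships depend on the adversary. Equivalently, the universe can contain one element for each possible membership pattern, with $y=\mathbf 1_{e_1}$ as the predicted vector.
\item The prediction must be optimal for the realized instance $X$. It is: $\mathbf 1_{e_1}$ is feasible for every instance and attains value $\OPT=1$.
\end{itemize}

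Since the prediction is a fixed vector known in advance, independent of the algorithm's choices, the algorithm gains no information from it. The adaptive adversary argument therefore goes through unchanged, yielding a competitive ratio of at least $H_m$.

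\textbf{Main obstacle.} The main difficulty is formalizing the adaptive averaging bound for arbitrary, possibly non-symmetric, deterministic fractional algorithms. I would handle it with a potential argument: after step $i$, the mass on the surviving $m-i$ sets is at most $\frac{m-i}{m-i+1}$ times the mass that was on the previous survivors. Each newly requested element then forces at least $1$ minus the surviving mass. Telescoping these inequalities yields $H_m$.
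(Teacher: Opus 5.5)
Your proposal is correct and follows the same skeleton as the paper. An adaptive adversary discards the heaviest surviving set, $\OPT=1$ is witnessed by the last survivor, and the branch-independent optimal dual $\mathbf{1}_{e_1}$ gives the algorithm no information. Your argument differs in two details, both to your advantage.

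First, your $e_1$ lies in all $m$ sets, whereas the paper's first element lies only in $A_1=[m]\setminus\{i_1\}$. On the paper's instance the uniform algorithm pays just $1+\frac{1}{m-1}+\dots+\frac{1}{2}=H_{m-1}$. The paper's count reaches $H_m$ only by using the weak bound $\frac{1}{m}$ for phase~1 and then adding a final unit increment that its own minimality assumption makes zero. So your indexing is the one that actually certifies $H_m$, and it is also what the paper's closing remark (that $e_1$ lies in every set) presupposes.

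Second, your telescoping needs no ``without loss of generality minimal'' reduction. Let $P_i\ge 1$ be the mass on the $m-i+1$ survivors right after $e_i$ is covered, and $D_i$ the cost paid at step $i$. Then $D_1\ge P_1$ and $D_{i+1}\ge P_{i+1}-\frac{m-i}{m-i+1}P_i$, hence $\ALG=\sum_i D_i\ge P_m+\sum_{i=1}^{m-1}\frac{P_i}{m-i+1}\ge H_m$. This handles overshooting algorithms directly.

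Some slips need fixing.
\begin{enumerate}
\item The forced increments are $1,\frac{1}{m},\frac{1}{m-1},\dots,\frac{1}{2}$, not $\frac{1}{m},\dots,1$. Your per-step bound $\frac{1}{m-i+1}$ fails at $i=m$, where the last survivor may already hold mass $\frac{1}{2}$. The totals coincide, which is why your conclusion survives.
\item Drop the ``balanced case'' and the induction showing the ``uniform strategy is optimal''; the telescoping above replaces both.
\item The least-loaded sets must be chosen among the current survivors. Choosing them among all $m$ sets can break nestedness, after which $\OPT$ can exceed $1$ and $\mathbf{1}_{e_1}$ is no longer optimal.
\item Of your two formalizations, use the second (one element per membership pattern), since $\mathcal S$ is fixed before the requests arrive.
\end{enumerate}
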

\begin{proof}
We first present a proof of how a classical $H_m$ lower-bound on fractional set cover can be found (cf.~\citet{AlonAABN09}), and then show that perfect dual predictions are unhelpful at improving this.

Fix a deterministic online algorithm $\mathcal{A}$. We exhibit an adaptive adversarial instance with $m$ sets $S_1, \dots, S_m$ with $c(S_i) = 1$, on which $\mathcal{A}$ incurs cost at least $H_m$ while $\textsc{opt} = 1$. For our construction, we require that
for any non-empty subset of the sets $S_1, ..., S_m$,
there exists an element which appears precisely in those subsets. 

\smallskip
\noindent\textit{Construction.} The adversary maintains a nested sequence of \emph{alive} index sets
\[
A_0 \supsetneq A_1 \supsetneq \cdots \supsetneq A_{m-1},
\]
with $A_0 = [m]$ and $|A_k| = m - k$. Let $x^{(k)} \in \mathbb{R}_{\geq 0}^m$ denote the algorithm's variable vector immediately after phase $k$, with $x^{(0)} = \mathbf{0}$. The vectors are nondecreasing: $x^{(k)} \geq x^{(k-1)}$ componentwise.

For phases $k = 1, 2, \ldots, m-1$:
\begin{enumerate}
    \item The adversary selects
    \[
    i_k \in \arg\max_{i \in A_{k-1}} x^{(k-1)}_i,
    \]
    breaking ties arbitrarily, and sets $A_k := A_{k-1} \setminus \{i_k\}$.
    \item The adversary presents element $e_k$ with
    \[
    \{i \in [m] : e_k \in S_i\} = A_k.
    \]
    The algorithm responds, producing $x^{(k)} \geq x^{(k-1)}$ satisfying
    \[
    \sum_{i \in A_k} x^{(k)}_i \geq 1. \tag{$\ast_k$}
    \]
\end{enumerate}
Finally, let $i_m$ denote the unique element of $A_{m-1}$. The adversary presents one last element $e_m$ with $\{i : e_m \in S_i\} = \{i_m\}$, forcing $x^{(m)}_{i_m} \geq 1$.

\smallskip
\noindent\textit{Bounding $\textsc{opt}$.} By construction, $i_m \in A_k$ for every $k \in \{0, 1, \ldots, m-1\}$, hence $e_k \in S_{i_m}$ for all $k \in [m]$. The offline assignment $x^*_{i_m} = 1$ and $x^*_i = 0$ for $i \neq i_m$ is therefore feasible and hence $\textsc{opt} = 1$.

\smallskip
\noindent\textit{Bounding $\textsc{alg}$.} We may assume without loss of generality that $\mathcal{A}$ is \emph{minimal} in the sense that at the end of each phase $k \in [m-1]$,
\[
\sum_{i \in A_k} x^{(k)}_i = 1,
\]
i.e., the algorithm raises variables only as much as needed to satisfy $(\ast_k)$; any algorithm that overshoots only incurs greater cost, so a lower bound under this assumption implies the same lower bound in general.

We claim that for every $k \in \{1, \ldots, m-1\}$,
\[
\sum_{i \in [m]} \bigl(x^{(k)}_i - x^{(k-1)}_i\bigr) \;\geq\; \frac{1}{m - k + 1}. \tag{$\dagger_k$}
\]

\emph{Case $k = 1$.} Here $x^{(0)} = \mathbf{0}$ and $(\ast_1)$ gives $\sum_{i \in A_1} x^{(1)}_i \geq 1$, so $\sum_i x^{(1)}_i \geq 1 \geq \tfrac{1}{m}$.

\emph{Case $k \geq 2$.} By the minimality assumption applied to phase $k-1$,
\[
\sum_{i \in A_{k-1}} x^{(k-1)}_i = 1. \tag{1}
\]
Since $i_k$ maximizes $x^{(k-1)}_i$ over $i \in A_{k-1}$ and $|A_{k-1}| = m - k + 1$,
\[
x^{(k-1)}_{i_k} \;\geq\; \frac{1}{m - k + 1} \sum_{i \in A_{k-1}} x^{(k-1)}_i \;=\; \frac{1}{m - k + 1}. \tag{2}
\]
Since $A_k = A_{k-1} \setminus \{i_k\}$, combining (1) and (2),
\[
\sum_{i \in A_k} x^{(k-1)}_i \;=\; \sum_{i \in A_{k-1}} x^{(k-1)}_i \;-\; x^{(k-1)}_{i_k} \;\leq\; 1 - \frac{1}{m - k + 1}. \tag{3}
\]
Constraint $(\ast_k)$ requires $\sum_{i \in A_k} x^{(k)}_i \geq 1$, so
\[
\sum_{i \in A_k} \bigl(x^{(k)}_i - x^{(k-1)}_i\bigr) \;\geq\; 1 - \sum_{i \in A_k} x^{(k-1)}_i \;\geq\; \frac{1}{m - k + 1},
\]
and since $x^{(k)} \geq x^{(k-1)}$ componentwise, this yields $(\dagger_k)$.

\smallskip
The final element $e_m$ forces $x^{(m)}_{i_m} \geq 1$. By minimality, $x^{(m-1)}_{i_m}$ contributes to $\sum_{i \in A_{m-1}} x^{(m-1)}_i = 1$, but $A_{m-1} = \{i_m\}$, so $x^{(m-1)}_{i_m} = 1$ already and the final element adds no further cost. (If instead one prefers to terminate at phase $m-1$, the bound $H_m - 1 + 1 = H_m$ still emerges from the analysis below; the singleton element is for definiteness.)

\smallskip
\noindent\textit{Summing.} Since the unit costs give $\textsc{alg} = \sum_i x^{(m)}_i$ and the increments telescope,
\[
\textsc{alg} \;=\; \sum_{k=1}^{m} \sum_{i \in [m]} \bigl(x^{(k)}_i - x^{(k-1)}_i\bigr) \;\geq\; \sum_{k=1}^{m-1} \frac{1}{m - k + 1} \;+\; \bigl(x^{(m)}_{i_m} - x^{(m-1)}_{i_m}\bigr).
\]
The first sum, reindexed by $j = m - k + 1$, equals $\sum_{j=2}^{m} \tfrac{1}{j} = H_m - 1$. Combined with the final increment, which (in the case $x^{(m-1)}_{i_m} < 1$) contributes the missing $1$, we obtain
\[
\textsc{alg} \;\geq\; H_m.
\]

\smallskip
\noindent Combining with $\textsc{opt} \leq 1$,
\[
\frac{\textsc{alg}}{\textsc{opt}} \;\geq\; H_m.
\]

Now, consider the prediction $\hat{{\mathbf y}}$ which is zero, except for being $1$ at the index $e_1$. Observe that this dual solution is feasible, and it is optimal as its objective equals 1, which matches the primal $\textsc{opt}$ described. But, observe that this prediction cannot be used to improve the performance of an algorithm on online fractional set cover on the described adversarial example, because the prediction ${\mathbf y}_{e_1}$ does not reveal any information (since $e_1$ is covered by all sets $S_i$), and the prediction $\hat{{\mathbf y}}$ is constant at zero everywhere else. 
\end{proof}

\section{Omitted Laminar Set Cover Results}

\subsection{Proofs Omitted from Main Text} \label{subsec:omittedproofs}

\lemtypeonebound*
\begin{proof}
The main part of the proof is proving that

\begin{equation} \label{ineq:type1actualineqlaminar}
    \text{Type 1 costs} \leq \frac{1}{\alpha} \lVert \hat{\mathbf y} \rVert_1
\end{equation}

We show this inequality using a charging strategy. Recall that for a Type 1 purchase for element $e \in X$, we include the set inclusion-wise maximal set $S$ which is $\alpha$-saturated by $\hat{\mathbf y}$, i.e $\sum_{f \in S} \hat{y}_{f} \geq \alpha \cdot c(S)$, equivalently $c(S) \leq \sum_{f \in S} \frac{\hat{y}_{f}}{\alpha}$. Hence, we can charge the Type 1 costs to $\frac{\hat{\mathbf y}}{\alpha}$. Observe that the laminar assumption makes it impossible for any double charges to occur. To see this, observe that if $S$ and $S'$ were two intervals which were both charged to, it is impossible that one is contained in the other, because Algorithm \ref{alg:nonrobustLaminarDual} purchases the inclusion-wise maximal $\alpha$-saturated set. However, if neither is contained in one another, then by the laminar assumption, they must be disjoint. Inequality \eqref{ineq:type1actualineqlaminar} therefore follows. 

Then, we can decompose $\lVert \hat{\mathbf{y}} \rVert_1$ using the simple inequality
\begin{equation} \label{ineq:unobscureOPTlaminar}
    \lVert \hat{\mathbf{y}} \rVert_1 \leq \lVert {\mathbf{y}} ^*\rVert_1 + \lVert (\hat{\mathbf{y}} -  {\mathbf{y}} ^*)^+\rVert_1.
\end{equation}
Finally, observe that $\lVert \mathbf y^* \rVert_1 = \sum_{e \in \mathcal U} y_e^* = \sum_{e \in X}^T y_e^*$, which is the dual objective value. By the strong-duality theorem, as $\mathbf y^*$ is an optimal solution to the dual, $\lVert \mathbf y^* \rVert_1$ equals the objective of the optimal solution to the primal program. Thus, $\lVert \mathbf y^* \rVert_1 = \textsc{opt}$. The string of inequalities therefore implies Lemma \ref{lem:type1boundlaminar}:
\begin{equation}
    \begin{aligned}
        \text{Type 1 costs} &\leq \frac{1}{\alpha} \lVert \hat{\mathbf y}\rVert_1 &\text{by Ineq. \eqref{ineq:type1actualineqlaminar}}\\ 
        &\leq \frac{1}{\alpha} \big(\lVert {\mathbf y}^*\rVert_1 + \lVert (\hat{\mathbf y}-{\mathbf y}^*)^+\rVert_1 \big) &\text{by Ineq. \eqref{ineq:unobscureOPTlaminar}}\\
        &= \frac{1}{\alpha} \textsc{opt} + \frac{1}{\alpha}\lVert (\hat{\mathbf y}-{\mathbf y}^*)^+\rVert_1.
    \end{aligned}
\end{equation}
\end{proof}

\lemtypetwobound*

\begin{proof}
    Firstly, fix an optimal dual solution $\mathbf y^*$. Then, let $\mathbf x^*$ be an optimal integral solution to the primal problem. Assume without loss of generality that each element $e \in X$ is covered by exactly one set for the cover $\mathbf x^*$. Let $X' \subseteq X$ be the collection of requested elements which do not belong to any $\alpha$-saturated set, and are therefore associated with Type 2 costs. Define the set $\mathbf x_2^* := \{S \in \mathbf x^* : S \cap X' \neq \varnothing\}$.

Now, by complementary slackness conditions, if $\mathbf x^*_S = 1$ then $\sum_{e \in S} y^*_e = c(S)$. Therefore, for all $S \in \mathbf x_2^*$ we have $\sum_{e \in S} y^*_s = c(S)$. On the other hand, for all $S \in \mathbf x^*_2$, the fact that $S \cap X' \neq \varnothing$ for some $e \in S$ implies that there was no $\alpha$-saturated set containing $e$, meaning $\sum_{e \in S} \hat{y}_e < \alpha \cdot c(S)$. Thus, for all $S \in \mathbf x^*_2$, we have $\sum_{e \in S} y^*_e - \hat{y}_e \geq (1-\alpha) c(S)$. This yields
\begin{equation}
    \begin{aligned}
        \sum_{S \in \mathbf x_2^*} c(S) &\leq \frac{1}{1-\alpha} \sum_{S \in \mathbf x_2^*}\sum_{e \in S} y^*_e - \hat{y}_e\\
        &\leq \frac{1}{1-\alpha} \sum_{S \in \mathbf x_2^*}\sum_{e \in S} (y^*_e - \hat{y}_e)^+\\
        &\leq \frac{1}{1-\alpha} \sum_{e \in \mathcal U} (y^*_e - \hat{y}_e)^+\\
        &= \frac{1}{1-\alpha} \lVert (\mathbf y^* - \hat{\mathbf y})^+ \rVert_1,
    \end{aligned}
\end{equation}
where the third inequality holds as we assumed that for each $e$, $\mathbf x^*$ covers $e$ by at most one set. Now, recall that $A$ is an online algorithm which receives some subset of $X'$ as an online sequence, and call the indicator function of this subset $\boldsymbol \sigma$. Define $\textsc{opt}_{\boldsymbol \sigma}$ as the cost of an offline optimal running on $\boldsymbol \sigma$. Observe that $\textsc{opt}_{\boldsymbol \sigma} \leq \sum_{S \in \mathbf x_2^*} c(S)$ as all $e_t $ in $\boldsymbol \sigma$ will be covered by $\mathbf x_2^*$, making it a feasible solution. As $A$ is run on the instance $\boldsymbol \sigma$ and is $\mathcal R$-competitive, we know $\mathbb{E}[\text{cost}(A)] \leq \mathcal R \cdot \textsc{opt}_{\boldsymbol \sigma}$. Chaining these inequalities together, we find
\begin{equation}
\begin{aligned}
    \text{Type 2 costs} &= \mathbb{E}[\text{cost}(A(\boldsymbol \sigma))]\\
        &\leq \mathcal R \cdot \textsc{opt}_{\boldsymbol \sigma}\\
        &\leq \mathcal R \cdot \sum_{S \in \mathbf x^*_2} c(S)\\
        &\leq \frac{\mathcal R}{1-\alpha} \lVert (\mathbf y^* - \hat{\mathbf y})^+ \rVert_1. \qedhere
\end{aligned}
\end{equation}
\end{proof}

\lemoptimaldual*
\begin{proof}
    
    For the purposes of this proof, let $\textsc{alg}$ denote Algorithm \ref{alg:optimaldualalgorithm}. We show for any $S \in \mathcal S$ that if \textsc{alg} is run on $\mathcal S |_S := \{S' \in \mathcal S : S' \subseteq S \}$, then \textsc{alg} calculates the optimal dual on this subproblem. Let $\textsc{alg}(S)$ denote the objective value of running $\textsc{alg}$ on the instance $\mathcal S|_S$. We proceed by induction on the height of $S$ in $\mathcal T$, denoted $h(S)$.

    For the base case when $h(S) = 0$, we are considering the leaves of $\mathcal T$. When \textsc{alg} is run on $\mathcal S |_S$, the algorithm will simply increase the variables $y_e$ with $e \in S \cap X$ until the objective value of $c(S)$ is reached. If $S \cap X = \varnothing$, then the algorithm will stop and the objective will simply be $0$. Either scenario is clearly optimal, concluding the base case.

    Consider some $S$, assume that the claim holds for heights less than $h(S)$. Now, if we run \textsc{alg} on $\mathcal S|_S$, then we consider two cases:
    
    \textbf{Case 1: }if during the running of the algorithm $S$ becomes saturated, then the objective value $c(S)$ is reached, which is optimal dual to the dual constraint that $\sum_{e \in S} y_e \leq c(S)$. 
    
    \textbf{Case 2:} suppose $S$ never becomes saturated during the running of \textsc{alg} on $\mathcal S|_S$. Then, the dual constraint for $S$ is never tight, meaning the algorithm is equivalent to running $\textsc{alg}$ on each of the children of $S$ independently. Let $S_1, ..., S_c$ denote the children of $S$, hence $\textsc{alg}(S) = \sum_{i=1}^c \textsc{alg}(S_i)$. Let $\mathbf y$ be any feasible dual solution of $\mathcal S|_S$, then 
    \begin{equation}
        \sum_{e \in S \cap X} y_e = \sum_{i =1 }^c \sum_{e \in S_i \cap X} y_e \leq \sum_{i=1}^c \textsc{alg}(S_i) = \textsc{alg}(S)
    \end{equation}
    and therefore $\textsc{alg}(S)$ is also optimal for this case.
\end{proof}

Let $\mathbf y^{\textsc{alg}}(X)$ denote the result of Algorithm \ref{alg:optimaldualalgorithm} on problem instance $X$. Our next result shows that $\mathbf y^{\textsc{alg}}(X)$ is stable under perturbations of $X$, which is vital for the learnability of the optimal dual prediction.

\lemstabilityresult*
    Algorithm \ref{alg:optimaldualalgorithm} satisfies $\lVert \mathbf y^{\textsc{alg}}(X) - \mathbf y^{\textsc{alg}}(X') \rVert_1 \leq 2 \alpha |X \Delta X'|$, where $X \Delta X' := (X - X') \cup (X' - X)$ denotes the symmetric difference, and $\beta := \max_{e \in X \Delta X'} \min_{S: e \in S} c(S)$.
\begin{proof}
    We first consider the case that $X' = X \cup \{ e\}$ for some $e \in \mathcal U - X$, and seek to show that $\lVert \mathbf y^{\textsc{alg}}(X) - \mathbf y^{\textsc{alg}}(X') \rVert_1 \leq 2 c(S^e)$, where $S^e$ is the leaf node in $\mathcal T$ containing $e$. Let $\mathcal U = S_0, S_1, ..., S_d = S^e$ be the path from the root to $S^e$ in $\mathcal T$, and let $\delta_i = \sum _{e \in S_i}[y^{\textsc{alg}}(X') - y^{\textsc{alg}}(X)]_e$ for $1 \leq i \leq d$ be the change in objective value between the two solutions. Observe that $\delta_i \geq 0$, as adding an extra element $e$ cannot decrease the objective value of each subproblem.

    We decompose $\lVert \mathbf y^{\textsc{alg}}(X) - \mathbf y^{\textsc{alg}}(X') \rVert_1 = \sum_{i=0}^d s_i$, where we charge cost associated with a change in the $f$th coordinate, $\lvert \mathbf y^{\textsc{alg}}(X)_f - \mathbf y^{\textsc{alg}}(X')_f \rvert$, to the deepest $S_i$ on the $\mathcal U$-$S^e$ path with $f \in S_i$. The crucial observation is that for $0 \leq i \leq d-1$, we have $s_i \leq \delta_{i+1} - \delta_{i}$, as this quantity represents the amount of mass that dual variables increased after $y_e$ lose due to the addition of $e$ in $X'$. Additionally, $\delta_d \leq s_d$ is clear - the increase in objective value at $\delta_d$ can only happen alongside a corresponding increasing in $s_d$, as all dual variables corresponding to $S^e$ are increased uniformly. Therefore, we have that $\lVert \mathbf y^{\textsc{alg}}(X) - \mathbf y^{\textsc{alg}}(X') \rVert_1 = \sum_{i=0}^d s_i \leq s_d +\delta_d - \delta_0 \leq 2 s_d$.

    The final step is to establish the bound $s_d \leq c(S^e)$. Let $n$ be the number of elements in $S^e \cap X$. If $n = 0$, then the only non-zero $\mathbf y^{\textsc{alg}}(X')_f$ for $f \in S_d$ occurs at $f=e$, as $X' \cap S^e = \{e \}$. Therefore, $s_d \leq c(S^e)$. On the other hand, for $n \geq 1$, the final value of $\mathbf y^{\textsc{alg}}(X')_e$ is at most $\frac{c(S^e)}{n+1}$, and at worst this amount of dual mass is taken from the elements of $X \cap S^e$. Therefore, $s_d \leq \frac{2 c(S^e)}{n+1} \leq c(S^e)$, thereby proving the claim.

    The fully general case specified by the Lemma follows by repeatedly applying this claim to all elements $e \in X \Delta X'$ and taking the maximum $c(S^e) = \min_{S:e \in S} c(S)$.
    \end{proof}

\subsection{Learnability} \label{subsec:laminarlearnability}

In this section, we show that the optimal dual prediction is efficiently PAC-learnable. We fix $\mathcal U$, $\mathcal S$ and $c : \mathcal S \rightarrow \mathbb R_{\geq 0}$. We let $\boldsymbol \sigma \in \{0,1 \}^\mathcal{U}$ encode $X \subseteq \mathcal U$, where $\sigma_e = 1$ indicates that $e \in X$. Now, fix a probability distribution $\mathcal D$ over the set of problem instances $\boldsymbol \sigma$. The performance of our learning-augmented algorithm, when using prediction $\hat{\mathbf y}$ when $\mathbf y^*$ is a true optimal, has cost bounded above in terms of $\lVert \hat{\mathbf y} - \mathbf y^* \rVert_1$. Therefore, we define the loss function
\begin{equation}
    L(\hat{\mathbf y}, \boldsymbol \sigma) = \lVert \hat{\mathbf y} - \mathbf y^*(\boldsymbol \sigma) \rVert_1
\end{equation}

and seek to find a prediction $\hat{\mathbf y}$ minimizing $\mathbb E_{\boldsymbol \sigma \sim \mathcal D}[L(\hat{\mathbf y}, \boldsymbol \sigma)]$. Let $\hat {\mathbf y}^* = \arg \min_{\mathbf y} [\mathbb{E}_{\boldsymbol \sigma \sim \mathcal D}L(\mathbf y, \boldsymbol \sigma)]$. 

\begin{restatable}{theorem}{thmlearnability} \label{thm:learnability}
    There is a learning algorithm that given $\epsilon, \delta > 0$ returns a dual prediction $\hat {\mathbf y}$ such that $\mathbb{E}_{\boldsymbol \sigma \sim \mathcal D}[L(\hat {\mathbf y}, \boldsymbol \sigma)] \leq \mathbb{E}_{\boldsymbol \sigma \sim \mathcal D}[L(\hat{\mathbf y}^*, \boldsymbol \sigma)]+ \epsilon$ with probability at least $1-\delta$. The algorithm has sampling complexity $O\left (\left(\frac{\beta}{\epsilon}\right)^2 (n \log n + \log(1/\delta)) \right)$ where $\beta := \max_{e \in \mathcal U} \min_{S: e \in S} c_S$, $n = |\mathcal U|$ and time complexity polynomial in $n, \frac{1}{\epsilon}, \frac{1}{\delta}$.
\end{restatable}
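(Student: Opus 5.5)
We learn by empirical risk minimization. Draw $N$ i.i.d. samples $\boldsymbol\sigma^1,\dots,\boldsymbol\sigma^N\sim\mathcal D$, compute for each the canonical optimal dual $\mathbf y^*(\boldsymbol\sigma^i):=\mathbf y^{\textsc{alg}}(X_i)$ with Algorithm~\ref{alg:optimaldualalgorithm} (polynomial time), and output the coordinatewise median $\hat{\mathbf y}_e := \operatorname{median}_i\, y^*(\boldsymbol\sigma^i)_e$. The $\ell_1$ loss is separable over coordinates, so the empirical loss $\frac1N\sum_i L(\mathbf y,\boldsymbol\sigma^i)$ is minimized exactly by this median. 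The rest of the argument is a uniform convergence bound over a bounded hypothesis class.

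\textbf{Step 1 (bounded range).} Each coordinate of any feasible dual satisfies $0\le y_e\le \min_{S\ni e} c_S\le\beta$, by the dual constraint of the cheapest set containing $e$. So every target vector $\mathbf y^*(\boldsymbol\sigma)$ lies in $[0,\beta]^n$. Projecting any candidate onto this box only decreases every coordinate loss, so we may restrict the hypothesis class to $H=[0,\beta]^n$. For $\mathbf y\in H$ we have $0\le L(\mathbf y,\boldsymbol\sigma)\le n\beta$.

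\textbf{Step 2 (uniform convergence).} Take an $\epsilon'$-net of $H$ in $\ell_1$. A grid of spacing $\epsilon'/n$ per coordinate has $(n\beta/\epsilon')^n$ points, giving log-cardinality $n\log(n\beta/\epsilon')$. Since $L$ is $1$-Lipschitz in $\mathbf y$ under $\ell_1$, it suffices to apply Hoeffding's inequality with a union bound over the net and then round. Hoeffding for a range-$R$ loss needs $N=O\!\left((R/\epsilon)^2(\log|\text{net}|+\log(1/\delta))\right)$ samples. This yields the claimed $n\log n+\log(1/\delta)$ dependence provided the effective range is $O(\beta)$ rather than $n\beta$.

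\textbf{Main obstacle (controlling the range).} Treating the loss as a single $[0,n\beta]$-valued variable would give an extra factor of $n^2$. I would first try to avoid this by applying the concentration argument per coordinate. Each coordinate loss $|y_e-y^*(\boldsymbol\sigma)_e|$ lies in $[0,\beta]$, and its class is one-dimensional, so it needs only a net of size $O(n\beta/\epsilon)$. The per-coordinate excess risks must then be combined. To keep the total excess at $\epsilon$ rather than $n\epsilon$, I would either normalize the loss (the intended meaning of $\epsilon$ relative to the $n$-dimensional scale), or exploit that the total dual mass is bounded and concentrates. Failing that, I would use a pseudo-dimension bound: the class $\{\boldsymbol\sigma\mapsto\|\mathbf y-\mathbf y^*(\boldsymbol\sigma)\|_1\}$ is a sum of $n$ piecewise-linear functions in $\mathbf y$, so its pseudo-dimension is $O(n\log n)$. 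The standard bound with a $\beta$-scaled range then gives $O((\beta/\epsilon)^2(n\log n+\log(1/\delta)))$. This matches the statement, with the $n\log n$ term coming from the pseudo-dimension.

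Finally, the standard ERM argument applies. With probability $1-\delta$, every $\mathbf y\in H$ has empirical loss within $\epsilon/2$ of its true loss. Hence the true loss of $\hat{\mathbf y}$ is at most its empirical loss plus $\epsilon/2$, which is at most the empirical loss of $\hat{\mathbf y}^*$ plus $\epsilon/2$, which in turn is at most the true loss of $\hat{\mathbf y}^*$ plus $\epsilon$. The running time is $N$ calls to Algorithm~\ref{alg:optimaldualalgorithm} plus $n$ median computations, which is polynomial.
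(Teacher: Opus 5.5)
Your plan is the paper's own proof: coordinatewise-median ERM over the samples' optimal duals, a pseudo-dimension bound of $O(n\log n)$ (the paper takes it from \citet{DinitzILMV21}), the uniform-convergence lemma, and the usual ERM chain. The gap is exactly the step you flag as the main obstacle, and your fallback does not close it. A pseudo-dimension bound controls only the complexity term, not the range $H$ in the $(H/\epsilon)^2$ factor. For $\mathbf y\in[0,\beta]^n$ the loss $\lVert\mathbf y-\mathbf y^*(\boldsymbol\sigma)\rVert_1$ can be as large as $n\beta$ (see the instance below), so the lemma only yields $O\bigl((n\beta/\epsilon)^2(n\log n+\log(1/\delta))\bigr)$. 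Your other two fixes fail as well: normalizing the loss changes the statement, and bounded dual mass only gives $\lVert\mathbf y^*(\boldsymbol\sigma)\rVert_1\le\min(c_{\mathcal U},n\beta)$. The paper's proof makes precisely this slip without flagging it. It asserts that every $h\in\mathcal H$ maps into $[0,\beta]$ because each dual \emph{coordinate} lies in $[0,\beta]$. Its class is also indexed by all of $\mathbb R^n$, so your projection onto the box is needed in any case.

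No argument can close the gap, because the claimed rate is false in general. Let each $e$ be a singleton leaf of cost $\beta$ under a root of cost at least $n\beta$, so that $\mathbf y^*(\boldsymbol\sigma)=\beta\boldsymbol\sigma$. Let the $\sigma_e$ be independent with $\Pr[\sigma_e=1]=\tfrac12+s_e\gamma$, where $s_e\in\{\pm1\}$. The optimal prediction then has $\hat y^*_e=\beta$ if $s_e=1$ and $\hat y^*_e=0$ otherwise, and the excess risk of $\mathbf y\in[0,\beta]^n$ is $2\gamma\sum_e\lvert y_e-\hat y^*_e\rvert$. Take $\gamma\asymp1/\sqrt N$ and $s$ uniform. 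Any learner then places a constant fraction of the $y_e$ on the wrong side of $\beta/2$, so the excess is $\Omega(n\beta/\sqrt N)$ with constant probability. For $\delta$ below a fixed constant, excess at most $\epsilon$ therefore forces $N=\Omega((n\beta/\epsilon)^2)$, which exceeds the claimed bound for large $n$. What is provable is what your per-coordinate idea gives. Each class $\{z\mapsto\lvert y-z\rvert: y\in[0,\beta]\}$ has range $\beta$ and pseudo-dimension at most $2$. Demanding accuracy $\epsilon/(2n)$ per coordinate, with a union bound over coordinates, gives $N=O\bigl((n\beta/\epsilon)^2\log(n/\delta)\bigr)$. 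This is better than the corrected global bound and tight up to the logarithm. Alternatively, the stated rate does hold for the normalized loss $L/n$, whose range is $\beta$. It fails, however, for $L$ itself, which is the loss Theorem~\ref{thm:supertheorem} relies on.
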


\begin{proof}
For every dual solution $\mathbf y$, we let $g_{\mathbf y}(\boldsymbol \sigma) = L(\mathbf y, \boldsymbol \sigma)$, and define $\mathcal H = \{g_{\mathbf y} : \mathbf y \in \mathbb R^T \}$. We use several concepts from statistical learning theory to prove Theorem \ref{thm:learnability}, namely that to prove learnability it is sufficient to upper-bound the \emph{pseudo-dimension} of $\mathcal H$.

\begin{definition}[Pseudo-dimension]
    Let $\mathcal H$ be a class of functions $h : X \rightarrow \mathbb R$ and $S = \{x_1, ..., x_s \} \subseteq X$. Then $S$ is shattered by $\mathcal H$ if there exists real numbers $r_1, ..., r_s$ so that for all $S' \subseteq S$ there is a function $h \in \mathcal H$ such that $f(x_i) \leq r_i$ iff $x_i \in S'$ for all $i$. The pseudo-dimension is the largest $s$ such that there exists an $S \subseteq X$ with $|S|=s$ which is shattered by $\mathcal H$.
\end{definition}

The relevance of pseudo-dimension comes from the following convergence result.
\begin{lemma}[\cite{sltref1, sltref2, sltref3}] \label{lem:statisticallearningtheoryres}
    Let $\mathcal D$ be a distribution over domain $X$ and $\mathcal H$ be a class of functions $h : X \rightarrow [0, H]$ with pseudo-dimension $d_\mathcal H$. For any $\epsilon, \delta > 0$, if we sample i.i.d $\mathbf x_1, ..., \mathbf x_s \sim \mathcal D$ with $s \geq c_0 (\frac{H}{\epsilon})^2 (d_{\mathcal H} + \ln(1/\delta))$ it holds that

    \begin{equation}
        \Bigg \lvert \frac{1}{s} \sum_{i=1}^s h(\mathbf x_i) -\mathbb{E}_{\mathbf x \sim \mathcal D} [h(\mathbf x)]\Bigg \lvert \leq \epsilon
    \end{equation}

    for all $h \in \mathcal H$ with probability at least $1 - \delta$, for some universal constant $c_0$.
\end{lemma}

Using results from \cite{DinitzILMV21}, it is known that the pseudo-dimension of $\mathcal H$ is given by $d_\mathcal H = O(T \log T)$. With this, we are ready to prove Theorem \ref{thm:learnability}.

    Our learning algorithm works as follows. For fixed $\delta, \epsilon > 0$, we choose $s$ according to Lemma \ref{lem:statisticallearningtheoryres} so that the error rate is $\epsilon' = \frac{\epsilon}{2}$, and sample $\boldsymbol \sigma _1, ..., \boldsymbol \sigma _s \sim \mathcal D$ independently. Observe that functions $h \in \mathcal H$ map within the range $[0, \beta]$, because non-negativity is a dual constraint, and no dual variable can exceed $\beta$ due to the dual constraints. As additionally $d_\mathcal H = O(n \log n)$, the sampling complexity follows.

    For each $\boldsymbol \sigma_i$, we calculate $\mathbf y^*( \boldsymbol \sigma_i)$ using the optimal dual algorithm. We then choose $\hat {\mathbf y} = \arg \min_{\mathbf y} \sum_{i=1}^s L(\mathbf y,  \boldsymbol \sigma_i)$. As we use the $L_1$ norm, this can be done by choosing the median value for each coordinate, which can clearly be done in time polynomial in $s$ and $T$. Finally, with probability at least $1-\delta$ we observe that

    \begin{equation}
        \begin{aligned}
            \mathbb E_{\boldsymbol \sigma \sim \mathcal D}[L(\hat{\mathbf y}, \boldsymbol \sigma)] &\leq \mathbb E_{\boldsymbol \sigma \sim \mathcal D}[L(\hat{\mathbf y}, \boldsymbol \sigma)] - \frac{1}{s}\sum_{i=1}^s L(\hat{\mathbf y}, \boldsymbol \sigma) +  \frac{1}{s}\sum_{i=1}^s L(\hat {\mathbf y}^*, \boldsymbol \sigma)\\
            &\leq \frac{\epsilon}{2} + \mathbb E_{\boldsymbol \sigma \sim \mathcal D}[L(\hat {\mathbf y}^*, \boldsymbol \sigma)] + \frac{\epsilon}{2}\\
            &= \mathbb E_{\boldsymbol \sigma \sim \mathcal D}[L(\hat {\mathbf y}^*, \boldsymbol \sigma)] + \epsilon
        \end{aligned}
    \end{equation}
\end{proof}

\subsection{Culminating Result} \label{sec:culmination}

In this section, we combine the competitiveness result, the stability result, and the learnability result into one ultimate theorem.

\begin{restatable}{theorem}{supertheorem} \label{thm:supertheorem}
    Let $\mathcal D$ be a distribution, where online sequences $\boldsymbol \sigma \sim \mathcal D$ are drawn independently, and let $\operatorname{Var}(\boldsymbol \sigma)$ denote the total variance of $\boldsymbol \sigma$, i.e., the trace of the covariance matrix. There is a polynomial-time learning algorithm that, given $O(\lvert \mathcal U\rvert \log \lvert\mathcal U \rvert)$ samples from $\mathcal D$, returns a dual laminar set cover prediction $\hat {\mathbf y}$ such that $\hat{\mathbf y}$ can be given to a learning-augmented algorithm, satisfying
    \begin{equation*}
        \mathbb E_{\boldsymbol \sigma \sim \mathcal D}[\textsc{alg}]{} = O\left(\mathbb E_{\boldsymbol \sigma \sim \mathcal D}[\textsc{opt}] + \mathcal R \cdot  \beta \operatorname{Var}(\boldsymbol \sigma){}  \right)
    \end{equation*}

    where $\beta := \max_{e \in \mathcal U} \min_{S: e \in S} c_S$ and $\mathcal R$ is the competitive ratio for online laminar set cover.
\end{restatable}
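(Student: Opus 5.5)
Combine the three ingredients: usefulness (Theorem~\ref{lem:lemdualboundlaminar}), learnability (Theorem~\ref{thm:learnability}) and stability (Theorem~\ref{lem:stabilityresult}). Fix $\alpha = 1/2$. Applying Theorem~\ref{lem:lemdualboundlaminar} with the optimal dual $\mathbf y^*(\boldsymbol\sigma) := \mathbf y^{\textsc{alg}}(\boldsymbol\sigma)$ computed by Algorithm~\ref{alg:optimaldualalgorithm}, and bounding both positive parts by the full $\ell_1$ norm, gives for every instance
\begin{equation*}
\mathbb E[\textsc{alg}(\boldsymbol\sigma)] \le 2\,\textsc{opt}(\boldsymbol\sigma) + (2 + 2\mathcal R)\,\lVert \hat{\mathbf y} - \mathbf y^*(\boldsymbol\sigma)\rVert_1 = 2\,\textsc{opt}(\boldsymbol\sigma) + O(\mathcal R)\, L(\hat{\mathbf y},\boldsymbol\sigma).
\end{equation*}
Taking expectation over $\boldsymbol\sigma\sim\mathcal D$ reduces the theorem to showing $\mathbb E_{\boldsymbol\sigma}[L(\hat{\mathbf y},\boldsymbol\sigma)] = O(\beta \operatorname{Var}(\boldsymbol\sigma)) + (\text{small additive term})$ for the learned $\hat{\mathbf y}$.

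Next, invoke Theorem~\ref{thm:learnability} with constant $\delta$ and $\epsilon$ of order $\beta$ (or of order the target quantity). With $O(|\mathcal U|\log|\mathcal U|)$ samples, it returns $\hat{\mathbf y}$ whose expected loss is within $\epsilon$ of the best fixed prediction $\hat{\mathbf y}^*$. To bound $\mathbb E[L(\hat{\mathbf y}^*,\boldsymbol\sigma)]$, compare against a specific candidate. The natural choice is $\mathbf y^*(\boldsymbol\sigma')$ for an independent copy $\boldsymbol\sigma'\sim\mathcal D$: by averaging, some realization $\boldsymbol\sigma'$ satisfies
\begin{equation*}
\mathbb E_{\boldsymbol\sigma}[L(\mathbf y^*(\boldsymbol\sigma'),\boldsymbol\sigma)] \le \mathbb E_{\boldsymbol\sigma,\boldsymbol\sigma'}\lVert \mathbf y^*(\boldsymbol\sigma')-\mathbf y^*(\boldsymbol\sigma)\rVert_1.
\end{equation*}
Theorem~\ref{lem:stabilityresult} bounds the right-hand side by $2\beta\,\mathbb E|X\Delta X'| = 2\beta\,\mathbb E\lVert\boldsymbol\sigma-\boldsymbol\sigma'\rVert_1$. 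Since the vectors are $0/1$, $\lVert\boldsymbol\sigma-\boldsymbol\sigma'\rVert_1 = \lVert\boldsymbol\sigma-\boldsymbol\sigma'\rVert_2^2$, and the expectation of this equals $2\operatorname{tr}\operatorname{Cov}(\boldsymbol\sigma) = 2\operatorname{Var}(\boldsymbol\sigma)$ for i.i.d.\ copies. Hence $\mathbb E[L(\hat{\mathbf y}^*,\boldsymbol\sigma)] \le 4\beta\operatorname{Var}(\boldsymbol\sigma)$.

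The main obstacle is the additive learning error $\epsilon$. It must be absorbed into the right-hand side, but with $O(|\mathcal U|\log|\mathcal U|)$ samples, Theorem~\ref{thm:learnability} only guarantees $\epsilon$ of order $\beta$. That is not obviously dominated by $\mathbb E[\textsc{opt}] + \beta\operatorname{Var}$ when the variance is tiny.

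I would first handle this by a case split. If $\mathcal D$ is not almost deterministic, then $\operatorname{Var}(\boldsymbol\sigma)$ is bounded below by a constant and $\epsilon = O(\beta)$ is absorbed. Otherwise, $\mathbb E[\textsc{opt}]$ covers it, since any nonempty instance has $\textsc{opt} \ge \min_{S\ni e} c_S$. If that fails, I would fall back on the robust combination mentioned in the paper, so the cost never exceeds $\mathcal R\cdot\textsc{opt}$. Alternatively, I would use the fact that the coordinatewise median of samples is close to the population median directly, and charge its deviation to the variance. Assembling the bounds then yields $\mathbb E[\textsc{alg}] = O(\mathbb E[\textsc{opt}] + \mathcal R\beta\operatorname{Var}(\boldsymbol\sigma))$.
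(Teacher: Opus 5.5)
Your main chain is the paper's own argument: usefulness with a constant $\alpha$, the learnability guarantee relative to $\hat{\mathbf y}^*$, and a comparison of $\hat{\mathbf y}^*$ with a fixed candidate plus Theorem~\ref{lem:stabilityresult}, giving $\mathbb E[L(\hat{\mathbf y}^*,\boldsymbol\sigma)]\le 4\beta\operatorname{Var}(\boldsymbol\sigma)$. The paper uses the mean $\mathbb E[\mathbf y^*(\boldsymbol\sigma_2)]$ and Jensen, while you average over realizations; that difference is immaterial.

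The gap is where you put it, the additive $\mathcal R\epsilon$, and none of your patches closes it.
\begin{itemize}
\item Take $\operatorname{Var}(\boldsymbol\sigma)=0$, e.g.\ $\mathcal D$ a point mass. Absorbing $\mathcal R\epsilon$ then needs $\mathcal R\epsilon=O(\mathbb E[\textsc{opt}])$. But $\epsilon$ is of order $\beta$, while $\textsc{opt}$ can be as small as $\min_e\min_{S\ni e}c_S\ll\beta$, and there is no spare factor $\mathcal R$ on $\textsc{opt}$. So ``$\mathbb E[\textsc{opt}]$ covers it'' fails.
\item Your dichotomy (variance bounded below by a constant vs.\ almost deterministic) also misses intermediate regimes such as $\operatorname{Var}=\Theta(1/\mathcal R)$.
\item The robust combination only yields $O(\mathcal R\cdot\textsc{opt})$. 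It can pay for a failure event of probability $\delta=1/\mathcal R$, which is what the paper tacitly relies on, but not for the additive $\epsilon$ on the success event.
\item Closeness of the empirical median to the population median is the wrong target. For a coordinate equal to $0$ or $\beta$ with probabilities near $1/2$, the two medians can be $\beta$ apart with constant probability even though their losses are nearly equal.
\end{itemize}
The paper's own proof does not resolve this either. Its final display still contains $\mathcal R\epsilon$, which is silently dropped from the statement. Moreover, the range $[0,\beta]$ it uses for $L$ in the uniform-convergence step is too small, since $L$ sums over $|\mathcal U|$ coordinates, so the $\epsilon$ achievable with $O(|\mathcal U|\log|\mathcal U|)$ samples is even worse.

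The fix is to analyse the same learner (the coordinatewise median, i.e.\ the $\ell_1$ empirical risk minimiser) without uniform convergence. For reals $a_1,\dots,a_s$ with median $m$ and any $b$, since $m$ minimises $\sum_i|m-a_i|$,
\[
s\,|m-b|\le\sum_{i=1}^s\bigl(|m-a_i|+|a_i-b|\bigr)\le 2\sum_{i=1}^s|a_i-b| .
\]
Apply this coordinatewise with $a_i=\mathbf y^{\textsc{alg}}(\boldsymbol\sigma_i)_e$ and $b=\mathbf y^{\textsc{alg}}(\boldsymbol\sigma)_e$. Then take expectation over i.i.d.\ $\boldsymbol\sigma_1,\dots,\boldsymbol\sigma_s,\boldsymbol\sigma$ and use stability exactly as you did. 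This gives
\[
\mathbb E[L(\hat{\mathbf y},\boldsymbol\sigma)]\le 2\,\mathbb E\lVert\mathbf y^{\textsc{alg}}(\boldsymbol\sigma_1)-\mathbf y^{\textsc{alg}}(\boldsymbol\sigma)\rVert_1\le 8\beta\operatorname{Var}(\boldsymbol\sigma)
\]
for every $s\ge 1$, with no $\epsilon$ at all. Combined with your per-instance usefulness bound ($\alpha=1/2$), this yields $\mathbb E[\textsc{alg}]\le 2\,\mathbb E[\textsc{opt}]+16(1+\mathcal R)\beta\operatorname{Var}(\boldsymbol\sigma)$ in expectation over the training sample. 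By Markov's inequality, the same bound holds with probability at least $1/2$ over the sample, at the cost of doubling the variance term.
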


\begin{proof}
    One way to phrase Theorem \ref{lem:lemdualboundlaminar} is that we have the bound
    \begin{equation}
        \mathbb E[\textsc{alg}] = O\left(\textsc{opt}+ \mathcal R\cdot  L(\hat{\mathbf y}, \boldsymbol \sigma)\right)
    \end{equation}

    for any instance $\boldsymbol \sigma$, where the expectation is taken over the randomization of $\textsc{alg}$. By additionally randomizing over the $\boldsymbol \sigma \sim \mathcal D$, we obtain
    \begin{equation}
        \mathbb E_{\boldsymbol \sigma \sim \mathcal D}[\textsc{alg}] = O\left(\mathbb E_{\boldsymbol \sigma \sim \mathcal D}[\textsc{opt}]+\mathcal R \cdot{\mathbb{E}_{\boldsymbol \sigma \sim \mathcal D}[L(\hat{\mathbf y}, \boldsymbol \sigma)]}\right)
    \end{equation}

    By Theorem \ref{thm:learnability}, there is a learning algorithm which can achieve a dual laminar prediction $\hat{\mathbf y}$, which with probability $1-\delta$ can achieve $\mathbb{E}_{\boldsymbol \sigma \sim \mathcal D}[L(\hat {\mathbf y}, \boldsymbol \sigma)] \leq \mathbb{E}_{\boldsymbol \sigma \sim \mathcal D}[L(\hat{\mathbf y}^*, \boldsymbol \sigma)]+ \epsilon$ with the desired sampling complexity. Therefore, if we use $\hat {\mathbf y}$ as the dual prediction in Theorem \ref{thm:learnability}, we obtain a bound 
    \begin{equation}
        \mathbb E_{\boldsymbol \sigma \sim \mathcal D}[\textsc{alg}] = O\left(\mathbb E_{\boldsymbol \sigma \sim \mathcal D}[\textsc{opt}]+\mathcal R \left( {\mathbb{E}_{\boldsymbol \sigma \sim \mathcal D}[L(\hat{\mathbf y}^*, \boldsymbol \sigma)]} + \epsilon\right) \right)
    \end{equation}
    
    We now seek to bound $\mathbb{E}_{\boldsymbol \sigma \sim \mathcal D}[L(\hat{\mathbf y}^*, \boldsymbol \sigma)]$ in terms of something more common. Firstly, observe that for fixed $x$ and random variable $Y$, by convexity of $\lVert \cdot \rVert_1$ Jensen's inequality yields

    \begin{equation} \label{ineq:jensens}
        \lVert x-\mathbb E_{Y}[Y] \rVert_1 = \lVert \mathbb E_{Y}[x-Y] \rVert_1 \leq \mathbb E_{Y}[\lVert x-Y \rVert_1] 
    \end{equation}
    
    Recall that $\hat{\mathbf y}^*$ is defined as the vector minimising $\mathbf y \mapsto \mathbb{E}_{\boldsymbol \sigma \sim \mathcal D}[L(\mathbf y, \boldsymbol \sigma)]$. Thus we get

    \begin{equation} \label{ineq:lossbound}
    \begin{aligned}
        \mathbb{E}_{\boldsymbol \sigma_1 \sim \mathcal D}[L(\hat{\mathbf y}^*, \boldsymbol \sigma_1)] &\leq \mathbb{E}_{\boldsymbol \sigma_1 \sim \mathcal D}[L(\mathbb{E}_{\boldsymbol \sigma_2 \sim \mathcal D}[\mathbf y^*(\boldsymbol \sigma_2)], \boldsymbol \sigma_1)]\\
        &= \mathbb{E}_{\boldsymbol \sigma_1 \sim \mathcal D}[\lVert \mathbf y^*(\boldsymbol \sigma_1) -\mathbb{E}_{\boldsymbol \sigma_2 \sim \mathcal D}[\mathbf y^*(\boldsymbol \sigma_2)] \rVert_1]\\
        &\leq \mathbb{E}_{\boldsymbol \sigma_1 \sim \mathcal D}\mathbb{E}_{\boldsymbol \sigma_2 \sim \mathcal D}[\lVert \mathbf y^*(\boldsymbol \sigma_1) -\mathbf y^*(\boldsymbol \sigma_2) \rVert_1] &\text{By Ineq. \eqref{ineq:jensens}} \\
        &\leq 2 \beta \cdot \mathbb{E}_{\boldsymbol \sigma_1 \sim \mathcal D}\mathbb{E}_{\boldsymbol \sigma_2 \sim \mathcal D}[\lVert \boldsymbol \sigma_1 -\boldsymbol \sigma_2 \rVert_1] &\text{By Lemma \ref{lem:stabilityresult}}
    \end{aligned}
    \end{equation}

    Let's now consider $\mathbb{E}_{\boldsymbol \sigma_1 \sim \mathcal D}\mathbb{E}_{\boldsymbol \sigma_2 \sim \mathcal D}[\lVert \boldsymbol \sigma_1 -\boldsymbol \sigma_2 \rVert_1] = \sum_{t=1}^T\mathbb{E}_{\boldsymbol \sigma_1 \sim \mathcal D}\mathbb{E}_{\boldsymbol \sigma_2 \sim \mathcal D}[\lvert  \sigma_{1t} - \sigma_{2t} \rvert]$, where $\sigma_{it}$ denotes the $t$th component of $\boldsymbol \sigma_i$. Observe that because $\sigma_{it} \in \{0,1 \}$, we have $\lvert  \sigma_{1t} - \sigma_{2t} \rvert$ equal to $1$ if and only if $\sigma_{1t} \neq \sigma_{2t}$. Let $p_t = \mathbb{P}_{\boldsymbol \sigma \sim \mathcal D}[\sigma_t = 1]$. Then, the probability that $\sigma_{1t} \neq \sigma_{2t}$ equals $2p_t(1-p_t)$, because either $\sigma_{1t} = 1$ and $\sigma_{2t}=0$, or $\sigma_{1t} = 0$ and $\sigma_{2t}=1$. Therefore $\mathbb{E}_{\boldsymbol \sigma_1 \sim \mathcal D}\mathbb{E}_{\boldsymbol \sigma_2 \sim \mathcal D}[\lvert  \sigma_{1t} - \sigma_{2t} \rvert] = 2p_t(1-p_t)$. Observe that because $\operatorname{Var}_{\boldsymbol \sigma \sim \mathcal D}[\sigma_t]= p_t(1-p_t) $, we have \[\mathbb{E}_{\boldsymbol \sigma_1 \sim \mathcal D}\mathbb{E}_{\boldsymbol \sigma_2 \sim \mathcal D}[\lVert \boldsymbol \sigma_1 -\boldsymbol \sigma_2 \rVert_1] = 2 \sum_{t=1}^T\operatorname{Var}_{\boldsymbol \sigma \sim \mathcal D}[\sigma_t] = 2 \operatorname{Var}_{\boldsymbol \sigma \sim \mathcal D}[\boldsymbol \sigma].\] Therefore, by \eqref{ineq:lossbound}, we ultimately find the bound of $\mathbb{E}_{\boldsymbol \sigma_1 \sim \mathcal D}[L(\hat{\mathbf y}^*, \boldsymbol \sigma_1)] \leq 4 \beta \operatorname{Var}[\boldsymbol \sigma]$, which therefore implies with probability $1 - \delta$ that

    \begin{equation}
        \mathbb E_{\boldsymbol \sigma \sim \mathcal D}[\textsc{alg}] = O\left(\mathbb E_{\boldsymbol \sigma \sim \mathcal D}[\textsc{opt}]+\mathcal R \left( \beta \operatorname{Var}[\boldsymbol \sigma] + \epsilon\right) \right)
    \end{equation}

    By choosing $\delta = 1 /  \mathcal R$, the remaining case with probability $\delta$ adds a negligible cost of $\leq \mathbb E_{\boldsymbol \sigma \sim \mathcal D}[\textsc{opt}]$ to the expected cost $\mathbb E_{\boldsymbol \sigma \sim \mathcal D}[\textsc{alg}]$. This also adds a cost of $\ln \mathcal R$ to the sampling complexity, but this is dominated by $ \lvert \mathcal U \rvert \ln \lvert \mathcal U \rvert$, as for any interesting case, we have $\lvert \mathcal U \rvert > \mathcal R$.
\end{proof}

\subsection{Robustness}

We remark that the above results provide no helpful guarantees when the error $\eta$ is large. An often desirable property is for learning-augmented algorithms to be \emph{robust} to large prediction errors, in the sense that the performance is no worse than classical online algorithms. For Laminar Set Cover, this can be achieved by the following Lemma.

\begin{restatable}{lemma}{lemcombination} \label{lem:detcombinationlemma}
    Suppose we have $m$ algorithms, $A_1, ..., A_m$ for laminar set cover, each with competitive ratios $\alpha_1, ..., \alpha_m$. Then there exists an algorithm with the $A_i$ as input and $\textsc{alg}$ denoting its objective value, which for any input sequence satisfies
    \begin{equation} \label{ineq:detcombbound}
        \textsc{alg} \leq m \cdot \min_{1 \leq i \leq m} \{\text{cost}(A_i(I)) \}
    \end{equation}
    and therefore $\textsc{alg}$ is $m \cdot \min_{1 \leq i \leq m} \{\alpha_i \}$-competitive. If each $A_i$ is deterministic, then so is $\textsc{alg}$.
\end{restatable}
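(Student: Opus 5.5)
We will prove Lemma~\ref{lem:detcombinationlemma} with a round-robin doubling (``follow the cheapest'') scheme, which works because covering solutions can always be merged: buying extra sets never breaks feasibility, and every purchase in laminar set cover is irrevocable and monotone. The combined algorithm simulates all $m$ algorithms $A_1,\dots,A_m$ in parallel on the whole input sequence $I$. It ignores their actual purchases and only tracks their simulated costs $C_i(t) := \text{cost}(A_i(e_1,\dots,e_t))$. Each $C_i(t)$ is nondecreasing in $t$, since purchases are irrevocable.

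\textbf{The algorithm.} When a request $e_t$ arrives, the combined algorithm copies the purchases of every $A_i$ whose simulated cost is currently at most $\min_j C_j(t)$. More precisely, we will show the stronger bound $\textsc{alg} \le \sum_i \min\{C_i, \min_j C_j\}$ for a natural variant. The simplest concrete choice is as follows. On each request, buy (as the union) every set purchased so far by every $A_i$ whose cumulative cost satisfies $C_i(t) \le \min_j C_j(T)$, where $T$ is the final time. This threshold is unknown online, so we replace it by the online quantity $M(t) := \min_j C_j(t)$. Concretely: at time $t$, let $i^\ast$ be an index attaining $M(t)$, and buy every set that $A_{i^\ast}$ currently holds. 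Feasibility is immediate, because $A_{i^\ast}$ covers $e_t$ and we hold a superset of its sets.

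\textbf{Cost accounting.} We charge each set the combined algorithm buys to the algorithm $A_i$ that caused it to be bought, at a time when $C_i(t) = M(t) \le M(T)$. Since purchases are monotone, the total cost charged to $A_i$ is at most the value $C_i(t)$ at the last time $t$ at which $A_i$ was the minimizer, and this is at most $M(T) = \min_j \text{cost}(A_j(I))$. Summing over the $m$ algorithms gives $\textsc{alg} \le m \cdot \min_i \text{cost}(A_i(I))$, which is \eqref{ineq:detcombbound}. Taking expectations and applying each $A_i$'s competitive ratio then yields the $m\cdot\min_i\alpha_i$-competitiveness. When all $A_i$ are deterministic, the procedure makes no random choices, so $\textsc{alg}$ is deterministic.

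\textbf{Main obstacle.} The delicate step is the charging argument when the identity of the current minimizer changes over time. We must check that copying $A_{i^\ast}$'s full current set collection never charges $A_{i^\ast}$ more than $C_{i^\ast}$ at its last minimizing time. This holds because sets already bought are never rebought, and $A_{i^\ast}$'s collection only grows. We also need the \emph{final} minimum $M(T)$ to dominate $C_i$ at all earlier minimizing times. This follows because $M(t)$ is nondecreasing in $t$, as a minimum of nondecreasing functions. For randomized $A_i$, there is a further subtlety: the bound should be stated for a fixed realization of their coins. We intend to handle this by applying the argument pointwise and then taking expectations, noting that $\mathbb{E}[\min] \le \min \mathbb{E}$ gives the stated competitive ratio.
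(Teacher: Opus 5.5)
Your proposal is correct and follows essentially the same argument as the paper: copy the current collection of whichever simulated algorithm is currently cheapest, charge each purchase to the algorithm being followed, and bound the total charged to $A_i$ by its cost at the last time it was the minimizer, which is at most the final minimum because $\min_j C_j(t)$ is nondecreasing. The only issues are cosmetic: the ``round-robin doubling'' label is a misnomer, and the advertised ``stronger bound'' $\sum_i \min\{C_i,\min_j C_j\}$ equals $m\min_j C_j$. Your pointwise-then-$\mathbb{E}[\min]\le\min\mathbb{E}$ treatment of randomized $A_i$ spells out a step the paper leaves implicit.
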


\begin{proof}
    \begin{algorithm}[h]
    \caption{Deterministic laminar set cover combination algorithm}\label{alg:detcombinationalgorithm}
    \begin{algorithmic}[1]
        \STATE Initialise online algorithms $A_1, ..., A_m$ for laminar set cover.

        \FOR{times $t = 1, ..., T$}
            \STATE Let $\mathbf x_t^i$ be the online solution produced by algorithm $i$ at time $t$.
            \STATE Follow an algorithm $i$ whose $\mathbf x_t^i$ has minimum objective value.
        \ENDFOR
        
    \end{algorithmic}
    \end{algorithm}

    Algorithm \ref{alg:detcombinationalgorithm} fulfills the requirements of the lemma. For a particular instance $I$, without loss of generality, assume that $A_1$ is the final algorithm which $\textsc{alg}$ follows. Therefore, $\text{cost}(A_1(I)) = \min_{1 \leq i \leq m} \{\text{cost}(A_i(I))\}$. Let $i_t$ denote the index of the algorithm that $\textsc{alg}$ follows on day $t$, and let $\textsc{alg}_t$ denote the cost incurred by the algorithm on day $t$. The main inequality to observe is that, for any $1 \leq i \leq m$, we have
    \begin{equation}
        \sum_{t : i_t = i} \textsc{alg}_t \leq \text{cost}(A_1(I))
    \end{equation}

    This follows, because by definition $\textsc{alg}$ could only follow $A_i$ on days $t$ for which its accumulated cost is at most that of $A_1$ at that time, and because the costs are non-decreasing throughout time, that cost is at most $\text{cost}(A_1)$.
    Therefore, we have the chain of inequalities
    \begin{equation}
        \begin{aligned}
            \textsc{alg} &= \sum_{t=1}^T \textsc{alg}_t\\
            &= \sum_{i=1}^m \sum_{t : i_t = i} \textsc{alg}_t\\
            &\leq \sum_{i=1}^m  \text{cost}(A_1(I))\\
            &= m \cdot \min_{1 \leq i \leq m} \{\text{cost}(A_i(I)). \qedhere
        \end{aligned}
    \end{equation}
\end{proof}

\section{Omitted Results for MTS}
\subsection{Learnability}
\label{sec:MTS-learnability}

In this section we establish that, under our error definition, our prediction class for MTS admits learnability with polynomial sample complexity. We fix a metric space $(M, d)$ and a time horizon $T$. We assume that the cost functions $c_1, \dots, c_T$ are drawn from a joint probability distribution $\mathcal{D}$, which is not known to us. We want to learn \[\predfw \in W := \{ (w_1, \dots, w_T) \mid w_t \in \reals^M \text{ and } w_t \text{ is 1-Lipschitz, for } t=1,\dots,T  \} \] such that, when our algorithm is provided with prediction $\predfw$, its expected cost on an instance drawn from $\mathcal{D}$ is minimized. By writing $\eta(\predfw; c)$ for the error of prediction $\predfw$ on instance $c$, we have $\ALG \leq \OPT + \eta(\predfw; c)$ by Theorem \ref{th:MTSBoundForBellmanError}, and therefore we seek to find $\predfw$ which minimizes $\E_{c \sim \mathcal{D}}[\eta(\predfw; c)]$.

\begin{theorem}
 Let $\mathcal{F} := \{ c \mapsto \eta(w; c) \mid w \in W\}$, and let $n = \vert{M} \vert$ and $D$ be the diameter of $M$. Then,
$\mathcal F$ is learnable with polynomial sample complexity: for a suitable value of $m$, polynomial in $n$, $T$, $B$, $1/\epsilon$, and $\ln(1/\delta)$, it holds that for every
$\epsilon>0$ and $\delta\in(0,1)$, if we draw $m$ i.i.d.\ samples $c^1,\dots,c^m\sim \mathcal D$ ,
then with probability at least $1-\delta$ every empirical risk minimizer
\[
\widehat w \in \arg\min_{w\in  W} \; R_S(w),
\qquad
R_S(w):=\frac1m\sum_{i=1}^m \eta(w;c^i),
\]
satisfies
\[
\mathbb E_{c\sim\mathcal D}[\eta(\widehat w;c)]
\;\le\;
\inf_{w\in W}\mathbb E_{c\sim\mathcal D}[\eta(w;c)]
\;+\;\epsilon.
\]
\end{theorem}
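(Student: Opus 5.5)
The plan is a uniform convergence argument via the same statistical learning tool used for laminar set cover, Lemma \ref{lem:statisticallearningtheoryres}. If every function in $\mathcal F$ takes values in $[0,H]$ and $\mathcal F$ has pseudo-dimension $d_{\mathcal F}$, then with $m \ge c_0 (H/\epsilon')^2 (d_{\mathcal F} + \ln(1/\delta))$ samples the empirical and true risks are within $\epsilon' = \epsilon/2$ uniformly over $W$, with probability at least $1-\delta$. The standard ERM chain then gives the claim. Let $w^\circ$ be an $\epsilon/2$-near minimizer of the true risk; the infimum need not be attained, but we can shave an arbitrarily small slack. Then $\E[\eta(\widehat w;c)] \le R_S(\widehat w)+\epsilon/2 \le R_S(w^\circ)+\epsilon/2 \le \E[\eta(w^\circ;c)]+\epsilon$. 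So two facts remain: a polynomial range bound $H$ and a polynomial bound on $d_{\mathcal F}$. I read the parameter $B$ in the statement as a bound on the cost values, say $c_t \in [0,B]^M$ after truncating infinite costs, or more loosely as a bound on $\|c_t\|_{\mathrm{sp}}$.

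\textbf{Range bound.} Since $w_t$ is $1$-Lipschitz, $\|w_t\|_{\mathrm{sp}} \le D$. The function $B^{c_t}(w_t)$ is also $1$-Lipschitz, being a minimum of the $1$-Lipschitz functions $s\mapsto w_t(s')+d(s,s')+c_t(s')$, so its span is at most $D$ as well. Hence $\eta_t \le \|B^{c_t}(w_t)\|_{\mathrm{sp}}+\|w_{t-1}\|_{\mathrm{sp}} \le 2D$, and $\eta \le 2DT =: H$. This step uses the Lipschitz restriction in $W$ and does not even need $B$.

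\textbf{Pseudo-dimension.} I would parametrize $w\in W$ by the vector $(w_t(a))_{t,a}\in\reals^{nT}$. For a fixed instance $c$, $\eta(w;c)$ is built from the quantities
\[
\max_s\, \min_{s'} \bigl\{ w_t(s') + d(s,s')+c_t(s') \bigr\} - w_{t-1}(s) - \min_s \bigl(\cdots\bigr),
\]
that is, from maxima and minima of affine functions of the parameters. The constants $d$ and $c_t$ enter only through the argmin/argmax choices. So I would partition parameter space by the signs of all pairwise comparisons. Within each cell, every min and max picks a fixed affine piece, and $\eta(w;c)$ is a fixed affine function of $w$.

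For a fixed instance, the number of comparison hyperplanes is polynomial, $O(T n^4)$ (pairs of $(s,s')$ terms for each $t$, plus comparisons between the resulting Bellman values). Add the threshold comparison $\eta(w;c^i)\le r_i$ for each sample. Then for $k$ samples the parameter space $\reals^{nT}$ is cut by $N = k\cdot\mathrm{poly}(n,T)$ hyperplanes, which define at most $O(N)^{nT}$ cells. Each sign pattern on the $k$ threshold tests is realized only if some cell realizes it. Shattering therefore requires $2^k \le (k\,\mathrm{poly}(n,T))^{nT}$, giving $d_{\mathcal F} = O(nT\log(nT))$. This is the standard Warren/Goldberg--Jerrum style argument, and it is also the argument behind the $O(T\log T)$ bound from \cite{DinitzILMV21} used earlier in the paper.

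\textbf{Main obstacle.} I expect the counting of hyperplanes to be the delicate step. The nested min-inside-max in $B^{c_t}$, combined with the span, makes the piecewise-affine structure explicit only after fixing all argmins and argmaxes simultaneously. The key point is that the number of candidate affine pieces per term stays polynomial, at most $n^2$ per $(t,s)$. Plugging $H=2DT$ and $d_{\mathcal F}=O(nT\log(nT))$ into Lemma \ref{lem:statisticallearningtheoryres} gives
\[
m = O\bigl((DT/\epsilon)^2\,(nT\log(nT)+\ln(1/\delta))\bigr),
\]
which is polynomial as claimed.
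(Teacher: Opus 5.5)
Your argument is correct and shares the paper's architecture: the range bound $\eta\le 2DT$ from the Lipschitz restriction, uniform convergence via Lemma~\ref{lem:statisticallearningtheoryres} at accuracy $\epsilon/2$, and the ERM chain. Your observation that $B$ plays no role is consistent with the paper, whose sample bound depends on $D$.

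The difference is in how you bound the pseudo-dimension. The paper treats $\eta$ as a black box. A membership test for $\eta(w;c)>y$ runs in time $O(Tn^2)$ on $Tn$ real parameters, so Theorem~2.3 of Goldberg and Jerrum gives $O(n^3T^2)$. You instead count cells of a hyperplane arrangement in $\reals^{nT}$. This uses the facts that every test is affine and that each instance contributes only $O(Tn^4)$ comparison hyperplanes. It gives $O(nT\log(nT))$, so the sample bound is smaller by roughly a factor $n^2T/\log(nT)$, at the price of doing the counting by hand.

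That counting needs one repair. The threshold test $\eta(w;c^i)\le r_i$ is not a single global hyperplane. The argmin/argmax choices are made independently for each $t$, so $\eta(\cdot;c^i)$ can have $n^{\Theta(T)}$ affine pieces. Count face by face instead:
\begin{enumerate}
\item The comparison hyperplanes of all $k$ samples cut $\reals^{nT}$ into $O(kTn^4)^{nT}$ faces.
\item On each face, the $k$ threshold tests are genuine hyperplanes, adding at most $O(k)^{nT}$ sign patterns.
\item Shattering then requires $2^k\le (kTn)^{O(nT)}$, which yields your bound.
\end{enumerate}
Adding all candidate threshold hyperplanes globally would also work, but only gives $O(nT^2\log(nT))$, which is still polynomial.

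Finally, fix the ERM step. An $\epsilon/2$-near minimizer $w^\circ$ only yields $\inf+3\epsilon/2$. Either take a $\gamma$-near minimizer and let $\gamma\to0$, or note that the infimum is attained. It is attained because $\eta$ is invariant under adding a constant to any single $w_t$, so after normalizing, $W$ is compact and the risk is continuous.
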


\begin{proof}
Let $S=(c^1,\dots,c^m)$ be i.i.d.\ samples from $\mathcal D$, and let $\widehat w$ be any empirical risk minimizer over $ W$. By Lemma~\ref{lem:pdim} we have $\mathrm{Pdim}(\mathcal F)=O(n^3T^2)$.
Moreover, for $w \in W$, $\eta(w;c)$ is bounded as follows:
\begin{align*}
    \eta(w; c) &= \sum_{t=1}^T {\| {B^{c}(w_t) - w_{t-1}} \|}_{\mathrm{sp}}.\\
    &\leq \sum_{t=1}^T {\| B^{c}(w_t) \|}_{\mathrm{sp}} + {\| w_{t-1}\|}_{\mathrm{sp}} \\ 
    &\leq 2DT.
\end{align*}
In the last step, we used the definition of the Bellman operator and the fact that $w_{t-1}$ is 1-Lipschitz.
Applying Lemma~\ref{lem:statisticallearningtheoryres}  yields that for
\[
m \;\ge\; c_0\left(\frac{2DT}{\epsilon}\right)^2\Bigl(\mathrm{Pdim}(\mathcal F)+\ln(1/\delta)\Bigr),
\]
with probability at least $1-\delta$,
\[
\sup_{w\in\mathcal W}\Bigl| \mathbb E_{c\sim\mathcal D}[\eta(w;c)] - R_S(w)\Bigr|
\;\le\; \epsilon/2.
\]
In the event that this bound holds, let $w^\star\in\arg\min_{w\in\mathcal W}\mathbb E[\eta(w;c)]$.
Using the optimality of $\widehat w$ and the uniform deviation bound twice, we get
\begin{align*}
\mathbb E[\eta(\widehat w;c)]
&\le R_S(\widehat w) + \epsilon/2
\le R_S(w^\star) + \epsilon/2
\le \mathbb E[\eta(w^\star;c)] + \epsilon.
\end{align*}
\end{proof}

\begin{lemma}
\label{lem:pdim}
     $\mathrm{Pdim}(\mathcal F)=O(n^3T^2 )$.
\end{lemma}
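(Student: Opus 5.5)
We bound the pseudo-dimension by the standard ``piecewise-polynomial in the parameters'' argument (as in \citet{DinitzILMV21} and the data-driven algorithm design literature). The parameter vector is $w=(w_0,\dots,w_{T})\in\reals^{nT}$, with $w_T=0$. We fix $s$ instances $c^1,\dots,c^s$ and thresholds $r_1,\dots,r_s$. The goal is to count how many distinct sign patterns $\bigl(\mathbf 1[\eta(w;c^i)\le r_i]\bigr)_{i}$ can arise as $w$ ranges over $W$. Shattering requires $2^s$ patterns, so it suffices to show the number of patterns is at most $(\text{poly})^{\,nT}$ with exponent base polynomial in $n,T,s$. That gives $s=O(nT\log(\cdot))$, comfortably within $O(n^3T^2)$.

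First, for a fixed instance $c$, we will show that $\eta(w;c)$ is a piecewise-linear function of $w$. Each piece is cut out by a hyperplane arrangement with a controlled number of hyperplanes. Concretely, $(B^{c_t}(w_t))(a)=\min_b\{w_t(b)+d(a,b)+c_t(b)\}$. Once we fix, for every pair $b,b'$, the sign of
\[
w_t(b)+d(a,b)+c_t(b)-w_t(b')-d(a,b')-c_t(b'),
\]
the minimizer is determined and $(B^{c_t}(w_t))(a)$ becomes a linear function of $w$. This uses $O(n^3)$ hyperplanes per $t$ and instance: $n$ choices of $a$ times $n^2$ pairs $(b,b')$. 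Next, the span norm $\max_a v(a)-\min_a v(a)$ of $v=B^{c_t}(w_t)-w_{t-1}$ is determined by the signs of $v(a)-v(a')$, which adds $O(n^2)$ further hyperplanes per $t$ (linear within a cell). So within each cell of an arrangement of $O(n^3T)$ hyperplanes, $\eta(w;c)$ is linear in $w$. The final comparison $\eta(w;c^i)\le r_i$ then adds one more hyperplane per cell. We handle this by considering, for each of the at most $n^{O(1)T}$ linear pieces, the corresponding hyperplane; since the total number of distinct linear pieces is finite and each is given by a choice of argmins, this contributes at most $n^{O(T)}\cdot$ poly additional hyperplanes per instance.

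Second, we count the patterns. Over $s$ instances, the sign pattern is constant on each cell of an arrangement of $H=s\cdot \mathrm{poly}(n,T)\cdot n^{O(T)}$ hyperplanes in $\reals^{nT}$. By the standard bound, the number of cells is at most $(eH/(nT))^{nT}$, i.e.\ $\exp\bigl(O(nT\cdot(T\log n+\log s))\bigr)$. Shattering requires $2^s\le$ this number, hence $s=O(nT^2\log n + nT\log s)$, which gives $s=O(nT^2\log n)\subseteq O(n^3T^2)$. If one prefers to avoid the $n^{O(T)}$ count of threshold hyperplanes, there is an alternative. One can observe that the $T$ span-norm terms and the $\le r_i$ test are together defined by $O(n^3T)$ polynomial (degree-1) sign conditions plus one linear threshold per sign region. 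One can then apply the Warren/Goldberg--Jerrum bound for functions computed by an algorithm with $O(n^3T)$ linear comparisons on $nT$ real parameters. This gives a pseudo-dimension of $O(nT\cdot\log(n^3T))$, or crudely $O(n^3T^2)$ by multiplying the parameter count $nT$ by the comparison count $n^2T$.

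The main obstacle is the bookkeeping in the first step: making sure the test $\eta(w;c)\le r$ is expressed via a bounded number of linear sign conditions on $w$, independent of $w$. The cleanest route is the Goldberg--Jerrum theorem. We describe an algorithm that, given $w$, evaluates $\mathbf 1[\eta(w;c)\le r]$ using only $O(n^3T)$ comparisons of linear functions of $w$ (the argmin computations, the max/min for each span, and one final comparison). The Lipschitz restriction defining $W$ only shrinks the class, so it does not affect the upper bound.
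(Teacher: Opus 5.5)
Your main argument is correct and in fact proves a stronger bound than the lemma, but by a different route from the paper. The paper uses Theorem~2.3 of Goldberg--Jerrum as a black box. The test $\eta(w;c)>y$ is decided by the straightforward algorithm in time $\tau=O(Tn^2)$: for each $t$ and each of the $n$ points $a$, an $O(n)$ scan computes the minimum in $B^{c_t}(w_t)(a)$, and the spans and the final sum are cheaper. A concept is described by $k=Tn$ reals, and the $O(k\tau)$ bound gives $O(n^3T^2)$ directly. You instead exploit that $\eta(\cdot;c)$ is piecewise affine in $w$ and count faces of a hyperplane arrangement in $\reals^{nT}$. The only exponentially large quantity, the $n^{O(T)}$ candidate threshold hyperplanes per instance, enters only through a logarithm. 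You therefore get $O(nT^2\log n)$, smaller by a factor of about $n^2/\log n$, and this would carry over to a better sample complexity in the learnability theorem. The paper's route buys brevity and needs no structural analysis of $\eta$.

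Some bookkeeping should be tightened, though none of it threatens your main argument.
\begin{itemize}
\item The span comparisons $v_t(a)\le v_t(a')$ are affine only inside an argmin cell. As global hyperplanes there are $O(n^4)$ of them per $t$ (choose $a,a'$ and their minimizers), not $O(n^2)$.
\item Since the labels use $\le$, you should count all faces of the arrangement rather than open cells. This changes the bound only by constants.
\item Your stated comparison counts are inconsistent: $O(n^3T)$ in one place, $n^2T$ in another. Feeding $O(n^3T)$ comparisons into the $O(k\tau)$ theorem gives only $O(n^4T^2)$. Recovering the lemma this way needs the $O(n^2T)$ sequential-scan count, exactly as the paper does.
\item The claimed $O(nT\log(n^3T))$ does not follow from the Goldberg--Jerrum formula bound as you state it. The final threshold predicate differs from region to region, and there are $n^{O(T)}$ distinct ones.
\end{itemize}
That sharper bound is nevertheless true via a two-level count. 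The $O(sn^4T)$ piece-defining hyperplanes of all $s$ instances cut $\reals^{nT}$ into $(sn)^{O(nT)}$ faces. Within each face only the $s$ affine thresholds matter, so $2^s\le (sn)^{O(nT)}$, and hence $s=O(nT\log(nT))$.
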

\begin{proof}
    We equivalently seek to bound the VC-dimension of the class $\mathcal{H} := \{ (c, y) \mapsto \mathbf{1}_{\eta(w; c) > y} \mid w \in W\}$. To this end, we rely on Theorem 2.3 in \cite{goldbergJerrum93}. As a prerequisite, we consider a membership testing algorithm which, on input $(w, c, y)$, outputs $1$ if $\eta(w; c) > y$ and $0$ otherwise. Recall that
    \begin{align*}
    \eta(w; c) &= \sum_{t=1}^T {\| {B^{c}(w_t) - w_{t-1}} \|}_{\mathrm{sp}}.
    \end{align*}
    The algorithm performs the computations in the straightforward way, and runs in time $\tau = O(T \cdot n^2)$. 
    Also, a concept in the class $\mathcal{H}$ can be represented by $T \cdot n$ real numbers. Therefore, by Theorem 2.3 in  \cite{goldbergJerrum93},
    \[
    \mathrm{Pdim}(\mathcal F)=\mathrm{VCdim}(\mathcal H)=O((Tn)\cdot(Tn^2))
=O(n^3T^2). \qedhere
    \]
\end{proof}

\subsection{Robustness}

We can make our learning-augmented algorithm robust by combining it with a classical online algorithm for MTS, by making use of the following theorem.

\begin{theorem}[\citet{AntoniadisCEPS23online}, generalization of \citet{kserverfiat}] 
\label{deterministicCombination}
Given algorithms $A_0, \dots, A_{m-1}$ for an MTS problem, there exists an algorithm which achieves cost at most 
\[
 O(m \cdot \min_{i=0}^{m-1} {\cost_{I}(A_i)})
 \]
 for any input sequence $I$. Moreover, if $A_0, \dots, A_{m-1}$ are deterministic, then the resulting algorithm is also deterministic.
\end{theorem}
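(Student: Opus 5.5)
Since the statement is cited from \citet{AntoniadisCEPS23online} (generalizing \citet{kserverfiat}), I would prove it with the classical doubling-and-switching scheme. Each $A_i$ is simulated on the whole input $I$, so its running cost $\cost_{\le t}(A_i)$ is known online. The combined algorithm works in phases $j=0,1,2,\dots$ and cycles through the algorithms round-robin, $i=j \bmod m$, with budget $B_j=2^{\lfloor j/m\rfloor}$ (equivalently a geometric factor $\gamma=1+1/m$ per phase, which gives the sharper constant; I would first try base $2$). In phase $j$ the algorithm follows $A_i$, meaning it occupies the same state as $A_i$ at every step. The phase ends once $A_i$'s cumulative cost since time $0$ exceeds $B_j$. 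At that moment the algorithm moves to the current state of $A_{i+1}$. If the switching is deterministic, the whole algorithm is deterministic whenever the $A_i$ are.

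\textbf{Bounding the cost of one phase.} While the algorithm follows $A_i$, its service and movement costs equal those $A_i$ incurs during the phase, which are at most $B_j$ plus the last step. The switch costs $d(s_{A_i}(t),s_{A_{i+1}}(t))$. To bound this, I use the triangle inequality through $s_0$: both algorithms started at $s_0$, so $d(s_0,s_{A_k}(t))\le \cost_{\le t}(A_k)$. At the switch time, $A_{i+1}$ was last abandoned while still under its own earlier budget, so its cost is at most about $B_j$. Thus the switching cost is at most $O(B_j)$ if $A_{i+1}$'s cost is still within budget. If it is already over budget, we skip it immediately to the next algorithm, and the skipped phase contributes only the vacuous cost.

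\textbf{Summing and the main obstacle.} Let $i^*$ attain $\mathrm{OPT}_m:=\min_i \cost_I(A_i)$. The process never ends a phase of $A_{i^*}$ whose budget is $B_j\ge \mathrm{OPT}_m$. So the last phase index $J$ satisfies $B_J\le 2\,\mathrm{OPT}_m$ after the first such round, plus at most $m$ further phases in the final round. The total cost is $\sum_{j\le J}O(B_j)$. With budgets constant within each round of $m$ phases and doubling between rounds, this sum is $O(m\cdot B_J)=O(m\cdot \mathrm{OPT}_m)$.

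The delicate step is the last step of a phase. Exceeding the budget happens during a single request, so the overshoot of that request, as well as possible $\infty$ costs, must be charged carefully. I would handle this as follows. At the moment $A_i$ would cross its budget, the algorithm evaluates the step before paying it. It switches to $A_{i+1}$, and that step's service cost is paid by whichever algorithm is followed afterward. Then each phase's charge is bounded by $B_j$ plus switching, with switching at most $\cost(A_i)+\cost(A_{i+1})\le 2B_j$ via $s_0$. This uses that the served state lies at distance at most its accumulated cost from $s_0$.
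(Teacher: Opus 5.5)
The paper gives no proof of this statement. It cites Antoniadis et al., whose argument (the deterministic scheme of Fiat et al.) is your round-robin doubling: look-ahead switching so the followed algorithm never exceeds its budget, the bound $d(s_A(t),s_B(t'))\le \cost_{\le t}(A)+\cost_{\le t'}(B)$ via $s_0$, skipping of over-budget algorithms, and a geometric sum. Those ingredients are sound.

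One justification is backwards, though. $A_{i+1}$ was last abandoned because it \emph{exceeded} its budget, and it has kept accruing cost in simulation since, so its history bounds nothing. The only valid bound is the entry test of the phase you actually enter. If, after skipping, you start following $A_k$ in phase $j'$ at time $t$, then $\cost_{\le t}(A_k)\le B_{j'}$ and $\cost_{\le t-1}(A_i)\le B_j$. So the switch costs at most $B_j+B_{j'}\le 2B_{j'}$, and it should be charged to phase $j'$ (which then costs at most $3B_{j'}$), not to phase $j$.

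The genuine gap is the scale. With the fixed starting budget $B_0=1$, your final step ``$B_J\le 2\,\mathrm{OPT}_m$'' fails whenever $\mathrm{OPT}_m<1/2$, because $B_J\ge 1$ always. What your argument proves is $O\bigl(m(\mathrm{OPT}_m+1)\bigr)$, not the purely multiplicative bound in the statement. This is a failure of the algorithm, not slack in the analysis. If $A_0$ makes a single move of cost $1/2$ and $A_1$ pays $\epsilon$ (or $0$), your algorithm follows $A_0$ throughout and pays $1/2$, an unbounded ratio.

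You need a scale-free start, for instance:
\begin{itemize}
\item While some $A_i$ has cumulative cost $0$ (including the current step), that $A_i$ has never left $s_0$ and $c_\tau(s_0)=0$ for every step so far. So you can sit at $s_0$ for free.
\item At the first step $t_0$ at which every $A_i$ has positive cumulative cost, set $\mu:=\min_i\cost_{\le t_0}(A_i)\le \mathrm{OPT}_m$. Then run your scheme from $s_0$ with budgets $\mu\,2^{\lfloor j/m\rfloor}$; the initial move out of $s_0$ is bounded by the same triangle argument.
\item If no such $t_0$ exists, then $\mathrm{OPT}_m=0$ and you pay $0$.
\item Otherwise, the first round $r^*$ with $\mu 2^{r^*}\ge \mathrm{OPT}_m$ satisfies $\mu 2^{r^*}\le 2\,\mathrm{OPT}_m$, and your geometric sum yields $O(m\cdot \mathrm{OPT}_m)$.
\end{itemize}
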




\end{document}